\title{Reachability of turn sequences}
\author{William S. Evans}{University of British Columbia, Vancouver, B.C., Canada}{will@cs.ubc.ca}{}{}
\author{Noushin Saeedi}{University of British Columbia, Vancouver, B.C., Canada}{noushins@cs.ubc.ca}{}{}
\author{Chan-Su Shin}{Hankuk University of Foreign Studies, Yongin, Korea}{cssin@hufs.ac.kr}{}{}
\author{Hyun Tark}{Hankuk University of Foreign Studies, Yongin, Korea}{03030303gov@gmail.com}{}{}
\authorrunning{W.~Evans, N. Saeedi, C.-S. Shin and H. Tark}
\keywords{turn sequence, chain drawing, reachability, rotation number}
\newtheorem{observation}{Observation}
\newcommand{\Z}{\ensuremath{{\mathbb Z}}}
\newcommand{\Zo}{\ensuremath{{{\mathbb Z}_{\geq 0}}}}
\newcommand{\lt}{\texttt{L}}
\newcommand{\rt}{\texttt{R}}
\newcommand{\sgn}{\mathrm{sgn}}
\newcommand{\floor}[1]{\lfloor #1 \rfloor}
\newcommand{\ceil}[1]{\lceil #1 \rceil}
\begin{document}

\maketitle

\begin{abstract}
A turn sequence of left and right turns is realized as a simple
rectilinear chain of integral segments whose turns at its bends are
the same as the turn sequence. The chain starts from the origin and
ends at some point which we call a \emph{reachable point} of the turn sequence. 
We investigate the combinatorial and geometric properties of
the set of reachable points of a given turn sequence such as the
shape, connectedness, and sufficient and necessary conditions on
the reachability to the four signed axes.
We also prove the upper and lower bounds on the maximum distance
from the origin to the closest reachable point on signed axes for a turn sequence. 
The bounds are expressed in terms of the difference between 
the number of left and right turns in the sequence as well as,
in certain cases, the length of the maximal monotone prefix or suffix 
of the turn sequence.
The bounds are exactly matched or tight within additive constants 
for some signed axes.
\end{abstract}

\section{Introduction}

We consider the problem of characterizing and approximating,
for a given \emph{turn sequence}, 
the set of points reachable by the sequence.
A turn sequence consists of left turns $\lt$ and right turns $\rt$.
For a turn sequence $\sigma = \sigma_1\sigma_2\cdots \sigma_n$
where $\sigma_i \in \{\lt, \rt\}$, a rectilinear chain
\emph{realizes} the sequence $\sigma$ 
if the chain is simple (i.e., has no self-intersection), consists of integral segments, starts from the origin $o$,
makes the sequence $\sigma$ of left turns ($\lt$) and right turns ($\rt$) in order,
and ends at a point $p \in \Z^2$.
We assume that the first segment of such chains is horizontal and 
heads to the east, i.e., 
the first turn occurs at $(t, 0)$ for some positive integer $t$.
The endpoint $p$ is said to be \emph{reachable} by $\sigma$.
Let $A(\sigma)$ be the set of points in $\Z^2$ reachable
by a turn sequence $\sigma$.

One may think of $A(\sigma)$ as those points reachable by a robot
following the sequence $\sigma$ of turn commands when the distance
the robot travels between turns is a positive but arbitrary integer.
Where can the robot end up? 
How close the robots can reach a point from the origin by obeying the turn sequence?

\paragraph*{Related work}
This type of question arose in the study of \emph{turtle graphics}~\cite{P80},
in which a turtle obeys a sequence of commands 
such as ``move forward 10 units'' and ``turn right 90 degrees''.
The trace made by the turtle creates a chain or polygon
on a display device.  The metaphor helped children understand 
basic geometric shapes.

The problem of constructing chains or polygons with
restricted angles
shows up in curvature-constrained motion
planning~\cite{AgarwalRaghavanTamaki95},
angle-restricted tours~\cite{Fekete92PhD,FeketeWoeginger97},
and restricted orientation
geometry~\cite{Rawlins87PhD}. 
Culberson and Rawlins~\cite{cr-socg85} and Hartley~\cite{h-dpgas-IPL89}
studied the problem of realizing a simple polygon by an angle sequence.
An angle sequence is a sequence of the angles at vertices 
(say, in the counterclockwise direction) 
along the boundary of a polygon.
They presented algorithms to construct the polygon 
from a given sequence of exterior angles whose sum is $2\pi$.

A rectilinear angle sequence has only two angles, 
$+90^\circ$ and $-90^\circ$, thus it is the same as the turn sequence
considered in this paper. 
A turn sequence can be realized by a rectilinear simple polygon if and only if
the number of left and right turns in the sequence differs 
by exactly four~\cite{cr-socg85,Sack84PhD}.
Sack~\cite{Sack84PhD} represented a rectilinear polygon 
as a \emph{label-sequence} of integer labels that are defined as 
the difference of the numbers of left and right turns 
at each edge from an arbitrary starting edge,
and also presented a drawing algorithm that realizes a given
label-sequence as a simple rectilinear polygon in linear time. 
Bae et al.~\cite{BaeOS-cgta19} showed tight worst-case bounds 
on the minimum and maximum area of the rectilinear polygon that 
realizes a given turn sequence.
Finding realizations that minimize the area or perimeter 
of the rectilinear polygon or the area of the bounding box of the polygon
is NP-hard,
however, the special case of monotone rectilinear polygons 
can be computed in polynomial time~\cite{efkssw-cgta22,thesisFleszar}. 

A popular variant takes a sequence of angles defined by all vertices
visible from each vertex as input~\cite{cw-cgta12,cgt-isaac13,dmw-cgta11}. 
The goal is to reconstruct a polygon from the information on 
angles and visibility.
Another variant reconstructs a rectilinear polygon from a set of 
points, i.e., coordinates of the vertices, instead of angles,
obtained by laser scanning devices~\cite{bds-tcs11}.

For our problem, determining the reachable region from an angle (turn) sequence,
there appears to be little that is known.
Culberson and Rawlins~\cite{cr-socg85} mention as future work
``spline'' problem: to draw a polygonal curve 
between two given points such that the turning angles of 
the curve form a given angle sequence. 
However, they do not suggest any approach to solving this problem.

\paragraph*{Our contribution}
We first show that the reachable set $A(\sigma)$ 
for a turn sequence $\sigma$ that contains a \emph{hook}, i.e., two
consecutive left turns or right turns, is a union of
at most four halfplanes, not containing the origin, whose bounding
lines are orthogonal to the four signed axes.
If $\sigma$ has no hooks, it is called a \emph{staircase} and has an
easily computed reachable set.
The proof is based on two crucial lemmas: 
the Stretching Lemma and the Axis Lemma
in Section~\ref{sec:cut_stretching}.
The particular halfplanes are determined 
by the orientation of the hooks (Theorem~\ref{thm:reachAxis} in
Section~\ref{sec:reachable_set}).
Using this characterization, 
we prove that both $A(\sigma)$ and its complement,
i.e., the unreachable set $\Z^2 \setminus A(\sigma)$, are connected
(Theorem~\ref{thm:connected} in Section~\ref{sec:reachable_set}).
The boundary lines of the halfplanes forming $A(\sigma)$ 
for any turn sequence $\sigma$ with hooks 
are determined by
the closest reachable points from the origin $o$ on the signed axes.
Thus, it is important to know these points
in order to calculate $A(\sigma)$ accurately.
We give upper and lower bounds on the distance from $o$ 
to the closest points on each signed axis in
Section~\ref{sec:closest}.
The upper bounds rely on an algorithm that provides an approximation
to the reachable region (Theorem~\ref{thm:bounding-x-minus} 
in Section~\ref{subsec:upper_bound_x_minus}
and 
Theorem~\ref{thm:x-plus-conclusion} 
in Section~\ref{subsec:upper_bound_x_plus}).
The lower bounds are derived from the \emph{rotation number}~\cite{gs-tams90} of
polygons, the sum of angle changes between two adjacent edges
(Theorem~\ref{thm:lowerboundconclusion} in Section~\ref{subsec:lower_bound_x_plus}
and Theorem~\ref{thm:lowerboundconclusion2} in Section~\ref{subsubsec:lower_bound_x_minus}).
These bounds are expressed in terms of
the difference between the number of left and right turns in $\sigma$ 
and, for some signed axes and turn sequences with certain patterns, 
the length of the maximal prefix or suffix of the turn sequence that is
\emph{monotone} (i.e., does not contain a certain type of hook) or 
\emph{staircase} (i.e., does not contain a hook).
The bounds for some signed axis are exactly matched or 
tight to within an additive constant.

\section{Cutting and Stretching}
\label{sec:cut_stretching}

Let $p$ be the endpoint of a chain $C$ that realizes a turn sequence
$\sigma$.
We can slightly modify $C$ to show that other points are in the
reachable region $A(\sigma)$.
Our mechanism for doing this relies on stretching 
(or lengthening) a subset of the segments in the chain $C$
that are selected by a cut.
A \emph{cut} of a chain $C$ is an $x$-monotone (or $y$-monotone)
curve extending from $-\infty$ to $+\infty$ in the $x$
(respectively, $y$) dimension that
(1) separates $o$ and $p$ and (2) intersects 
only vertical (respectively, horizontal) segments of $C$.
We \emph{stretch} $C$ using the cut by lengthening every segment in $C$
crossed by the cut by the same (positive) integral amount.
As long as some segment in $C$ crosses the cut, this creates a new
chain $C'$ starting at $o$ that has the same turn sequence as $C$,
does not self-intersect (like $C$), and reaches points in $A(\sigma)$
other than $p$.

\begin{figure}[th]
\centering
    \includegraphics[width=\textwidth]{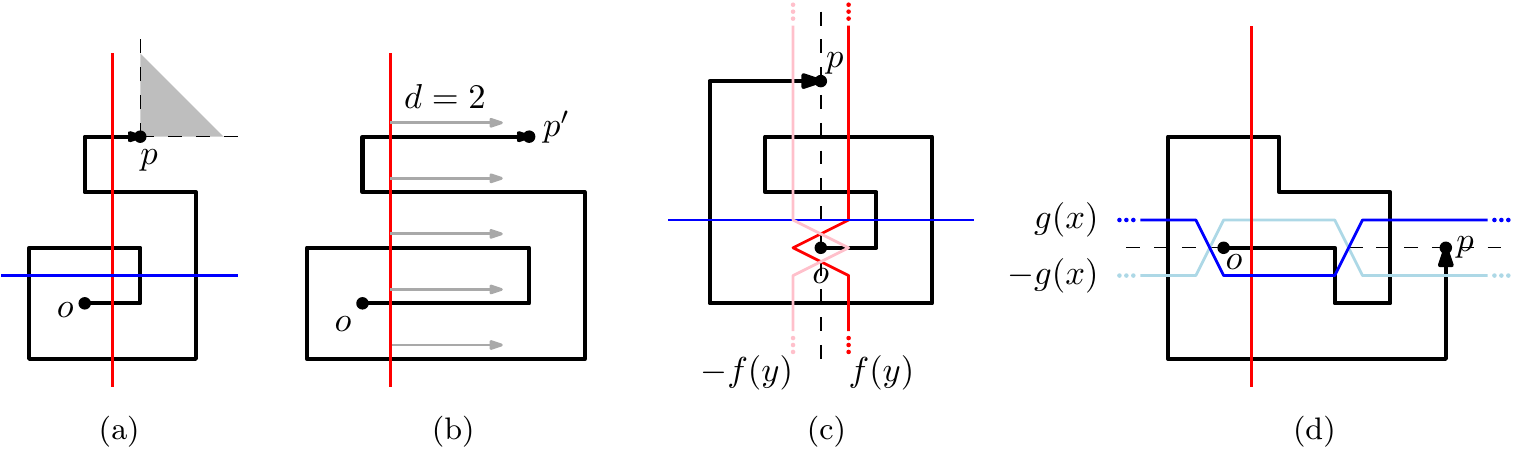}
\caption{
	(a) Vertical cut (red) and horizontal cut (blue). 
	(b) Stretching the chain by $2$ using five horizontal segments
	intersected by a vertical cut.
	(c)-(d) Cuts for the reachable point $p$ on $y$-axis and $x$-axis.}
\label{fig:stretch}
\end{figure}
A vertical cut (the vertical line at $x=1/2$) 
shown in Figure~\ref{fig:stretch}(a)-(b)
intersects five horizontal segments of $C$.
Stretching $C$ (lengthening the five horizontal segments)
by $d >0$ units using this cut creates a chain $C'$ 
with endpoint $p' = p + (d,0)$ that also belongs to $A(\sigma)$.
As a result, we observe that all the points on the horizontal ray
$p + (d,0)$ with integer $d > 0$ are reachable by $\sigma$.
Similarly, using the horizontal cut $y=1/2$ in Figure~\ref{fig:stretch}(a), all the points on the vertical ray $p + (0,d)$ are also
reachable.

We can generalize this stretching procedure to obtain 
the following lemma.
We need some notation for it:
For a non-zero value $a$, $\sgn(a)$ represents its sign, $+1$ or $-1$.
Let $V^+(p)$ and $H^+(p)$ be, respectively, the halfplanes of the vertical and horizontal lines through $p$ that do not contain $o$.
Precisely, if $p = (a, b)$ then
$V^+(p) := \{ (a+\sgn(a)\cdot i,j) \mid i \in \Zo, j \in \Z \}$, and 
$H^+(p) := \{ (i, b+\sgn(b)\cdot j) \mid i\in \Z, j \in \Zo \}$.
Let $Q^+(p) := V^+(p) \cap H^+(p)$ be the quadrant at $p$ that is diagonally
opposite to the quadrant containing $o$, see the shaded region in Figure~\ref{fig:stretch}(a).

\begin{lemma}[Stretching Lemma]
\label{lem:stretch}
	For a turn sequence $\sigma$ with at least one turn,
	\begin{itemize}
	\item[]{\rm (1)} 
		if $p = (a,b) \in A(\sigma)$ for $a, b\neq 0$, 
		then $Q^+(p) \subseteq A(\sigma)$,
	\item[]{\rm (2)}
		if $p = (0,b) \in A(\sigma)$ for $b\neq 0$, 
		then $H^+(p) \subseteq A(\sigma)$, and
	\item[]{\rm (3)}
		if $p = (a,0) \in A(\sigma)$ for $a\neq 0$, 
		then $V^+(p) \subseteq A(\sigma)$.
	\end{itemize}
\end{lemma}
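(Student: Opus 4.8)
The plan is to reduce all three statements to the single stretching mechanism already described, applied through two families of cuts: the straight half-integer cut and a bent ``jog'' cut. Throughout I rely on the principle that a valid cut separating $o$ and $p$ shifts the endpoint by a fixed vector of length $d$ along the cut's transverse axis, and that the \emph{sign} of this shift is determined by which side of the cut contains $p$: summing the signed contributions of the crossed parallel segments equals the net number of times the chain crosses the cut from the $o$-side to the $p$-side, which is $+1$ for a separating cut. Hence, whenever a separating cut of a given type exists, orienting it one way or the other lets me push $p$ in either of the two transverse directions. Part (1) is the generic case handled by two straight cuts, while parts (2) and (3) are the degenerate axis cases, which I reduce to part (1) after one extra jog cut moves $p$ off the axis.

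For part (1), with $p=(a,b)$ and $a,b\neq0$, I first take the vertical cut along the line $x=\sgn(a)/2$. Since this abscissa is a half-integer it meets no bend (all at integer coordinates) and no vertical segment (all at integer abscissa), so it crosses only horizontal segments; and since $|a|\ge 1$ it separates $o$ from $p$. Stretching by $i$ yields a chain ending at $(a+\sgn(a)\,i,\,b)$, the sign being $+\sgn(a)$ because $p$ lies on the far side of the line from $o$. On each such chain I then apply the horizontal cut $y=\sgn(b)/2$; this stays a valid cut because lengthening horizontal segments changes only $x$-coordinates, so every vertex keeps its ordinate, the line still avoids all bends, still crosses only vertical segments, and still separates $o$ from the (unchanged-ordinate) endpoint. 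Stretching by $j$ then reaches $(a+\sgn(a)\,i,\ b+\sgn(b)\,j)$ for all $i,j\in\Zo$, which is exactly $Q^+(p)$.

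For part (2), with $p=(0,b)$, no line perpendicular to the $y$-axis separates $o$ and $p$, so I replace the straight vertical cut by a \emph{jog}: a $y$-monotone curve equal to $x=\tfrac12$ below a height $h$ and to $x=-\tfrac12$ above it (or its mirror image), switching sides of the $y$-axis inside a thin horizontal strip around $y=h$. Choosing $h=\sgn(b)/2$ makes the jog cross the $y$-axis at $(0,\sgn(b)/2)$, which is \emph{not} on the chain: the chain leaves $o$ horizontally and is simple, so no vertical segment can occupy the $y$-axis immediately next to $o$ without passing through or touching $o$. Thus the jog meets only horizontal segments and separates $o$ from $p$, and its two orientations push $p$ to $(1,b)$ and to $(-1,b)$. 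Applying part (1) to each of these off-axis points fills the parts $i\ge 1$ and $i\le -1$ of $H^+(p)$, while the straight horizontal cut $y=\sgn(b)/2$ supplies the remaining column $i=0$; together they give $H^+(p)$. Part (3), with $p=(a,0)$, is symmetric, using an $x$-monotone jog; the only asymmetry is that the safe crossing abscissa cannot be chosen next to $o$, because the first, east-heading segment already lies on the positive $x$-axis. Instead I jog at a half-integer not covered by the chain---at $x=t+\tfrac12$ just past the first bend $(t,0)$ when $a>0$ (note $a\ge t+1$, since a simple chain cannot end on its own first segment), or at $x=-\tfrac12$ when $a<0$---reaching $(a,1)$ and $(a,-1)$, and then reduce to part (1) together with the straight vertical cut for the row $j=0$.

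The main obstacle is precisely the existence and validity of the jog cut in parts (2) and (3): I must exhibit a half-integer crossing point of the relevant axis that is genuinely off the chain, so that the $y$-monotone (respectively $x$-monotone) curve can switch sides while meeting only segments of the permitted orientation. This is where simplicity of the chain and the ``first segment heads east'' convention do the real work, and it is also what breaks the symmetry between the two axes in part (3). The secondary point to pin down is the sign bookkeeping---that the sum of the signed displacements of the crossed segments equals the net $o$-to-$p$ crossing number---so that each orientation of a cut moves $p$ in the intended direction; this follows from the $y$-monotonicity of the cut but must be stated carefully to justify reaching both $(1,b)$ and $(-1,b)$ (respectively $(a,1)$ and $(a,-1)$).
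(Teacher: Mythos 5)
Your proof is correct and takes essentially the same approach as the paper: part (1) via the two straight half-integer cuts, and parts (2)--(3) via bent $y$-monotone/$x$-monotone cuts whose axis crossings are forced off the chain by simplicity and the eastward first segment, exactly as the paper's cuts $f(y)$ and $g(x)$ do (your $g$-analogue jogs once at $x=-1/2$ or $x=t+1/2$ depending on $\sgn(a)$, where the paper's single cut dips under the first segment and crosses at both places). Your repackaging of the axis cases---jog once to move $p$ off the axis, then invoke part (1), with the straight cut supplying the remaining row or column---is only a cosmetic reorganization of the paper's direct stretching argument.
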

\begin{proof}
	Let $C$ be a chain with turn sequence $\sigma$ that reaches $p=(a,b)$.
	If $a\neq 0$ and $b \neq 0$, we can reach any point in $Q^+(p)$, i.e.,
	$p+ (\sgn(a) \cdot i, \sgn(b) \cdot j)$ for $i,j \in \Zo$, 
	by stretching $C$ by $i$ units using the vertical cut $x = \sgn(a)/2$ and 
	by $j$ using the horizontal cut $y = \sgn(b)/2$.
	See Figure~\ref{fig:stretch}(a).
	
	If $p$ is on the $y$-axis, i.e., $p=(0,b)$ for $b\neq 0$,
	as in Figure~\ref{fig:stretch}(c), 
	we can reach any point in $H^+(p)$, i.e., 
	$p+ (i, \sgn(b) \cdot j)$ for $i \in \Z$ and $j \in \Zo$,
	by stretching $C$ by $j$ using
        the horizontal cut $y = \sgn(b)/2$,
	and by $|i|$ using
	one of two $y$-monotone cuts $f(y)$ or $-f(y)$
	depending on if the sign of $i$ is negative or positive.
	The cut $f(y)$ is $f(y)=1/2$ except for the domain 
	$y \in [-1/2,1/2]$ where it is $f(y) = 2|y| - 1/2$.
        Since the origin is adjacent to a horizontal segment (to the east),
	$f(y)$ and $-f(y)$ intersect only horizontal segments and separate $p$ and $o$.


	If $p$ is on the $x$-axis as in Figure~\ref{fig:stretch}(d),
	we can reach any point in $V^+(p)$, i.e., $p+ (\sgn(a) \cdot i, j)$
	for $i \in \Zo$ and $j \in \Z$,
	by stretching $C$ by $i$ using the vertical cut $x=\sgn(a)/2$,
	and by $|j|$ using
	one of two $x$-monotone cuts $g(x)$ or $-g(x)$
	depending on if the sign of $j$ is negative or positive.
	Let $t$ be the $x$-coordinate of the first bend in $C$, which
        exists since $|\sigma|>0$.
	The cut $g(x)$ is 
	$g(x) = -1/2 - 2x$ for $x \in [-1/2,0]$, 
	$g(x) = -1/2$ for $x \in [0,t]$, 
	$g(x) = -1/2 + 2(x-t)$ 	for $x \in [t,t+1/2]$,
	and $g(x) = 1/2$ otherwise.
	Since the origin has no adjacent horizontal segment to its west and
	the point $(t,0)$ has no adjacent horizontal segment to its east, 
	both $g(x)$ and $-g(x)$ intersect only vertical segments and
        separate $p$ and $o$.
	This concludes the proof.
\end{proof}

A \emph{staircase} is a turn sequence of $n$ alternating 
left and right turns.
One can easily observe
using the Stretching Lemma (Lemma~\ref{lem:stretch})
that $A(\sigma)$ for a staircase $\sigma$ is
identical to the quadrant $Q^+(p)$ where $p$ is the point reached by
the realization of $\sigma$ with unit length segments.
%
For a non-staircase sequence $\sigma$, we show that if $(a, b)\in
A(\sigma)$ then $(a, 0)\in A(\sigma)$ or $(0, b)\in A(\sigma)$.

\begin{lemma}[Axis Lemma]
\label{lem:unreachCorner}
For any turn sequence $\sigma$, except a staircase,
if $(a,b)$ is reachable by $\sigma$ then at least one of 
$(0, b)$ and $(a, 0)$ is reachable by $\sigma$.
\end{lemma}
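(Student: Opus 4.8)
The plan is to reduce the statement to a one–coordinate balancing problem and then to realize that balance purely by \emph{stretching}, which is the one operation we already know preserves simplicity. First I observe that, once the first segment is fixed heading east, the \emph{direction} of every segment of any realization is forced by $\sigma$: writing $s_i$ for the signed turn count after the $i$-th turn ($+1$ for $\lt$, $-1$ for $\rt$, $s_0=0$), segment $i$ points east, north, west, or south according to $s_i \bmod 4 \in \{0,1,2,3\}$. Only the positive integer lengths are free. Hence reaching $(0,b)$ means re-choosing the horizontal lengths so their signed sum is $0$ while leaving the vertical lengths (hence $b$) untouched, and symmetrically for $(a,0)$.

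Next I would extract the consequence of being a non-staircase. A non-staircase contains a \emph{hook}, i.e.\ two equal consecutive turns, so for some $i$ the three values $s_{i-1},s_i,s_{i+1}$ are three consecutive integers. Among any three consecutive integers the two of equal parity are congruent to $0$ and $2$ (if even) or to $1$ and $3$ (if odd) modulo $4$; thus the corresponding segments exhibit \emph{both} east and west directions, \emph{or} both north and south directions. So every realization $C$ of $\sigma$ — in particular the one reaching $(a,b)$ — has horizontal segments of both orientations, or vertical segments of both orientations. In the first case I aim for $(0,b)$, in the second for $(a,0)$; this dichotomy is exactly what makes the ``or'' in the statement unavoidable.

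Treat the horizontal case (the vertical one is symmetric), and assume $a\neq 0$ (else $(a,b)=(0,b)$ and we are done). The idea is to drive the integer $x$-coordinate to $0$ one unit at a time using the cut mechanism of the Stretching Lemma. A valid vertical ($y$-monotone) cut crossing only horizontal segments and separating $o$ from $p$ shifts $p$, upon stretching by $d$, by $d$ times the net orientation of the crossed segments, and leaves the $y$-coordinate fixed; crucially, stretching never introduces a self-intersection. If $a>0$ I want a cut whose crossed segments net \emph{westward}; I would obtain one by threading a $y$-monotone curve through a single westward segment (which exists by the dichotomy), escaping west of $o$ below and following that segment's west continuation above, so that $o$ lies east and $p$ west of the cut. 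Stretching by $a$ then lands $p$ at $(0,b)$. If $a<0$ the same argument with an eastward segment drives $x$ up to $0$.

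The hard part is precisely the existence of such a cut for an arbitrary simple chain: a straight vertical line typically crosses segments of both orientations and nets to the wrong value, so the cut must weave \emph{around} the vertical segments and around $o$ while crossing exactly the horizontal segment(s) I want. I would settle this by a routing argument in the spirit of the $f(y)$ and $g(x)$ cuts already used in the Stretching Lemma: choose the westward (resp.\ eastward) segment whose relevant endpoint is extreme, and route the monotone curve through the region the chain leaves empty on that side, using simplicity and finiteness of $C$ to guarantee the detours avoid all vertical segments. Once one such unit move is justified, iterating it (the exploited segment's direction is fixed and its length only grows) brings $x$ to $0$ and completes the proof.
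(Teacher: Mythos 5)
Your setup is fine — the directions of all segments are indeed forced by $\sigma$, and a hook does force an antiparallel pair (both east and west horizontal segments, or both north and south vertical segments). But the core of your argument, the existence of the weaving cut, is false as stated, and the counterexample shows the dichotomy itself points you at the wrong axis. Take $\sigma = \lt\rt\rt\rt$ realized by the chain $(0,0),(1,0),(1,2),(4,2),(4,1),(3,1)$, so $p=(3,1)$ with $a=3>0$, $b=1>0$. This chain has a westward segment (the last one), so your recipe says: thread a $y$-monotone cut through it with $o$ on the east side and $p$ on the west side, then stretch by $a$ to reach $(0,b)=(0,1)$. No such cut can exist: any $y$-monotone curve with $f(0)<0$ (putting $o$ east) and $f(1)>3$ (putting $p$ west) must, by continuity, satisfy $f(y^*)=1$ for some $y^*\in(0,1)$, i.e., it crosses the vertical segment $x=1$, $y\in[0,2]$, which is forbidden. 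More structurally, $(0,1)$ lies on the $+y$-axis, and $\sigma=\lt\rt\rt\rt$ has no up hook, so by the paper's Lemma~\ref{lem:hook_pt_on_axis} no point of the $+y$-axis is reachable at all. The sign of $b$ matters: a down hook (which is what makes east and west segments coexist here) only buys you the $-y$-axis, never $(0,b)$ with $b>0$. Your parity argument cannot see this distinction, so "exhibits both east and west directions, hence aim for $(0,b)$" is simply not a valid inference.

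This is not a repairable local slip: deciding \emph{which} of $(0,b)$ and $(a,0)$ is reachable from the given chain is the entire content of the Axis Lemma, and it is exactly the step you defer to "a routing argument in the spirit of the $f(y)$ and $g(x)$ cuts." The paper's proof consists precisely of constructing those two staircase cuts that hug the chain (one pressed against the horizontal segments, one against the vertical segments, related by reflection in $x=y$), and then proving the ordering claim — if $p$ is above $g(x)$ then $p$ is above $f(y)$, using the first hook of $\sigma$ and the fact that re-intersections of the two cuts occur in cancelling pairs near bend points — so that at least one cut is guaranteed to separate $o$ from $p$ with the correct orientation. So your proposal assumes the theorem's hard step rather than proving it. Two smaller remarks: stretching displaces $p$ according to which side of the cut contains $p$ (the net signed crossing is automatically $\pm 1$), not a segment-by-segment "net orientation," though these coincide; and iterating unit stretches is unnecessary, since a single stretch by $|a|$ or $|b|$ suffices.
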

\begin{proof}
	If $a=0$ or $b=0$ there is nothing to show.
	Let $C$ be a chain with turn sequence $\sigma$ that reaches $(a,b)$
	with non-zero $a$ and $b$.
	We show that at least one of $(0, b)$
	and $(a, 0)$ is reachable by $\sigma$.
	We do this by constructing two cuts of the chain $C$ 
	at least one of which will succeed in separating $(a,b)$ 
	from the origin $o$ and will allow us to stretch $C$ 
	to reach $(0, b)$ or $(a, 0)$.

	Let $\epsilon$ be a small positive real number.
	We define the $\epsilon$-extended upper (resp., lower) side 
	of a horizontal segment $(u,v)$ to be
        $(u+(-\epsilon,\epsilon),v+(\epsilon,\epsilon))$, i.e.,
	the translation of the ($2\epsilon$-lengthened) segment up
        (resp., down) by $\epsilon$.
	Similarly, we define the $\epsilon$-extended right (resp., left) side 
	of a vertical segment.

	We define two cuts $f(y)$ and $g(x)$ for $C$.
	To simplify the description, we will assume that $a >0$ and $b >0$.
	The other cases are similar. See Figure~\ref{fig:twoCuts}.
	$g(x)$ is the $x$-monotone cut such that 
	$g(x)=-\epsilon$ for $x \in (-\infty,0]$,
	and $g(x)$ is never closer than $\epsilon$ (in $x$ or $y$-coordinate) 
	to a horizontal segment of $C$.
	For $x > 0$, $g(x)$ is the maximum $y$-coordinate 
	subject to these constraints.
	It follows that we can view $g(x)$ as a staircase curve whose
        finite length horizontal segments are
	subsegments of the ($\epsilon$-extended lower sides of)
	horizontal segments of $C$.
	Each such horizontal segment of $g(x)$ ends at 
	$p_i +(\epsilon,-\epsilon)$ where $p_i$ is a bend point of $C$ 
	at the right end of a horizontal segment.

	Let $C'$ be the reflection of $C$ about the diagonal $x=y$, and
	$g'(x)$ be the $x$-monotone cut (defined above) of $C'$.
	$f(y)$ is the reflection of $g'(x)$ about the diagonal $x=y$.
	It follows that we can view $f(y)$ as a staircase curve whose finite
        length vertical segments are subsegments of
	the ($\epsilon$-extended left sides of) vertical segments of $C$.
	Each such vertical segment of $f(y)$ ends at $p_i +(-\epsilon,\epsilon)$
	where $p_i$ is a bend point of $C$ at the upper end of a vertical segment.
	In fact, both $g(x)$ and $f(y)$ are staircase curves monotone 
	to the $x$-axis and $y$-axis.

\begin{figure}
	\centering
	\includegraphics[scale=0.9]{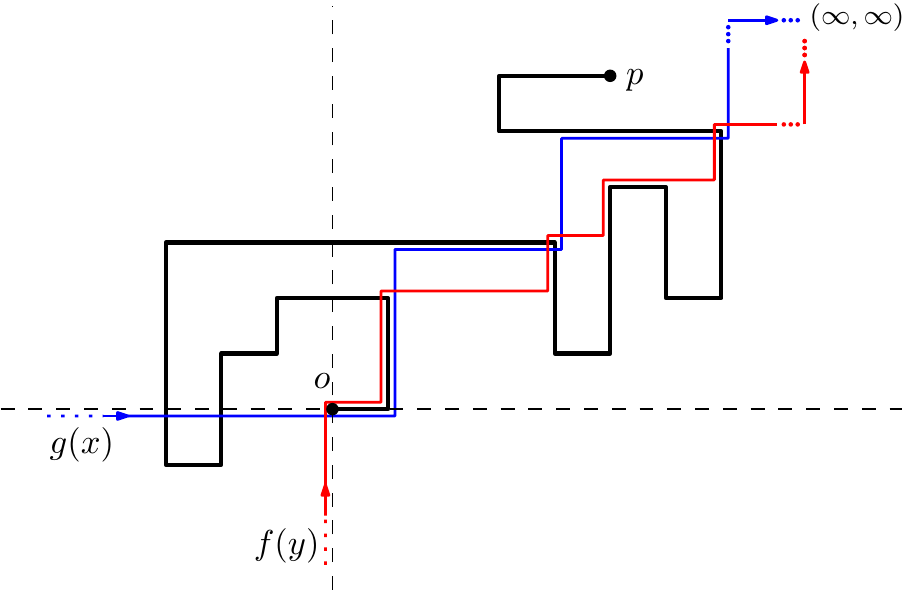}
	\caption{The two cuts $f(y)$ and $g(x)$ used in the proof of Lemma~\ref{lem:unreachCorner}.
	$f(y)$ separates $o$ and $p$, but $g(x)$ does not.}
	\label{fig:twoCuts}
\end{figure}

	If $g(x)$ separates $o$ and $p$, i.e., $p$ is below $g(x)$, 
	then we can stretch $C$ using the cut
	$g(x)$ so that $(a,0)$ is reached since $o$ is above $g(x)$.
	If $f(y)$ separates $o$ and $p$, i.e., $p$ is above $f(y)$
	then we can stretch $C$ using the cut
	$f(y)$ so that $(0,b)$ is reached.
	To prove the lemma, we must show that $p$ is below $g(x)$ or
	above $f(y)$.
	To do this we show the following claim:

\begin{claim}
\label{claim:orderCuts}

	Suppose that a chain $C$ realizing a non-staircase $\sigma$
	reaches $p=(a,b)$ where $a$ and $b$ are positive integers.
	If $p$ is above $g(x)$ then $p$ is above $f(y)$.
\end{claim}
\begin{claimproof}
	For $i=1, \dots, n$, let $p_i=(x_i,y_i)$ be the point 
	where $C$ makes the $i$th turn $\sigma_i$.
	Let $\sigma_j$ be the first turn where the staircase property is
	violated, so $\sigma_{j-1}\sigma_j = \lt\lt$ or 
	$\sigma_{j-1}\sigma_j = \rt\rt$.

	From $o$ to $p_j$, $f(y)$ and $g(x)$ follow opposite sides of
	every segment of $C$.
	Since $p=(a,b)$ is an integral point, if $a \le x_j$ then $p$
	above $g(x)$ implies $p$ above $f(y)$.

	If $p_j$ is a left turn, $f(y)$ turns right at 
	$p_j + (-\epsilon,\epsilon)$ while $g(x)$ continues upward.
	If $p_j$ is a right turn, $f(y)$ continues rightward 
	while $g(x)$ turns left at $p_j + (\epsilon,-\epsilon)$.
	In either case, until $f(y)$ and $g(x)$ intersect again, 
	if $p$ is above $g(x)$ then $p$ is above $f(y)$.

	Suppose now that $f(y)$ and $g(x)$ intersect again.
	Since both are staircase curves, and $g(x)$ is above $f(y)$, 
	this can occur only if a horizontal segment of $g(x)$ crosses 
	a vertical segment of $f(y)$.
	Since both horizontal segments of $g(x)$ and 
	vertical segments of $f(y)$ are subsegments of 
	($\epsilon$-extended sides of) segments of $C$, 
	this intersection occurs within $\epsilon$ of a bend point $p_k$ 
	which is the right endpoint of a horizontal
	segment of $C$ and the upper endpoint of a vertical segment of $C$.
	It follows that at $p_k + (\epsilon,-\epsilon)$, the curve $g(x)$
	turns upward while at $p_k + (-\epsilon,\epsilon)$, 
	the curve $f(y)$ turns rightward.
	Thus $g(x)$ and $f(y)$ intersect twice within $\epsilon$ of $p_k$
	and $g(x)$ continues above $f(y)$.
	Since $p$ is an integral point and not a bend point of $C$, 
	$p$ is not within $\epsilon$ of $p_k$ and the claim follows.
\end{claimproof}
	By this claim, if $p$ is above $g(x)$, then $p$ is above $f(y)$,
	so we can stretch $C$ using the cut $f(y)$ 
	so that $(0, b)$ is reachable. 
	Otherwise, since $p$ is integral, $p$ is below $g(x)$, and 
	we may stretch $C$ using the cut $g(x)$ so that $(a,0)$ is reachable.
\end{proof}

\section{Reachable set \texorpdfstring{$A(\sigma)$}{A(sigma)}}
\label{sec:reachable_set}
 
We now show, using the two lemmas in the previous section, that 
the reachable set $A(\sigma)$ is defined as 
the union of at most four halfplanes
whose bounding lines are orthogonal to the four signed axes.
For this, we define the \emph{Box algorithm}
and the \emph{TwoBox algorithm} for realizing 
$\sigma=\sigma_1\cdots\sigma_n$ of length $n$ 
as a simple rectilinear chain $C$.

\begin{figure}[t]
\centering
    \includegraphics[width=\textwidth]{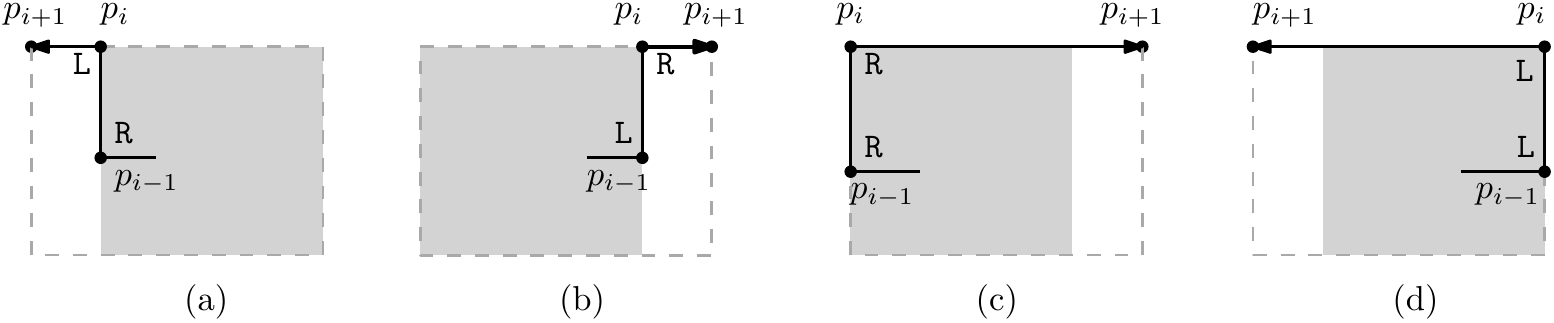}
\caption{ Box algorithm: (a) $\sigma_{i-1}\sigma_i = \rt\lt$. (b) $\sigma_{i-1}\sigma_i=\lt\rt$. (c) $\sigma_{i-1}\sigma_i = \rt\rt$. (d) $\sigma_{i-1}\sigma_i = \lt\lt$.}
\label{fig:box_algo}
\end{figure}
 
\subsection{Box algorithm}
The Box algorithm draws $\sigma$ as a chain $C$ in an incremental way, 
starting with a unit horizontal base segment $C_0$ from $o$ to $(1,0)$
and constructing a chain $C_{i}$ for the subsequence 
$\sigma_1\cdots\sigma_{i}$ from the chain $C_{i-1}$ 
for $\sigma_1\cdots\sigma_{i-1}$
by adding a segment that realizes the $i$th turn $\sigma_{i}$
for $1 < i \leq n$.
The chain $C_i$ has $i$ bend points, $p_1, \ldots, p_{i}$;
a starting point $p_0 = o$, and an endpoint $p(C_i) = p_{i+1}$. 
Each bend point $p_j$ for $1\leq j \leq i$ corresponds to 
the turn $\sigma_j$ on $C_i$. 

Each $C_i$ satisfies three invariants: 
(I1) the smallest bounding box $B_i$ of $C_i$ has dimension 
$\lceil \frac{i+1}{2}\rceil \times \lceil \frac{i+1}{2} \rceil$ 
for odd $i$ and
$\lceil \frac{i+1}{2}\rceil \times \lfloor \frac{i+1}{2} \rfloor$ 
for even $i$, 
(I2) the endpoint $p_{i+1}$ of $C_i$ is at a corner of $B_i$, 
and (I3) at least one side of the box that is incident to $p_{i+1}$ 
is not occupied by any other segments of $C_i$. 
Note that $C_0$ fits in the (degenerated) box 
with width of one and height of zero, 
so the three invariants are clearly satisfied.

To get $C_i$ from $C_{i-1}$, we determine the position of $p_{i+1}$, and
connect it to $p_i$, which is located at a corner of $B_{i-1}$, 
with the segment $e_i=p_ip_{i+1}$. See Figure~\ref{fig:box_algo}. 
Without loss of generality, we assume that $i$ is even, 
so $e_i$ is horizontal because $e_0$ is assumed to be horizontal. 
We further assume that 
$p_i$ lies at one of the two corners on the upper side of
the bounding box $B_{i-1}$ of $C_{i-1}$. 

We have two cases: whether $\sigma_{i}$ is different 
from $\sigma_{i-1}$ or not. 
If they are different, then $e_i$ can be a unit segment
as in Figure~\ref{fig:box_algo}(a)-(b).
%
%
When they are same, as in Figure~\ref{fig:box_algo}(c)-(d), 
we draw $e_i$ as a horizontal segment whose length is 
the width of $B_{i-1}$ plus one. 
The box $B_i$ is one unit wider than $B_{i-1}$. 
It is easy to check that $C_i$ and $B_i$ indeed satisfy 
the three invariants for both cases.

\subsection{TwoBox algorithm} 
The TwoBox algorithm splits $\sigma$ into two subsequences 
$\sigma' = \sigma_1\cdots \sigma_i$, and 
$\sigma'' = \sigma_{i+1}\cdots \sigma_n$ for some $1 \leq i < n$. 
We define $\bar{\sigma}''$ as the sequence of opposite turns of $\sigma''$ 
in the reverse order, i.e., 
$\bar{\sigma}'' = \bar{\sigma}_n\cdots \bar{\sigma}_{i+1}$, 
where $\bar{\sigma}_j$ is the opposite turn from $\sigma_j$. 

\begin{figure}[t]
	\centering
   \includegraphics{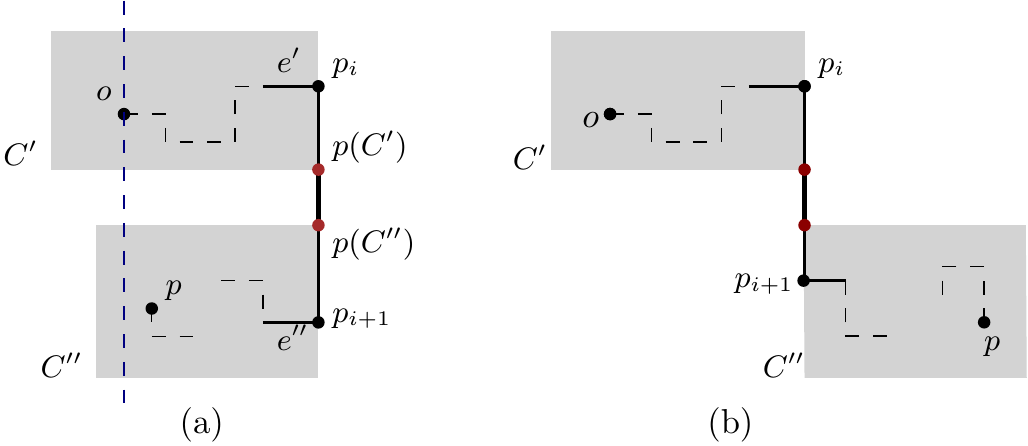}
	\caption{TwoBox algorithm: 
	(a) $\sigma_i\sigma_{i+1} = \rt\rt$. 
	(b) $\sigma_i\sigma_{i+1} = \rt\lt$.}
	\label{fig:two_box_algo}
\end{figure}

The algorithm draws a chain $C'$ for $\sigma'$ and 
another chain $C''$ for $\bar{\sigma}''$ by the Box algorithm. 
Their last segments should be connected to get the final chain $C$ 
by connecting the two endpoints $p(C')$ and $p(C'')$.
See Figure~\ref{fig:two_box_algo}. 
By the invariant (I2) of the Box algorithm, 
each endpoint lies at a corner of its box. 
We first place the boxes of $C'$ and $C''$ 
so that their last segments are aligned 
along a common (horizontal or vertical) line. 
Such alignment can be achieved by rotating $C''$ (if necessary). 
Note that rotating a chain does not affect its turn sequence. 
After the alignment, we simply connect 
the two endpoints $p(C')$ and $p(C'')$ by a unit segment,
then $p_i$ of $C'$ is finally connected with $p_{i+1}$ of $C''$ 
as the $(i+1)$th segment of $C$.
Note here that this connection is always possible 
due to the invariant (I2), and 
the two endpoints $p(C')$ and $p(C'')$ lie in the interior 
of the $(i+1)$th segment of $C$, thus both disappear.
 
A key property is that $C'$ and $C''$ are separable either 
by a horizontal or vertical cut. 
The bounding box of the resulting chain $C$ has dimensions each 
at most $\lceil (n+1)/2\rceil + 2$ 
by the first invariant of the Box algorithm.

\subsection{Hook patterns and axis reachability}

By the Stretching Lemma (Lemma~\ref{lem:stretch}), 
if a turn sequence $\sigma$ can reach a point $p$ on an axis, 
then a halfplane $H^+(p)$ or $V^+(p)$ is also in $A(\sigma)$.
Thus, to figure out the shape of $A(\sigma)$,
it is important to know the closest reachable point from $o$
on each signed axis.
Of course, a turn sequence may not reach 
any point on some signed axis.
We will show that axis reachability is determined by
hook patterns in the sequence.

A \emph{hook pattern} is either $\lt\lt$ or $\rt\rt$ in $\sigma$. 
This is realized in a chain $C$ by $\sigma$ as 
three consecutive segments 
whose two bend points correspond to $\lt\lt$ or $\rt\rt$. 
We call the middle segment the \emph{hook segment} in the chain $C$. 
We use the term \emph{hook} to indicate both a hook pattern and its
associated segment in a chain.

We assign a direction to the segments of $C$ along $C$ 
from $o$ to its endpoint of $C$.
According to its direction, we classify a hook into
four different types, 
\emph{up}, \emph{down}, \emph{left}, and \emph{right}
as shown in Figure~\ref{fig:relation_hooks}.
Up and down hooks are vertical segments, and 
left and right hooks are horizontal ones. 
Each hook has two subtypes according to its turns, $\lt\lt$ or $\rt\rt$. 
We illustrate all eight hook types 
in Figure~\ref{fig:relation_hooks}.

\begin{figure}[ht]
\centering
    \includegraphics[width=0.8\textwidth]{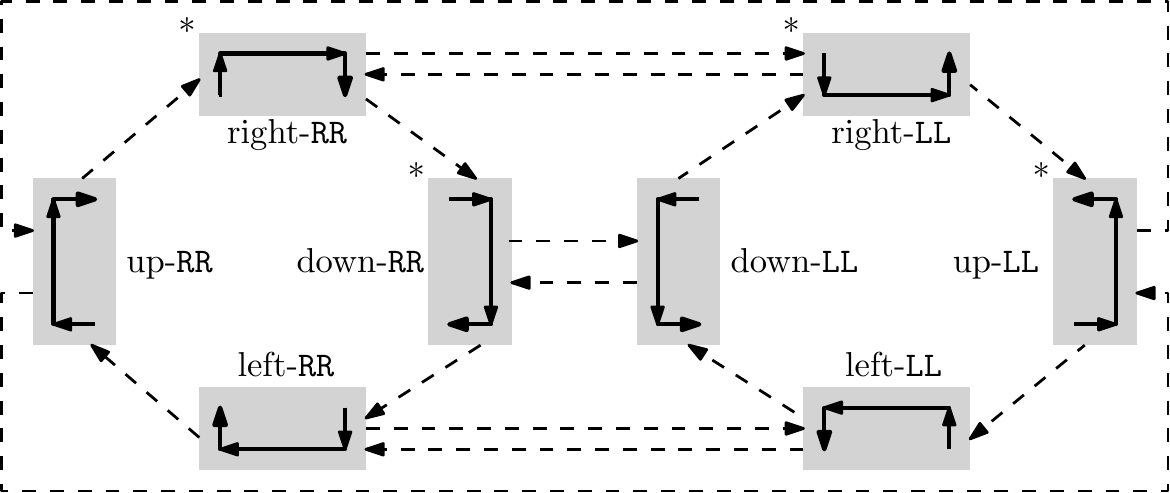}
	\caption{Four hook types: up, down, left, and right. 
	Each type has an $\lt\lt$ and $\rt\rt$ subtype. 
	A dashed arrow from a hook $h$ to $h'$ means that
	$h'$ can be the next hook after $h$ in the chain (they can be connected
	by a staircase of length $\geq 0$).
        The four hooks marked $*$ can be the first hook after $o$.
	}
\label{fig:relation_hooks}
\end{figure}

The chain reverses its direction only at hooks.
For instance, if $C$ reaches a point $p$ on the $+y$-axis, 
then it must have left the $x$-axis (at a point other than $o$) and
headed to the $+y$-axis.
This allows us to observe that a turn sequence without up hooks
cannot reach any point on the $+y$-axis and
indeed reveals the relation 
between the hook type and axis reachability.
It is worth mentioning that the first hook in the chain
is one of the four hooks, 
right-$\rt\rt$, right-$\lt\lt$, down-$\rt\rt$, and up-$\lt\lt$ 
because the first segment of any chain is 
the horizontal segment to the east.
We also observe that the sequence of hooks in a turn sequence is
restricted;
for instance, the hook preceding a down-$\lt\lt$ hook must be
a left-$\lt\lt$ hook or a down-$\rt\rt$ hook.
Figure~\ref{fig:relation_hooks} describes these constraints.
%

Using the relation on hook patterns with TwoBox algorithm,
we can prove the necessary and sufficient conditions for
the axis reachability. 
\begin{lemma}
\label{lem:hook_pt_on_axis}
	A turn sequence $\sigma$ of $n>0$ turns contains at least one 
	up, down, right, or left hook pattern 
	if and only if 
	there is a reachable point on 
	$+y$-axis, $-y$-axis, $+x$-axis, or $-x$-axis, respectively.
\end{lemma}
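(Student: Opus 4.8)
The statement is really four biconditionals, one per signed axis, so I would first halve the work using the reflection across the $x$-axis. This map sends a valid chain (whose first segment heads east) to a valid chain, exchanges $\lt\leftrightarrow\rt$ and up$\leftrightarrow$down hooks, fixes the left- and right-hook types (west stays west, east stays east), and swaps the $+y$- and $-y$-axes while fixing the two $x$-axes. Hence the ``up/$+y$'' and ``down/$-y$'' cases are equivalent, and I would prove only the up/$+y$ case in detail. The ``right/$+x$'' and ``left/$-x$'' cases I would argue by the same method, noting that \emph{no} symmetry relates them to the $y$-axis cases, nor left to right, because the forced eastward start breaks the symmetry between east and west.

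For the sufficiency direction (an up hook yields a reachable point on the $+y$-axis) I would argue constructively with the TwoBox algorithm. Choosing an up hook $\sigma_k\sigma_{k+1}$ and splitting $\sigma$ there makes the hook segment exactly the connecting segment of TwoBox, oriented vertically (heading north). The separability property of TwoBox then places $C'$ and $C''$ on opposite sides of a horizontal line through this connector, so the prefix chain lies below and the suffix chain above, which guarantees simplicity. It remains to route the endpoint onto the axis: using the freedom in the integer segment lengths---equivalently, the Stretching Lemma (Lemma~\ref{lem:stretch})---I would lengthen the connector to create vertical room and then fix the remaining lengths so that the endpoint has $x=0$ and $y>0$. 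Landing the endpoint \emph{exactly} on the axis with the correct sign is the fiddly point, but it is what the north orientation of the chosen hook buys us.

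For the necessity direction (a reachable point on the $+y$-axis forces an up hook) I would argue the contrapositive, using the principle established around Figure~\ref{fig:relation_hooks} that a chain reverses its horizontal heading (east$\leftrightarrow$west) only at a vertical hook. A chain enters the region $x>0$ with its first, eastward segment, so to end at $(0,b)$ it must return to $x=0$, which requires at least one east-to-west reversal and hence a vertical hook. The catch is that such a reversal is realized by a down hook just as readily as by an up hook, so heading-reversal combinatorics alone cannot separate $+y$ from $-y$. The extra ingredient is \emph{simplicity}: I would show that if every vertical hook were a down hook, the part of the chain returning to the $y$-axis would have to cross the earlier eastward part, since there is no north hook to lift it above. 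I expect to formalize this either by closing the chain with the $y$-axis segment from $(0,b)$ to $o$ and running a Jordan-curve crossing argument, or by reading off the admissible hook orders encoded in Figure~\ref{fig:relation_hooks} (which already respect simplicity) to force an up hook into $\sigma$.

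The main obstacle is exactly this last step. A purely combinatorial argument proves only that \emph{some} vertical hook occurs; and indeed a sequence whose only vertical hook is a down hook (for instance $\lt\rt\rt\rt$, which has a right hook and a down hook but no up hook) can reach a point $(0,b)$ with $b>0$ by a \emph{self-crossing} chain. Thus the signed refinement ``up versus down'' is genuinely a consequence of non-self-intersection, not of the turn pattern alone. Converting this geometric picture into a clean proof---via the crossing count or via the transition relation of Figure~\ref{fig:relation_hooks}---is where the real work lies, and the same step must then be carried out for the non-symmetric right/$+x$ and left/$-x$ cases.
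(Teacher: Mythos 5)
Your sufficiency direction is essentially the paper's own argument: the paper also splits $\sigma$ at a hook $\sigma_i\sigma_{i+1}$ so that the hook segment becomes the TwoBox connector, then exploits the separability of the two boxes, first translating $C''$ horizontally (concretely, by stretching one of the two horizontal segments adjacent to the hook, $p_{i-1}p_i$ or $p_{i+1}p_{i+2}$, depending on whether the endpoint lies left or right of $o$) so the endpoint's $x$-coordinate becomes $0$, and then lengthening the connector until the endpoint lands on the correct signed axis. Your reflection across the $x$-axis, which trades the up/$+y$ and down/$-y$ cases, is sound and is a small economy the paper does not bother with (it just says the other hook types are handled the same way). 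Your example $\lt\rt\rt\rt$ is also correct and makes the right point: the signed conclusion genuinely requires simplicity.

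The gap is the one you name yourself: the necessity direction is never actually proved. You correctly diagnose that heading-reversal combinatorics only yields \emph{some} vertical hook, and that simplicity must force its orientation, but neither of your two sketched routes closes this. The Jordan-curve closure from $(0,b)$ back to $o$ along the $y$-axis fails as stated, because the chain may meet the $y$-axis at many other points (nothing prevents it from wandering into $x<0$), so the closed curve need not be simple; and the transition diagram of Figure~\ref{fig:relation_hooks} records only which hook may follow which along a chain, not where the chain sits relative to the axes, so it cannot by itself yield axis reachability. The paper closes the gap with a short extremal argument that handles exactly the multiple-crossing difficulty: for the $-y$ case, let $q$ be the \emph{first} intersection of $C$ with the $-y$-axis and $q'$ the \emph{last} intersection with the $+y$-axis before $q$, so the subchain $C[q',q]$ meets the $y$-axis only at its endpoints and (WLOG) lies to its right. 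Take the \emph{rightmost} vertical segment $e=ab$ of $C[q',q]$, with $a$ below $b$. Extremality forces both horizontal neighbors of $e$ to leave westward, so $e$ is automatically a vertical hook; and since $C[q',q]$ is confined to the slab between the $y$-axis and the line through $e$, if $e$ were traversed upward from $a$ to $b$, the piece from $b$ to $q$ would have to cross the piece from $q'$ to $a$, contradicting simplicity. Hence $e$ is a down hook. This single extremal choice is the concrete mechanism your proposal is missing, and it transfers verbatim to the $\pm x$ cases.
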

\begin{proof}
	We first show that any turn sequence $\sigma$ 
	containing a down-$\rt\rt$ hook can reach the $-y$-axis.
	We draw a chain $C$ using the TwoBox algorithm by splitting
	$\sigma$ into $\sigma' = \sigma_1\cdots\sigma_i$ and 
	$\sigma''=\sigma_{i+1}\cdots\sigma_n$, where
	$\sigma_i\sigma_{i+1}=\rt\rt$ is the first down-$\rt\rt$ hook from
	$o$ (see Figure~\ref{fig:two_box_algo}(a)).
	Let $p$ be the endpoint of the resulting chain.
	To get $C$, we translate $C''$ horizontally either by stretching the
	segment $p_{i-1}p_i$ (if $p$ is to the left of $o$) or the segment
	$p_{i+1}p_{i+2}$ (if $p$ is to the right of $o$) so that
	$p$ has the same $x$-coordinate as $o$.
	We then stretch the segment $p(C')p(C'')$ until $p$ is on the $-y$-axis.
	For the case that $\sigma$ contains down-$\lt\lt$, 
	we can apply a similar procedure to move $p$ to the $-y$-axis. 
	We thus conclude that if $\sigma$ has a down hook, 
	then it can reach a point on the $-y$-axis.

	The same argument can be applied for the other hook types:
	If $\sigma$ contains up, right, or left hooks, then 
	there are reachable points on $+y$-axis, $+x$-axis, or $-x$-axis,
	respectively.

	We now claim that the reverse is also true.
	In the following, we show
	that if there is a point 
	on the $-y$-axis reachable by $\sigma$, 
	then $\sigma$ must contain at least one down hook, 
	either a down-$\lt\lt$ hook or a down-$\rt\rt$ hook. 
	The other cases can be proved similarly.

	For any chain $C$ realizing $\sigma$, we define a point $q$
	as the first intersection of $C$ with the $-y$-axis.
	Let $q'$ be the last intersection point of $C$ 
	with the $+y$-axis before $q$.
	It is possible that $q = p$ or $q' = o$.
	Without loss of generality, we assume that 
	the subchain $C'$ of $C$ from $q'$ to $q$ 
	is to the right of the $y$-axis. 
	By the definition, 
	$C'$ cannot intersect with the $y$-axis except at $q'$ and $q$.
	Let $e = ab$ be the rightmost vertical segment connecting 
	two bends $a$ and $b$ of $C'$, where $a$ is below $b$.
	Note that $C'$ lies completely in the vertical slab between
	the $y$-axis and the line extending $e$.
	We can easily see that $e$ is a down hook from $b$ to $a$, i.e.,
	$C'$ is a subchain connecting $q'$, $b$, $a$, and $q$ in this order.
	Otherwise, i.e., if it is an up hook from $a$ to $b$, then
	the subchain from $b$ to $q$ of $C'$ must intersect
	the subchain from $q'$ to $a$ of $C'$, 
	which is a contradiction. 
\end{proof}

Let $(x^+_\sigma, 0)$ and $(x^-_\sigma,0)$ be 
the closest reachable points by $\sigma$ on
$+x$-axis and $-x$-axis, respectively, if they exist.
Let $(0, y^+_\sigma)$ and $(0, y^-_\sigma)$ be 
the closest reachable points by $\sigma$ on
$+y$-axis and $-y$-axis, respectively, if they exist.
We are now ready to give a complete characterization 
on the shape of $A(\sigma)$ by using
the properties proved so far.
\begin{theorem}
\label{thm:reachAxis}
The reachable set $A(\sigma)$ of a turn sequence $\sigma$ of
length $n\geq 0$ is: 
\begin{equation*}
A(\sigma) =
\begin{cases}
\{ (a, 0) \mid a \geq 1\}
&
\text{if $n=0$,}\\
Q^+((1+\floor{n/2}, \ceil{n/2}))
&
\text{if $\sigma$ is a staircase, $\sigma_1 = \lt$,}\\
Q^+((1+\floor{n/2}, -\ceil{n/2}))
&
\text{if $\sigma$ is a staircase, $\sigma_1 = \rt$,}\\
\bigcup_{a \in \{ x^{+}_\sigma, x^{-}_\sigma \}} V^+((a,0))
\cup
\bigcup_{b \in \{ y^{+}_\sigma, y^{-}_\sigma \}} H^+((0,b))
&
\text{otherwise.}
\end{cases}
\end{equation*}
	The signed coordinates $x^{+}_\sigma$, $x^{-}_\sigma$, 
	$y^{+}_\sigma$, and $y^{-}_\sigma$ exist 
	if and only if 
	$\sigma$ has right, left, up, and down hooks, respectively.
\end{theorem}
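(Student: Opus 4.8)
The plan is to treat the four cases of the definition separately, dispatching the first three quickly and concentrating on the last. For $n=0$ the only realizing chains are single horizontal segments from $o$ to $(t,0)$ with $t$ a positive integer, so $A(\sigma)=\{(a,0)\mid a\ge 1\}$ directly. For a staircase I would invoke the remark following the Stretching Lemma, which already identifies $A(\sigma)$ with $Q^+(p)$ for the point $p$ reached by the unit-length realization, leaving only the computation of $p$. Since a staircase has alternating turns and the first segment heads east, every horizontal segment points east and every vertical segment points north (if $\sigma_1=\lt$) or south (if $\sigma_1=\rt$). Of the $n+1$ segments, $1+\floor{n/2}$ are horizontal and $\ceil{n/2}$ are vertical; summing unit lengths gives $p=(1+\floor{n/2},\pm\ceil{n/2})$, matching the two staircase cases. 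The enclosing monotonicity that pins $Q^+(p)$ from above (all horizontal segments east, all vertical segments one fixed direction, each of length $\ge 1$) is exactly what underlies the remark, so nothing further is needed.

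For the remaining case $\sigma$ is a non-staircase, hence it has at least one hook and $n\ge 2$. The inclusion $\supseteq$ is immediate: each coordinate among $x^{+}_\sigma,x^{-}_\sigma,y^{+}_\sigma,y^{-}_\sigma$ that exists corresponds to a reachable point on an axis, and applying part (3) or part (2) of the Stretching Lemma to that point shows the associated halfplane $V^+((a,0))$ or $H^+((0,b))$ is contained in $A(\sigma)$; the union of these halfplanes therefore lies in $A(\sigma)$.

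The substance is the inclusion $\subseteq$. Given a reachable $p=(a,b)$, I would first note $p\ne o$, since a simple chain starting at $o$ cannot return there, so $a$ and $b$ are not both zero. If $p$ is on an axis, say $a=0$ and $b>0$, then $(0,b)$ is reachable and the minimality defining $y^{+}_\sigma$ forces $b\ge y^{+}_\sigma$, placing $p$ in $H^+((0,y^{+}_\sigma))$; the other on-axis subcases are symmetric. If instead $a\ne 0$ and $b\ne 0$, I would apply the Axis Lemma (valid because $\sigma$ is not a staircase) to get that at least one of $(a,0),(0,b)$ is reachable. Say $(a,0)$ is reachable with $a>0$; then $a\ge x^{+}_\sigma$ by minimality, and since $V^+((x^{+}_\sigma,0))$ is all integer points with $x\ge x^{+}_\sigma$ (any $y$), we conclude $p\in V^+((x^{+}_\sigma,0))$ regardless of $b$. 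The remaining sign choices and the symmetric case $(0,b)$ reachable are identical, so every reachable point lies in one of the four halfplanes, giving $\subseteq$. Finally, the existence statement for the signed coordinates is precisely Lemma~\ref{lem:hook_pt_on_axis}, which matches right, left, up, and down hooks with reachability of the $+x$, $-x$, $+y$, and $-y$ axes.

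The main obstacle is the off-axis part of the $\subseteq$ inclusion: a priori a reachable $(a,b)$ need not obviously sit in any single halfplane, and it is the Axis Lemma that collapses it onto a reachable axis point, after which the \emph{full-halfplane} nature of $V^+$ and $H^+$ (any $y$, respectively any $x$) together with the minimality of the closest axis coordinate finishes the argument. Checking that these three ingredients interlock cleanly across all four sign cases is the crux; everything else is routine bookkeeping.
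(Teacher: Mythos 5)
Your proposal is correct and follows essentially the same route as the paper's proof: the Stretching Lemma together with Lemma~\ref{lem:hook_pt_on_axis} gives the $\supseteq$ inclusion, the Axis Lemma plus minimality of the closest axis points gives $\subseteq$, and the existence claim is exactly Lemma~\ref{lem:hook_pt_on_axis}. The only differences are cosmetic: you argue the $\subseteq$ direction directly where the paper phrases it as a contradiction (a reachable axis point strictly inside the interval between the closest points), and you explicitly treat reachable points lying on an axis and verify the staircase endpoint coordinates, both of which the paper leaves implicit.
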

\begin{proof}
	It clearly holds for $n = 0$.
	A staircase sequence cannot reach any signed axis
	because it has no hooks, so $A(\sigma)$ is 
	the quadrant $Q^+(p)$ (by the Stretching Lemma)
where $p$ is the point reached by
the realization of $\sigma$ with unit length segments.
	
	We now suppose that $\sigma$ has one or more hooks.
	Let $A'(\sigma)$ be the union of (at most four) halfplanes
	whose bounding lines pass (orthogonally) 
	through the closest reachable points
	on the axes (determined by the hook patterns in $\sigma$). 
	By Lemma~\ref{lem:stretch} and Lemma~\ref{lem:hook_pt_on_axis}, 
	such halfplanes are reachable,
	so their union $A'(\sigma)$ is also reachable, i.e., 
	$A'(\sigma) \subseteq A(\sigma)$.
	To prove that $A'(\sigma)\supseteq A(\sigma)$, 
	we suppose that there is a point $p = (a, b)$ 
	where $a\neq 0$ and $b\neq 0$
	such that $p \in A(\sigma)$, but $p\not\in A'(\sigma)$.
	By the Axis Lemma (Lemma~\ref{lem:unreachCorner}), either $(a, 0)$ or 
	$(0, b)$ is reachable.
	If $(a, 0)$ is reachable, then it should fall into the open interval
	between $x^-_\sigma$ and $x^+_\sigma$ on the $x$-axis, 
	otherwise $(a, 0) \in V^+((x^+_\sigma, 0)) \cup V^+((x^-_\sigma, 0))$,
	thus $(a,0)\in A'(\sigma)$, 
	which implies $(a,b)\in A'(\sigma)$ by the Stretching Lemma. 
	The point $(a, 0)$ in the open interval means that 
	$(a, 0)$ is closer to $o$ than $(x^-_\sigma, 0)$
	or $(x^+_\sigma, 0)$, which is a contradiction.
	The same argument holds for the case that 
	$(0, b)$ is reachable.
\end{proof}

%
%

\begin{theorem}
\label{thm:connected}
	For a turn sequence $\sigma$, the reachable set $A(\sigma)$ and 
	the unreachable set $\Z^2\setminus A(\sigma)$ are both connected. 
\end{theorem}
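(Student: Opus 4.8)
The plan is to work throughout in the grid graph on $\Z^2$ (two integer points being adjacent when they differ by a unit vector) and to follow the four cases of Theorem~\ref{thm:reachAxis}. The cases $n=0$ and ``$\sigma$ a staircase'' are immediate. A half-line and a quadrant $Q^+(p)$ are each connected; and the complement in $\Z^2$ of a horizontal half-line (respectively, of a quadrant) is a connected region containing $o$ — namely $\Z^2$ with a half-line deleted (respectively, an $L$-shaped union of the three remaining quadrants). So the only real work is the ``otherwise'' case, where by Theorem~\ref{thm:reachAxis} the set $A(\sigma)$ equals a union of at most four halfplanes, one of $V^+((x^+_\sigma,0))$, $V^+((x^-_\sigma,0))$, $H^+((0,y^+_\sigma))$, $H^+((0,y^-_\sigma))$ for each of the right, left, up, down hooks that $\sigma$ actually contains.

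The complement is easy even here. The set $\Z^2\setminus A(\sigma)$ is exactly the intersection of the open complementary halfplanes, i.e.\ the set of integer points $(a,b)$ with $x^-_\sigma<a<x^+_\sigma$ and $y^-_\sigma<b<y^+_\sigma$, where a constraint is simply dropped whenever the corresponding hook (hence coordinate) is absent. Since $x^+_\sigma\ge 1$, $x^-_\sigma\le -1$, $y^+_\sigma\ge 1$, $y^-_\sigma\le -1$ whenever they exist, this is a (possibly unbounded) axis-aligned discrete rectangle containing $o$, which is connected. Thus the complement is connected in every case and needs no further argument.

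The crux is connectivity of the union itself. Among these four halfplanes a union is disconnected exactly when the present hooks form precisely $\{\text{right},\text{left}\}$ or precisely $\{\text{up},\text{down}\}$ (two parallel opposite, hence disjoint, halfplanes); I would rule both out with one observation about headings. Assign to each segment of a realizing chain a heading in $\{E,N,W,S\}$, with the first segment heading $E$; consecutive segments are perpendicular, so horizontal ($E/W$) and vertical ($N/S$) headings strictly alternate, and a hook $\sigma_{j-1}\sigma_j$ (two equal turns, a $180^\circ$ reversal) has its hook segment heading $E,W,N,S$ for a right, left, up, down hook respectively. The key step is that two vertical segments with opposite headings that are \emph{consecutive in the subsequence of vertical segments} are separated by a single horizontal segment, and the two turns bracketing that horizontal segment are then forced to be equal (both $\lt\lt$ or both $\rt\rt$), so they form a left or right hook; symmetrically, two consecutive horizontal segments with opposite headings force an up or down hook. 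Hence if both an up and a down hook occur, then $N$ and $S$ both occur among vertical segments, forcing a transition and therefore a left or right hook; and since the first segment already heads $E$, any left hook makes $E$ and $W$ both occur among horizontal segments, forcing an up or down hook, which also kills the pair $\{\text{right},\text{left}\}$. This is precisely the adjacency structure drawn in Figure~\ref{fig:relation_hooks}, and it is the step I expect to demand the most care.

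With those two configurations excluded, connectivity of $A(\sigma)$ follows by inspection. Every vertical halfplane meets every horizontal halfplane (their intersection is a full quadrant of integer points, e.g.\ $V^+((x^+_\sigma,0))\cap H^+((0,y^+_\sigma))=\{(a,b): a\ge x^+_\sigma,\ b\ge y^+_\sigma\}$), so any configuration containing at least one vertical and at least one horizontal halfplane is connected by chaining these pairwise overlaps; a single halfplane is trivially connected; and after the exclusion the only all-vertical or all-horizontal configurations that survive are single halfplanes. Therefore $A(\sigma)$ is connected in every case, which together with the complement argument above completes the proof.
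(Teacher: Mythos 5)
Your proof is correct and follows essentially the same route as the paper's: invoke Theorem~\ref{thm:reachAxis} to reduce everything to the union-of-halfplanes shape, observe that disconnection could only arise from exactly two parallel opposite halfplanes (hook sets $\{\text{up},\text{down}\}$ or $\{\text{left},\text{right}\}$), and rule those out via the hook-ordering constraints, with the complement's connectedness read off from the shape of $A(\sigma)$ as a (possibly unbounded) discrete rectangle or L-shape containing $o$. The only difference is that where the paper simply cites the adjacency structure of Figure~\ref{fig:relation_hooks}, you prove it directly with the alternating-headings argument (a heading reversal between consecutive vertical segments forces a horizontal hook, and symmetrically), which makes your write-up self-contained but does not change the approach.
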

\begin{proof}
	For an empty turn sequence (i.e., of length zero) or 
	a turn sequence with no hooks, 
	$A(\sigma)$ is clearly connected. 
	Suppose that $\sigma$ has hooks.
	By Theorem~\ref{thm:reachAxis}, $A(\sigma)$ is the union
	of at most four halfplanes whose bounding lines are orthogonal
	to the signed axes. 
	This implies that $A(\sigma)$ is connected
	except in the case that $A(\sigma)$ is 
	the union of two parallel halfplanes.
	Suppose without loss of generality that 
	their bounding lines are parallel to the $x$-axis. 
	Then, by Theorem~\ref{thm:reachAxis}, $\sigma$ contains
	exactly two types of
	hooks: down and up.
	However, in order for the up and down hooks to be connected by the chain,
	$\sigma$ needs to have at least one left or right hook; 
	see Figure~\ref{fig:relation_hooks} 
	for an illustration of the ordering relation of the hooks in a chain.
	This is a contradiction.
	The connectedness of the unreachable set $\Z^2\setminus A(\sigma)$
	immediately follows from the shape of $A(\sigma)$.
\end{proof}

\remark{%
	We observe that any turn sequence with
	five or more left turns than right ones (or with
	five or more right turns than left ones)
	has all four types of hooks, so such turn sequences 
	can reach all four signed axes.
}

\section{Closest reachable points on signed axes}
\label{sec:closest}

The next obvious question is to find the four closest reachable points
$x^+_\sigma, x^-_\sigma, y^+_\sigma$ and $y^-_\sigma$
for a given turn sequence $\sigma$ if they exist.
We give upper and lower bounds on the distance
from $o$ to the closest reachable points on the signed axes
as a function of the number of left and right turns in $\sigma$,
the size of the maximal monotone prefix or suffix of $\sigma$, 
and the maximal staircase prefix or suffix of $\sigma$.

To emphasize the number of the turns in $\sigma$,
we use another notation $\sigma_{l, r}$ to denote 
a turn sequence with $l$ left turns and $r$ right turns, where $n = l+r$. 
We define the \emph{excess number} $\delta$ of $\sigma_{l,r}$, 
denoted by $\delta(\sigma_{l,r})$, 
as the excess number of the left turns in the turn sequence, i.e., 
$\delta(\sigma_{l,r})=l-r$.
We define 
the \emph{prefix excess number} $\delta_i$,
as the excess number of the first $i$ turns in $\sigma_{l,r}$, i.e., 
$\delta_i=\delta(\sigma_1\cdots\sigma_i)$, where $\delta_0=0$ and $\delta_{l+r}=l-r$.
We call it \emph{prefix number} in short.

We assign prefix numbers $\delta_0, \ldots, \delta_{l+r}$ to the segments of $C$ 
in the order from $o$ to $p$. 
Assume that a segment of $C$ is directed from $o$ to $p$. 
We call a segment of $C$ with prefix number $t$ a \emph{\fbox{$t$}-segment},
and call a segment that is directed in the $z$-direction a \emph{$z$-segment}
for $z\in \{ \pm x, \pm y\}$. 
We can easily observe that a \fbox{$t$}-segment is a $+x$-segment, $+y$-segment, 
$-x$-segment, or $-y$-segment if $t \equiv 0, 1, 2,$ or $3 \pmod{4}$, respectively.
For example, if $\sigma_{l,r}=\lt\lt\rt\rt\rt$, we have
$\delta_0=0, \delta_1=1, \delta_2=2, \delta_3=1, \delta_4=0,$ and $\delta_5=-1$. 
The fourth segment of $C$ has $\delta_3=1$, so it is a \fbox{$1$}-segment
and also a $+y$-segment.

Let us define $x^+_{l,r} = \max_{\sigma_{l,r}} |x^+_{\sigma_{l,r}}|$
over all possible turn sequences $\sigma_{l,r}$ with $l$ left turns and $r$ right turns.
Similarly, we define $x^-_{l,r}$, $y^+_{l,r}$, and $y^-_{l,r}$
as the maximum distance from $o$ to the closest reachable points
by any turn sequence $\sigma_{l,r}$ on the corresponding signed axes.

A subchain of a chain $C$ that realizes $\sigma_{l,r}$ 
is said to be $z$-monotone 
for $z\in \{ \pm x, \pm y\}$
if it has no $-z$-segments.
Let $m^o_{z}$ and $m^p_{z}$ denote 
the number of $z$-segments in the maximal $z$-monotone subchains 
respectively containing $o$ and $p$, where $z\in \{ \pm x, \pm y\}$.
Let ($z_1$, $z_2$)-\emph{staircase} (or ($z_2$, $z_1$)-\emph{staircase}) denote 
a staircase that is $z_1$-monotone and $z_2$-monotone, 
where $z_1, z_2\in \{ \pm x, \pm y\}$.
Note that $z_1$- and $z_2$-directions are not the same, nor the opposite.
For better understanding, we use the cardinal directions, NE-staircase 
for the north-east staircase, i.e., ($+y$, $+x$)-staircase, and 
NW-staircase, SE-staircase, and SW-staircase 
the ($+y$, $-x$)-staircase, ($-y$, $+x$)-staircase, and ($-y$, $-x$)-staircase, respectively.
Let $_{z_1}^{}m_{z_2}^{o}$ and $_{z_1}^{}m_{z_2}^{p}$ 
denote the number of $z_2$-segments in the maximal ($z_1$, $z_2$)-staircases of $C$
containing $o$ and $p$, respectively.

Finally, we denote by $C[u,v]$ a subchain from $u$ to $v$ of $C$, 
where $u$ and $v$ are the points of $C$ and $u$ precedes $v$, that is, $u$ is
closer to $o$ than $v$.

\subsection{Upper and lower bounds on \texorpdfstring{$y^-_{l,r}$}{y-}
and \texorpdfstring{$y^+_{l,r}$}{y+}}

We first explain the (upper and lower) bounds on $y^-_{l,r}$ and $y^+_{l,r}$
can be easily derived from the bounds on $x^-_{l,r}$ and $x^+_{l,r}$.

Let $C$ be a rectilinear chain that realizes a turn sequence
$\sigma_{l,r}=\sigma_1\cdots\sigma_n$ which reaches to 
a point $p = (0, -b)$ on the $-y$-axis for some $b > 0$.
Define a new chain $C'$ by rotating $C$ by $90$ degrees in counterclockwise direction.
Turns are invariant to the rotation, thus the turn sequence of $C'$ remains unchanged.
Then $p = (0, -b)$ on the $-y$-axis is mapped to a point $p'=(b,0)$ on the $+x$-axis, and
the first (horizontal) segment of $C$ containing $(1,0)$ is mapped to 
a vertical segment containing $(0,1)$. 
We now augment $C'$ with a horizontal segment from $(-1,0)$ to $(0,0)$,
which creates a new turn $\lt$ at $(0,0)$; this augmentation is always possible
without causing self-intersections and changing the position of $p'$ 
by stretching along a vertical cut
between $(-1,0)$ and $(0,0)$ if $(-1,0)$ is occupied by a part of $C'$.
This resulting chain $C'$ realizes a new turn sequence 
$\sigma'_{l+1, r}=\lt\sigma_1\sigma_2\cdots\sigma_n$
such that it reaches to the point $p" = (b+1, 0)$ on the $+x$-axis
when it starts from $(0, 0)$.
This shows that if $\sigma'_{l+1,r}$ reaches to a point $p"=(a, 0)$ on the $+x$-axis,
then $\sigma_{l,r}$ reaches to a point $p=(0, -a+1)$ on the $-y$-axis.
This implies that the bounds on the $-y$-axis are directly derived 
from the ones on the $+x$-axis.
Similarly, by rotating in clockwise direction, 
the bounds on the $+y$-axis are also derived from the ones on the $+x$-axis.
Therefore, from now on, we consider the bounds only on the $-x$- and $+x$-axis.

\subsection{Upper bounds on \texorpdfstring{$x^-_{l,r}$}{x-} and \texorpdfstring{$x^+_{l,r}$}{x+}}

We assume that $l-r \geq 0$.

The upper bounds on $x^-_{l,r}$ and $x^+_{l,r}$ can be obtained by giving an algorithm that
draws a chain whose endpoint is on the target axis.
We first introduce a simple algorithm, called $\lt\rt$-algorithm,
which will be used as a subroutine in the drawing algorithms we propose.
Similar algorithms were previously known 
for a given sequence of exterior angles, 
called the \emph{rl}-algorithm~\cite{cr-socg85},
and for a given label-sequence~\cite{Sack84PhD}.

\paragraph*{\texorpdfstring{$\lt\rt$}{LR}-algorithm}

The $\lt\rt$-algorithm runs in two phases. 
In the \emph{reduction phase}, we find a pattern 
$\lt\rt$ or $\rt\lt$ in $\sigma_{l,r}$, and delete it, 
then we get a shorter sequence $\sigma_{l-1,r-1}$. 
We repeat this until there is no such pattern
(see~\cite{VijayanWigderson85} for a similar procedure).
Then the final sequence becomes $\sigma_{l-r,0}$ 
by the assumption that $l\geq r$. 
Note that $\sigma_{l-r, 0} = \lt^{l-r}$ and
is drawn as a spiral-like chain $C_{l-r,0}$
that wraps around $o$ in the counterclockwise direction. 
In the \emph{reconstruction phase}, 
we draw the chain incrementally from $C_{l-r, 0}$ 
for $\sigma_{l-r, 0}$, by inserting back $\lt\rt$ and $\rt\lt$ patterns in the reverse order of the deletions in the reduction phase. 
We reconstruct the chain $C_{l-i+1, r-i+1}$ 
from the chain $C_{l-i, r-i}$ for $i = r, \ldots, 1$
by drawing the two additional segments (for inserting back an $\lt\rt$ or $\rt\lt$) 
along a newly inserted empty row and column.

We now look closely at the chain $C_{l, r}$ drawn 
by the $\lt\rt$-algorithm when the excess number is small, i.e., $l-r=0,1,2$. 
For $l-r=0$, the final sequence in the reduction phase 
would be $\sigma_{0,0}$, thus the base chain is just 
a unit horizontal segment which connects 
two endpoints $a$ and $b$. 
Since rows and columns in the reconstruction phase 
are inserted between $a$ and $b$, 
those two endpoints still remain on the opposite sides 
of the bounding box until the end of the algorithm. 
A similar property holds for the other cases where $l-r = 1, 2$. 
The difference is that the sides where $a$ and $b$ lie are 
orthogonal for $l-r = 1$, and the same for $l-r = 2$.
Thus we can represent $C_{l,r}$ for $l-r = 0, 1, 2$ 
as a black box having one entry point $a$ and 
one exit point $b$ on its sides. 
These boxes will be used later as building blocks 
to draw chains with large excess numbers.

Let us consider $C_{l,r}$ in a different view when $l-r = 0$.
If the last turn is $\rt$, 
then $C_{l, r}$ is a concatenation of $C_{l, r-1}$ and 
the last segment
which are connected by $\rt$.
Since $l-(r-1) = 1$, we can draw the last segment (of $C_{l,r}$)
along the top side (parallel to the first segment) of the bounding box of $C_{l, r-1}$. 
If the last turn is $\lt$,
then we can draw the last segment, in a symmetric way,
along the bottom side of the bounding box of $C_{l-1, r}$.
We can draw any turn sequence with $l-r\leq 0$ in a similar way, so 
we can summarize these properties as follows.

\begin{lemma}
\label{lem:l-r-small}
A turn sequence $\sigma_{l, r}$ with $l-r \in \{ 0, \pm1, \pm2\}$
has a realization by a chain $C_{l,r}$ with both endpoints on the
sides of its bounding box.
Moreover, if $l-r = 0$, then there is a realization
in which the last segment is
contained in one side of the bounding box.
\end{lemma}

\paragraph*{Notation and observation}

Consider a turn sequence $\sigma_{l,r}$. 
We define $i(w)$ and $j(w)$ as the indices of
the last turn and the first turn whose excess number is exactly $w$, respectively.
Then $0 \leq i(w), j(w) \leq l+r$ and $\delta_{i(w)} = \delta_{j(w)} = w$.
We observe that for any $i(w) \leq l+r-2$, 
if $\delta_{i(w)} < l-r$, then the two consecutive turns $\sigma_{i(w)+1}$ and
$\sigma_{i(w)+2}$ must be $\lt$, otherwise both must be $\rt$
because otherwise, there would be $\delta_a = w$ for some $a> i(w)$, 
which is a contradiction that $i(w)$ is the index of the last segment
with the excess number $w$.
By the same reasoning, for any $j(w)\geq 2$, 
if $\delta_{j(w)} < 0$, the two previous turns 
$\sigma_{j(w)-1}$ and $\sigma_{j(w)-2}$ must be $\rt$, otherwise both must be $\lt$.
We have the following observations.

\begin{observation}
\label{obs:ik}
For all $i(w) \leq l+r-2$, 
if $\delta_{i(w)} < l-r$, then $\sigma_{i(w)+1}\sigma_{i(w)+2} = \mathtt{L}\mathtt{L}$.
Otherwise, if $\delta_{i(w)} > l-r$, then $\sigma_{i(w)+1}\sigma_{i(w)+2} = \mathtt{R}\mathtt{R}$.
\end{observation}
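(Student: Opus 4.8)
The plan is to regard the prefix numbers $\delta_0, \delta_1, \ldots, \delta_{l+r}$ as a walk on $\Z$ that steps by $+1$ whenever the corresponding turn is $\lt$ and by $-1$ whenever it is $\rt$; in particular $\delta_k - \delta_{k-1} = \pm 1$ for every $k$. First I would record the \emph{discrete intermediate value property} of such a unit-step walk: if two indices carry values $u$ and $v$, then every integer strictly between $u$ and $v$ is attained as a prefix number at some index lying between them. I would also note two bookkeeping points: the hypothesis $i(w) \le l+r-2$ guarantees that both turns $\sigma_{i(w)+1}$ and $\sigma_{i(w)+2}$ exist, and the omitted case $w = l-r$ is vacuous, since $\delta_{l+r} = l-r$ forces $i(l-r) = l+r > l+r-2$.

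The core of the argument is to determine the direction of the step at index $i(w)+1$ from the global datum $\delta_{l+r} = l-r$. Suppose $w = \delta_{i(w)} < l-r$. If the walk stepped down, i.e.\ $\delta_{i(w)+1} = w-1$, then since it must end at $\delta_{l+r} = l-r > w > w-1$, the intermediate value property applied to the subwalk from index $i(w)+1$ to $l+r$ would produce an index $a > i(w)$ with $\delta_a = w$, contradicting the fact that $i(w)$ is the last index with value $w$. Hence $\delta_{i(w)+1} = w+1$, so $\sigma_{i(w)+1} = \lt$. The symmetric argument, using $l-r < w$ to force a downward step, shows that if $w > l-r$ then $\delta_{i(w)+1} = w-1$ and $\sigma_{i(w)+1} = \rt$.

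It remains to pin down the second turn $\sigma_{i(w)+2}$, and here I would use the ``last index'' property locally rather than globally. In the case $w < l-r$ we have $\delta_{i(w)+1} = w+1$, so $\delta_{i(w)+2} \in \{w, w+2\}$; but $\delta_{i(w)+2} = w$ is impossible because $i(w)+2 > i(w)$, so $\delta_{i(w)+2} = w+2$ and $\sigma_{i(w)+2} = \lt$, yielding $\sigma_{i(w)+1}\sigma_{i(w)+2} = \lt\lt$. The case $w > l-r$ is identical with signs reversed, giving $\rt\rt$.

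I expect the only genuine subtlety to be the intermediate value step, which is precisely where the global excess $l-r$ enters and decides the orientation of the forced hook; everything else is a direct unwinding of the definition of $i(w)$ together with the unit-step nature of the prefix numbers.
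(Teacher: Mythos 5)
Your proof is correct and takes essentially the same route as the paper, which justifies the observation in a single sentence immediately before stating it: any other choice for $\sigma_{i(w)+1}$ or $\sigma_{i(w)+2}$ would produce some $\delta_a = w$ with $a > i(w)$, contradicting the maximality of $i(w)$. Your write-up merely makes explicit what the paper leaves implicit---the discrete intermediate value property of the unit-step walk (anchored at $\delta_{l+r} = l-r$) forcing the first turn, the direct last-index argument forcing the second, and the vacuity of the case $w = l-r$.
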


\begin{observation}
\label{obs:ik2}
For all $j(w) \geq 2$,
if $\delta_{j(w)} < 0$, then $\sigma_{j(w)-2}\sigma_{j(w)-1} = \mathtt{R}\mathtt{R}$.
Otherwise, if $\delta_{j(w)} > 0$, then $\sigma_{j(w)-2}\sigma_{j(w)-1} = \mathtt{L}\mathtt{L}$.
\end{observation}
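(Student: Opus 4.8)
The plan is to establish this as the exact mirror image of Observation~\ref{obs:ik}, reading the prefix numbers $\delta_0, \delta_1, \ldots, \delta_{l+r}$ as a walk on $\Z$ that starts at $\delta_0 = 0$ and moves by $+1$ at every $\lt$ and by $-1$ at every $\rt$. The defining property of $j(w)$ is that it is the \emph{first} index at which this walk attains the value $w$; equivalently, $\delta_b \neq w$ for every $0 \le b < j(w)$. Where Observation~\ref{obs:ik} controls how the walk \emph{leaves} level $w$ for the last time, here I would control how it \emph{first arrives} at level $w$.

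The key step is a first-passage (one-sided confinement) fact: since the walk has unit steps and starts at $0$, it cannot overshoot $w$ before first hitting it. Concretely, if $w > 0$ then $\delta_b < w$ for all $b < j(w)$ (if some $\delta_b \ge w$ occurred with $b < j(w)$, the smallest such $b$ would satisfy $\delta_b = w$ by unit steps, contradicting the minimality of $j(w)$), and symmetrically $\delta_b > w$ for all $b < j(w)$ when $w < 0$. This single observation does essentially all the work.

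I would then read off the two turns that bring the walk to level $w$ for the first time. Take $w > 0$. Since $\delta_{j(w)-1} \in \{w-1, w+1\}$ but $\delta_{j(w)-1} < w$, we get $\delta_{j(w)-1} = w-1$, so the step into $\delta_{j(w)} = w$ is an ascent, i.e.\ $\lt$. Likewise $\delta_{j(w)-2} \in \{w-2, w\}$ together with $\delta_{j(w)-2} < w$ forces $\delta_{j(w)-2} = w-2$, so the preceding step is also an ascent, i.e.\ $\lt$; hence the two consecutive turns culminating in the first arrival at level $w$ form an $\lt\lt$ hook. The case $w < 0$ is identical after swapping $\lt \leftrightarrow \rt$ and reversing every inequality, yielding $\rt\rt$. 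The value $w = 0$ is excluded automatically because $\delta_0 = 0$ forces $j(0) = 0 < 2$.

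I do not expect a genuine obstacle: the entire content is the unit-step / intermediate-value behaviour of the prefix-number walk already invoked for Observation~\ref{obs:ik}. The two points that demand care are (i) keeping the two sign cases and their inequalities consistent, and (ii) the index bookkeeping. On the latter, the hypothesis $j(w) \ge 2$ is precisely what makes both relevant turns legitimate: the two turns in question are the turn $\sigma_{j(w)}$ leading into the first $w$-segment together with its predecessor $\sigma_{j(w)-1}$, and both steps being monotone (both $\lt$ when $w>0$, both $\rt$ when $w<0$) is exactly the claimed hook.
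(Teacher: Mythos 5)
Your argument is, in substance, exactly the paper's: the paper dispatches this observation in one sentence (``by the same reasoning'' as Observation~\ref{obs:ik}), namely that if the two relevant turns were not both the same, the walk would attain the value $w$ at some index smaller than $j(w)$, contradicting minimality. Your first-passage confinement fact ($\delta_b < w$ for all $b < j(w)$ when $w > 0$, symmetrically for $w<0$) is a careful spelling-out of that same unit-step, intermediate-value reasoning, and every step of it is correct.

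There is, however, an index mismatch you glossed over, and it is worth naming because you are right and the printed statement is not. What you prove is that $\sigma_{j(w)-1}\sigma_{j(w)}$ is the hook --- you say so explicitly (``the turn $\sigma_{j(w)}$ leading into the first $w$-segment together with its predecessor $\sigma_{j(w)-1}$'') --- whereas the observation as printed asserts $\sigma_{j(w)-2}\sigma_{j(w)-1}$. These are different claims, and the printed one is false as stated: for $\sigma = \lt\rt\rt$ and $w = -1$ the prefix numbers are $0,1,0,-1$, so $j(-1) = 3 \geq 2$ and $\delta_{j(-1)} < 0$, yet $\sigma_{j(w)-2}\sigma_{j(w)-1} = \sigma_1\sigma_2 = \lt\rt$, while $\sigma_{j(w)-1}\sigma_{j(w)} = \sigma_2\sigma_3 = \rt\rt$ exactly as your proof predicts. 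Two sanity checks confirm your indexing: when $j(w) = 2$ the printed form refers to the nonexistent turn $\sigma_0$, while yours uses precisely the turns legitimized by the hypothesis $j(w) \geq 2$; and the true mirror of Observation~\ref{obs:ik}, whose turns sit at offsets $+1,+2$ from the index where $\delta = w$, has offsets $0,-1$, i.e.\ $\sigma_{j(w)-1}\sigma_{j(w)}$, not $-1,-2$. (The paper's one-line proof carries the same shift, apparently from conflating the turn index $j(w)$ with the index of the first \fbox{$w$}-segment, which is $j(w)+1$.) So your mathematics is sound and your approach coincides with the paper's; the one criticism is that you silently substituted the corrected indices and called them ``the claimed hook'' --- a blind proof should explicitly flag that the statement, read literally, is off by one, rather than assert the identification.
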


\subsubsection{Upper bounds on \texorpdfstring{$x^-_{l,r}$}{x-}}
\label{subsec:upper_bound_x_minus}

Let $C$ denote a chain that realizes a turn sequence $\sigma_{l,r}$.
To reach the $-x$-axis, we know by Lemma~\ref{lem:hook_pt_on_axis} that $\sigma_{l, r}$ must have at least one left hook, i.e.,
there exists a \fbox{$3$}-segment or \fbox{$-3$}-segment in $C$.
We will explain drawing algorithms that determine the endpoint $p$ 
on the $-x$-axis, which gives upper bounds on $x^-_{l,r}$, 
for the cases of $l-r = 0, 1, 2,$ and $l-r \geq 3$.

\begin{figure}[t]
\centering
\includegraphics[width=\textwidth]{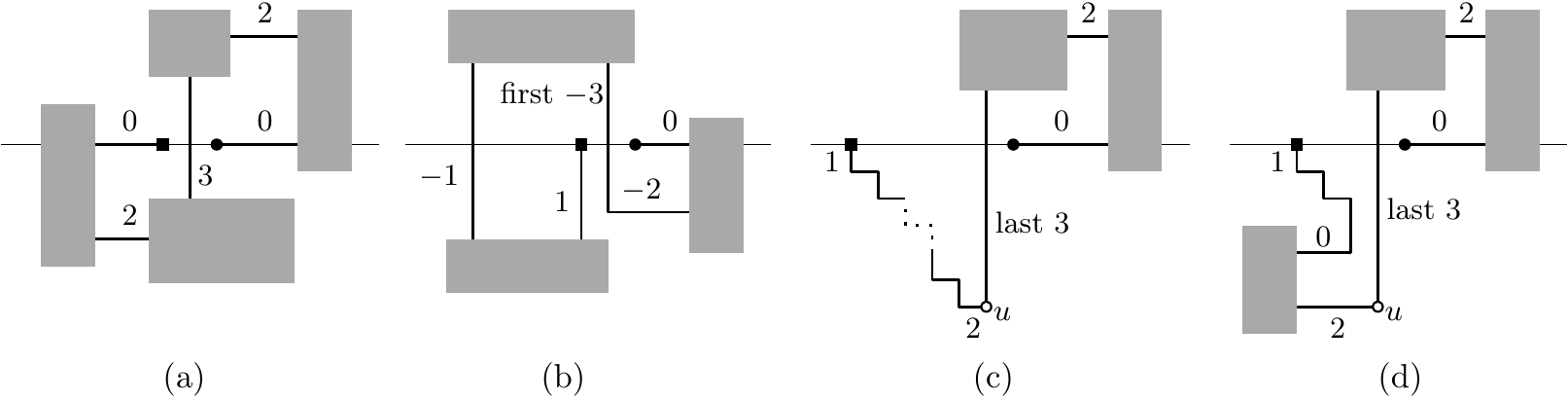}
\caption{Drawing algorithms of $C$ when $l-r = 0$ and $l-r = 1$:  
(a) Case 1.
(b) Case 2.1.
(c), (d) Case 2.2.
}
\label{fig:hyun1}
\end{figure}

\paragraph*{Case 1: $l-r=0$}

If there exists a \fbox{$3$}-segment in $C$, then we can draw $C$ as shown in Figure~\ref{fig:hyun1}(a). 
The black circle is $o$ and black square is $p$. 
The gray rectangles are bounding boxes enclosing subchains of $C$
which are drawn by $\lt\rt$-algorithm. 
The segments connecting the bounding boxes can be drawn without
any self-intersection because entry and exit points of the boxes
are on their sides by Lemma~\ref{lem:l-r-small} and 
are properly positioned
by stretching the boxes by horizontal or vertical cuts.
Locating \fbox{$3$}-segment at $x=-1$ allows $p$ to be at $(-2, 0)$, so
we have that $x^-_{l,r} \leq 2$. 
If there exist a \fbox{$-3$}-segment in $C$, we can bound $x^-_{l,r} \leq 2$ similarly by reflecting $C$ with respect to the $x$-axis. Thus we have that $x^-_{l,r}\leq 2$.

\paragraph*{Case 2: $l-r=1$}

We can draw $C$ in a similar way as Case 1. 

\subparagraph*{Case 2.1: There exist a \fbox{$-3$}-segment in $C$}

Draw $C$ as shown in Figure~\ref{fig:hyun1}(b) so that the first \fbox{$-3$}-segment crosses the $x$-axis at $(-1,0)$. 
The part after this \fbox{$-3$}-segment 
can be drawing like Figure~\ref{fig:hyun1}(b) so that $p$ is at $(-2,0)$. 
This is possible because the segment preceding the first \fbox{$-3$}-segment is a \fbox{$-2$}-segment. 
We thus have that $x^-_{l,r}\leq 2$.

\subparagraph*{Case 2.2: There exist no \fbox{$-3$}-segment in $C$}

For this case, there must be a \fbox{$3$}-segment. 
Select the last \fbox{$3$}-segment of $C$.
Let $u$ be its lower vertex. 
The drawing algorithm differs depending on whether $C[u,p]$ contains \fbox{$0$}-segment. 

If $C[u,p]$ contains no \fbox{$0$}-segments, then $C[u,p]$ only contains \fbox{$1$}-segments and \fbox{$2$}-segments, which is indeed a NW-staircase, so the position of $p$
is the width of the NW-staircase, i.e., the number of its horizontal segments, 
$_{-x}^{}m_{+y}^{p}$. See Figure~\ref{fig:hyun1}(c).
 
Otherwise, if $C[u,p]$ contains a \fbox{$0$}-segment, then select the last \fbox{$0$}-segment of the subchain. Its next two segments are 
\fbox{$1$}-segment and \fbox{$2$}-segment by Observation~\ref{obs:ik}, 
which form a NW-staircase up to $p$ as shown in Figure~\ref{fig:hyun1}(d).
 
For both cases, the width of the maximal NW-staircase containing $p$
affects the position of $p$. Since all segments of the staircase can be
drawn unit segments, we can locate $p$ at $(-(_{-x}^{}m_{+y}^{p}+1), 0)$,
so $x^-_{l,r}\leq \, _{-x}^{}m_{+y}^{p}+1$.

\paragraph*{Case 3: $l-r=2$}

We obtain a new chain $C'$ by deleting the last segment from $C$, 
then $\delta(C') = 1$ if the last turn is $\lt$ or $3$ otherwise.
After drawing $C'$ by algorithms used for the cases $l-r=1$ or $l-r=3$ which
will be explained below, adding a unit-length $-x$-segment at
the end of $C'$ gives a chain $C$. 
Thus the upper bound for $l-r=2$ becomes the upper bounds for $l-r=1$ or $l-r=3$ plus one.
We conclude that if $\sigma_n=\lt$, then $x^-_{l,r}\leq\, _{-x}^{}m_{+y}^{p}+2$, 
otherwise, if $\sigma_n=\rt$, then $x^-_{l,r}\leq 2$.

\paragraph*{Case 4: $l-r \geq 3$}

Consider the pairs
$\sigma_{i(w)+1}\sigma_{i(w)+2}=\lt\lt$
for $w=2,6,10,\ldots$.
To get a chain $C$, we split the turn sequence into
subsequences at those pairs,
draw them by the $\lt\rt$-algorithm, and merge them carefully.

At the first step, we take the subsequence 
$\sigma_1\cdots \sigma_{i(2)}$, and draw it as a chain $C_1$
by the $\lt\rt$-algorithm. Since $\delta_{i(2)} = 2$,
$C_1$ has entry and exit points on the left side of
its bounding box $B_1$ by Lemma~\ref{lem:l-r-small}.
We can place $B_1$ as in Figure~\ref{fig:l-r-algorithm-steps}(a) 
so that the entry point is on the $+x$-axis.
We connect $o$ with the entry point by a horizontal segment.
Since $\sigma_{i(2)+1}\sigma_{i(2)+2}=\lt\lt$ by Observation~\ref{obs:ik}, 
we can draw the corresponding segments so that 
the vertical segment passes through $(-1,0)$ 
as in Figure~\ref{fig:l-r-algorithm-steps}(a).
Note here that $\delta_{i(2)+1} = 3$ and $\delta_{i(2)+2}=4$.

\begin{figure}[th]
\centering
    \includegraphics[width=\textwidth]{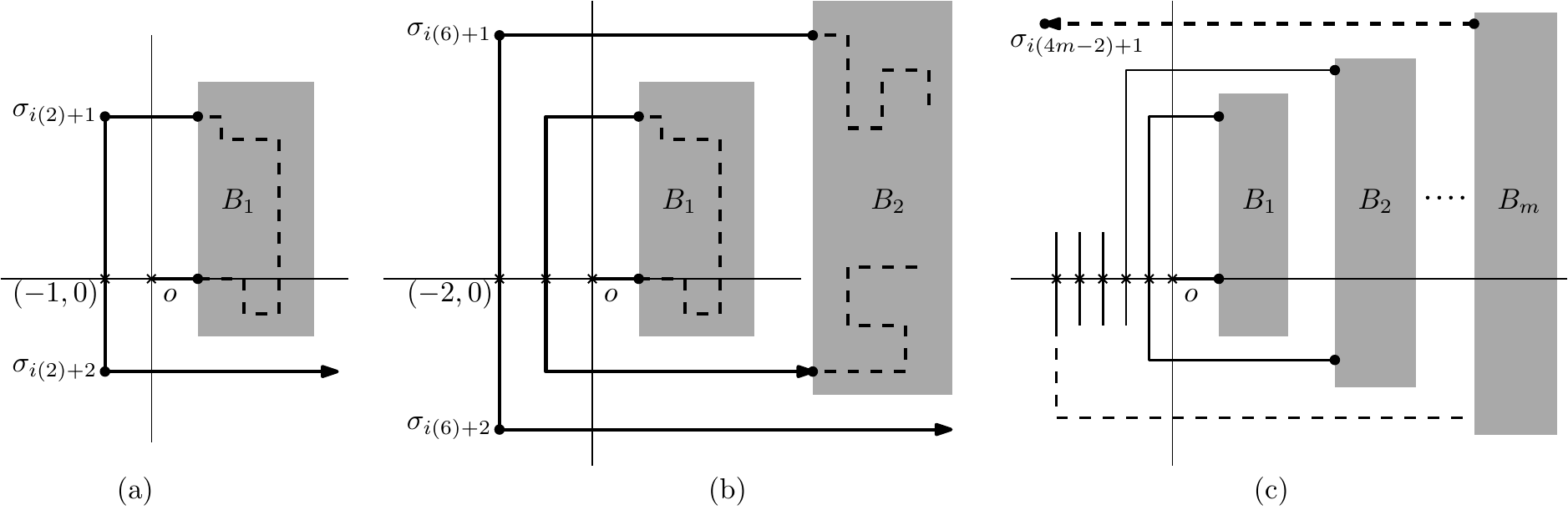}
\caption{The drawing steps for the turn sequence whose excess number is at least three.}
\label{fig:l-r-algorithm-steps}
\end{figure}

The next steps are clear. 
We take the next subsequence $\sigma_{i(2)+3}\cdots\sigma_{i(6)}$, 
draw its chain $C_2$, place the bounding box $B_2$
in the right of $B_1$, and draw the next three segments
corresponding to $\sigma_{i(6)+1}\sigma_{i(6)+2}$,
as in Figure~\ref{fig:l-r-algorithm-steps}(b),
without any self-intersection.
We repeat this until we place the bounding box $B_m$
of the chain $C_m$, 
where $m = \lfloor \frac{l-r+2}{4}\rfloor$. 
See Figure~\ref{fig:l-r-algorithm-steps}(c).
Note that $m\geq 1$ from the assumption $l-r\geq 3$.

The last step is to place $B_{m+1}$ of the chain $C_{m+1}$
for the remaining subsequence 
$\sigma^* = \sigma_{i(4m-2)+1}\cdots\sigma_{l+r}$ 
and draw the last segments
to reach the point $p = (-x, 0)$ with $x = \lfloor \frac{l-r+c}{4}\rfloor$
for some positive integer $c$.

We need to handle this step in different ways according to
the excess number $\delta(\sigma^*)\in \{0,1,2,3\}$.
It is easy to place $B_{m+1}$ and draw the last segment
when $\delta(\sigma^*) = 1$ or $\delta(\sigma^*)=2$ (equivalently, $l-r \equiv 3$ or $0 \pmod{4}$); 
see Figure~\ref{fig:l-r-algorithm-last}(a)-(b).
The final chain reaches $p = (-m,0)$.
When $\delta(\sigma^*) = 3$ (equivalently, $l-r \equiv 1 \pmod{4}$), we split it once again into two
subsequences $\sigma'$ and $\sigma''$ 
with $\sigma^*=\sigma'\sigma''$ 
such that $\delta(\sigma') = 2$ and $\delta(\sigma'') = 1$.
Draw $C'$ for $\sigma'$ and $C''$ for $\sigma''$ by
the $\lt\rt$-algorithm, and
place their boxes as in Figure~\ref{fig:l-r-algorithm-last}(c)
so that $p = (-m, 0)$.
For these three cases, we have that 
$x^-_{l,r} \leq m = \lfloor \frac{l-r+2}{4}\rfloor$.

\begin{figure}[th]
\centering
    \includegraphics[width=\textwidth]{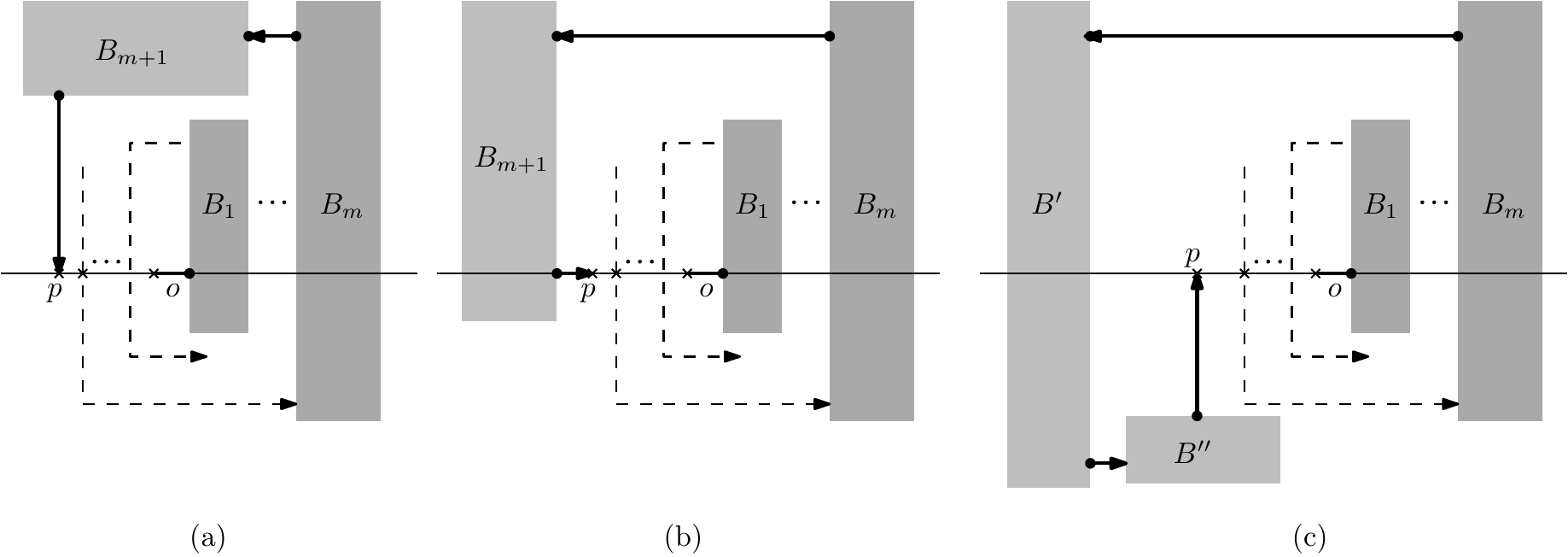}
\caption{Handling the last subsequence $\sigma^*$. 
	(a) When $\delta(\sigma^*)=1$, the last segment 
	enters $p=(-m, 0)$ from the north. 
	(b) When $\delta(\sigma^*)=2$, the last segment 
	enters $p=(-m, 0)$ from the west along the negative $x$-axis. 	
	(c) When $\delta(\sigma^*)=3$, $B_{m+1} = B' \cup B''$, 
	so the last segment enters $p=(-m, 0)$ from the south.}
\label{fig:l-r-algorithm-last}
\end{figure}

The last case that $\delta(\sigma^*) = 0$ (equivalently, $l-r \equiv 2 \pmod{4}$) 
should be handled more carefully.
Let $\sigma'$ be the subsequence obtained by deleting the last
turn $\sigma_n = \sigma_{l+r}$ from the original sequence $\sigma$.
Suppose first that $\sigma_{l+r} = \rt$.
Then $\delta(\sigma') = l-r+1$.
We draw $\sigma'$ in the way explained above,
then we can get a chain $C'$ as in Figure~\ref{fig:l-r-algorithm-last}(a) 
because $m = \lfloor \frac{l-r+3}{4}\rfloor$ and 
the excess number of the $(m+1)$th subsequence is now one.
The endpoint of $C'$ reaches $(-m, 0)$. 
To get $C$,
we simply extend $C'$ with a unit segment to the west
from its endpoint, which makes a bend for $\rt$. 
Then $C$ can reach $p=(-m-1,0)$,
thus $x^-_{l,r}\leq \lfloor \frac{l-r+3}{4}\rfloor+1 = \lfloor \frac{l-r+2}{4}\rfloor+1$.
Note that $l-r \equiv 2 \pmod{4}$.
For the other case that the last turn is $\lt$,
we can get $C'$ as in Figure~\ref{fig:l-r-algorithm-last}(c)
since $m = \lfloor \frac{l-r+1}{4}\rfloor$ and the excess
number of the $(m+1)$th subsequence is three.
We also add a unit segment from the endpoint
to the west to reach $p = (-m-1,0)$, 
thus $x^-_{l,r}\leq \lfloor \frac{l-r+1}{4}\rfloor+1 = \lfloor \frac{l-r+2}{4}\rfloor$.

\begin{theorem}
\label{thm:bounding-x-minus}
For any turn sequence $\sigma_{l,r}$ with $l-r \geq 0$, $x^-_{l,r}$ is bounded 
as follows:
\begingroup
\renewcommand{\arraystretch}{1.1} 
\begin{center}
\begin{tabular}{|ll|l|l|ll|ll|}
\hline
\multicolumn{2}{|l|}{}                           & $l-r=0$                  & $l-r=1$                                     & \multicolumn{2}{l|}{$l-r=2$}                                 & \multicolumn{2}{l|}{$l-r\geq 3$}                                                                                                                   \\ \hline
\multicolumn{2}{|l|}{\multirow{2}{*}{$x^-_{l,r}\leq$}} & \multirow{2}{*}{$2$} & \multirow{2}{*}{$_{-x}^{}m_{+y}^{p}+1$} & \multicolumn{1}{l|}{If $\sigma_n=\rt$} & $2$                    & \multicolumn{1}{l|}{\begin{tabular}[c]{@{}l@{}}If $\sigma_n=\rt$ and\\ $l-r\equiv 2 \pmod{4}$\end{tabular}} & $\left\lfloor \frac{l-r+2}{4}\right\rfloor+1$ \\ \cline{5-8} 
\multicolumn{2}{|l|}{}                                &                      &                                         & \multicolumn{1}{l|}{If $ \sigma_n=\lt$} & $_{-x}^{}m_{+y}^{p}+2$ & \multicolumn{1}{l|}{Otherwise}                                                                         & $\left\lfloor \frac{l-r+2}{4}\right\rfloor$   \\ \hline
\end{tabular}
\end{center}
\endgroup
\end{theorem}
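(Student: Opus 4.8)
The plan is to treat the theorem as a consolidation of the four drawing constructions developed above: for each regime of $l-r$ (and, where it matters, each value of the last turn $\sigma_n$), I have exhibited an explicit simple rectilinear chain realizing $\sigma_{l,r}$ whose endpoint lies on the $-x$-axis at the claimed distance from $o$. Because $x^-_{l,r}$ records the distance from $o$ to the \emph{closest} reachable point, producing any realization that reaches $(-d,0)$ already certifies $x^-_{l,r}\le d$; so the whole argument reduces to checking, case by case, that each construction is a valid (integral, non-self-intersecting) chain with the advertised endpoint, and then reading off the value of $d$.

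First I would dispatch the small-excess regimes. For $l-r=0$ I invoke Figure~\ref{fig:hyun1}(a): a hook (a \fbox{$3$}- or \fbox{$-3$}-segment, guaranteed by Lemma~\ref{lem:hook_pt_on_axis}) is placed on the line $x=-1$, and the remaining pieces are drawn as black boxes by the $\lt\rt$-algorithm whose entry and exit points sit on the box sides by Lemma~\ref{lem:l-r-small} and can be aligned by stretching along horizontal or vertical cuts; this lands $p=(-2,0)$, giving the bound $2$. For $l-r=1$ I split on whether a \fbox{$-3$}-segment occurs: if it does, Figure~\ref{fig:hyun1}(b) again reaches $(-2,0)$; if it does not, a \fbox{$3$}-segment must be present, the suffix $C[u,p]$ from the last such segment is a NW-staircase (Figure~\ref{fig:hyun1}(c)--(d)), and its width $_{-x}^{}m_{+y}^{p}$ fixes $p$ at $(-(_{-x}^{}m_{+y}^{p}+1),0)$. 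The case $l-r=2$ I reduce to these by deleting $\sigma_n$: removing a left turn drops the excess to $1$ (yielding $_{-x}^{}m_{+y}^{p}+1$), while removing a right turn raises it to $3$ (yielding the $l-r=3$ value $\lfloor(l-r+2)/4\rfloor=1$); re-appending a unit $-x$-segment adds one, giving $_{-x}^{}m_{+y}^{p}+2$ and $2$, respectively.

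For $l-r\ge 3$ I would run the splitting scheme: cut $\sigma$ at the forced hooks $\sigma_{i(w)+1}\sigma_{i(w)+2}=\lt\lt$ for $w=2,6,10,\ldots$ (hooks by Observation~\ref{obs:ik}), draw each block by the $\lt\rt$-algorithm into a bounding box, stack the boxes left-to-right as in Figure~\ref{fig:l-r-algorithm-steps}, and let each hook advance the chain one unit further along $-x$. After placing $B_m$ with $m=\lfloor(l-r+2)/4\rfloor$, the remaining block $\sigma^*$ is handled according to $\delta(\sigma^*)\in\{0,1,2,3\}$ (equivalently $l-r\bmod 4$): for $\delta(\sigma^*)\in\{1,2,3\}$ the last segments of Figure~\ref{fig:l-r-algorithm-last}(a)--(c) reach $p=(-m,0)$, giving $\lfloor(l-r+2)/4\rfloor$.

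The step I expect to be the main obstacle, and where the extra additive constant enters, is the case $\delta(\sigma^*)=0$, i.e.\ $l-r\equiv 2\pmod 4$. Here no single block placement lands exactly on the axis, so I would delete $\sigma_n$, draw the shortened sequence by the generic scheme, and re-attach a westward unit segment. The bookkeeping is delicate: deleting $\sigma_n=\rt$ raises the excess to $l-r+1$ and makes the final block have excess $1$ (Figure~\ref{fig:l-r-algorithm-last}(a)) with $m=\lfloor(l-r+3)/4\rfloor$, so the extra unit yields $\lfloor(l-r+2)/4\rfloor+1$; deleting $\sigma_n=\lt$ lowers the excess and makes the final block have excess $3$ (Figure~\ref{fig:l-r-algorithm-last}(c)) with $m=\lfloor(l-r+1)/4\rfloor$, so the extra unit only yields $\lfloor(l-r+2)/4\rfloor$. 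Throughout, the one recurring thing to verify is simplicity of the assembled chain, which follows because the Box and TwoBox constructions keep every block inside a bounding box separable by an axis-parallel cut and keep each block's entry and exit on a box side, so the connecting and stretching operations never force a crossing.
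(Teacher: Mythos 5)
Your proposal is correct and takes essentially the same route as the paper's own argument: the identical case split on $l-r\in\{0,1,2\}$ and $l-r\geq 3$ (with the sub-splits on $\sigma_n$ and on $\delta(\sigma^*)=l-r \bmod 4$), the same hook-placement and $\lt\rt$-algorithm box constructions from Figures~\ref{fig:hyun1}, \ref{fig:l-r-algorithm-steps}, and \ref{fig:l-r-algorithm-last}, and the same delete-the-last-turn reduction for $l-r=2$ and for the $l-r\equiv 2 \pmod{4}$ case. Your bookkeeping there (bound $\lfloor\frac{l-r+2}{4}\rfloor+1$ when $\sigma_n=\rt$, and $\lfloor\frac{l-r+2}{4}\rfloor$ when $\sigma_n=\lt$) matches the paper's computation exactly.
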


\subsubsection{Upper bounds on \texorpdfstring{$x^+_{l,r}$}{x+}}
\label{subsec:upper_bound_x_plus}

We draw a chain in a similar way as the algorithm 
we did for bounding $x^-_{l,r}$.
Since the first segment of $C$ is a $+x$-segment, 
any \emph{$+x$-monotone} prefix (possibly suffix) of the turn sequence, 
i.e., a turn sequence with no vertical hook, 
can prevent the chain from
reaching a point close to $o$ on the $+x$-axis.
This is the main difference with the case of $x^-_{l,r}$.
 
First, consider a special situation that 
the whole sequence $\sigma_{l,r}$ is $+x$-monotone, i.e., 
no vertical hook in $\sigma_{l,r}$.
Note that this might happen only when $l-r = 0$ or $l-r = 1$. 
For this case, $m^o_{+x} = m^p_{+x}$. 
We can draw a chain with unit horizontal segments
so that $p$ is at $(m^o_{+x}, 0)$. 
We thus have that $x^+_{l,r} \leq m^o_{+x}$.

Otherwise, i.e., $\sigma_{l,r}$ has least least one vertical hook,
let $\sigma' = \sigma_1\cdots\sigma_g$
be the maximal $+x$-monotone prefix of $\sigma$
so that $\sigma_g\sigma_{g+1}$ is the first vertical hook.
As shown in Figure~\ref{fig:l-r-algo-for-xplus},
we have two cases: $\sigma_g\sigma_{g+1} = \lt\lt$
(up-$\lt\lt$ hook) or $\sigma_g\sigma_{g+1}=\rt\rt$ 
(down-$\rt\rt$ hook).
We draw $C'$ for $\sigma'$ with unit segments, and
$C''$ for the remaining subsequence $\sigma''$
by the winding scheme we used for bounding $x^-_{l,r}$.
Note that $l-r-2\leq \delta(\sigma'')\leq l-r+2$.

\begin{figure}[h]
\centering
    \includegraphics[width=\textwidth]{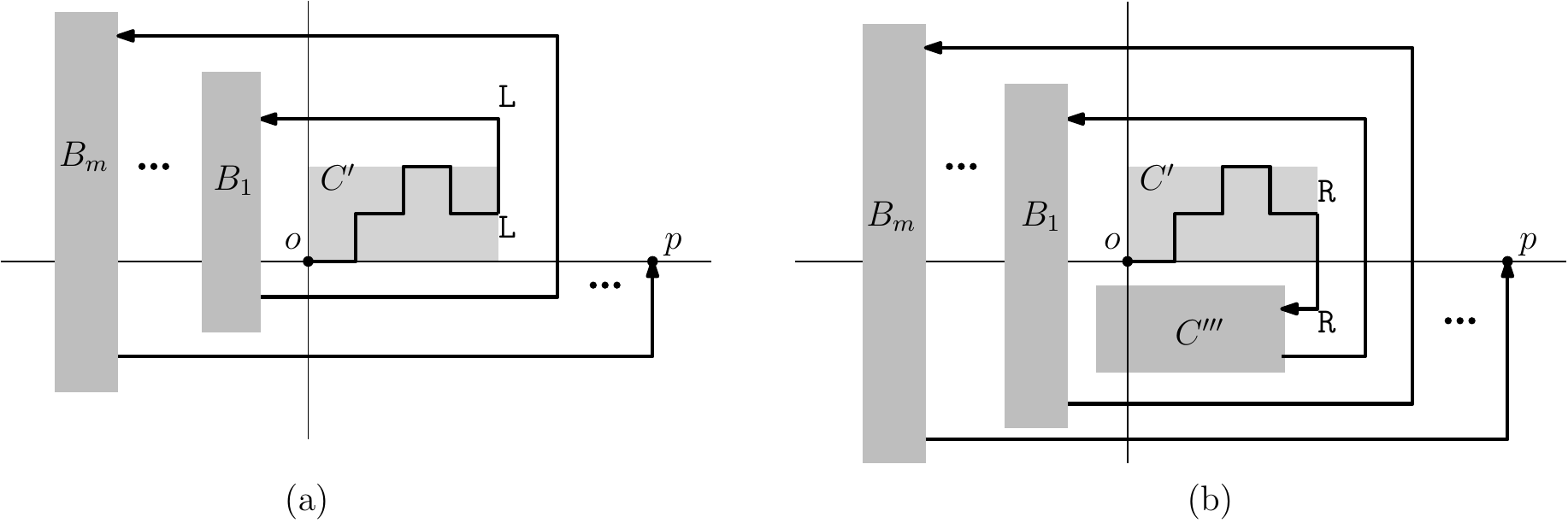}
\caption{Reaching the $+x$-axis. 
	Maximal $+x$-monotone prefixes followed by (a)  $\lt\lt$
	and (b) $\rt\rt$.}
\label{fig:l-r-algo-for-xplus}
\end{figure} 

For $\sigma_g\sigma_{g+1}=\rt\rt$, before applying 
the winding scheme, we draw the first part of $\sigma''$, that is,
$\sigma_{g+2}\cdots\sigma_{i(2)}$ as a chain $C'''$ using the
$\lt\rt$ algorithm,
then connect it with $C'$ by $\rt\rt$, and
extend its exit point to the bend corresponding to $\sigma_{i(2)+1}$.
The remaining steps are the same as before.

The length of $\overline{op}$ is determined by 
the width of $C'$ plus the number of vertical segments in $C''$ intersected 
by the $+x$-axis; the latter could be increased by one 
if the last segment of $C$ is a $+x$-segment, i.e., $l-r \equiv 0 \pmod 4$.
The former is just the number of $+x$-segments of $C'$,
denoted by $m^o_{+x}$, which is $\lceil |\sigma'|/2\rceil$.
The latter differs depending on the value of $l-r-2$.
If $l-r-2\geq 3$, then it is at most 
$\lfloor \frac{\delta(\sigma'')+2}{4}\rfloor+1 \leq 
\lfloor \frac{(l-r+2)+2}{4}\rfloor + 1=\lfloor \frac{l-r}{4}\rfloor +2$
by Theorem~\ref{thm:bounding-x-minus}.
Then we have that $x^+_{l,r} \leq m^o_x + \lfloor \frac{l-r}{4}\rfloor +2
\leq m^o_x + \lfloor \frac{l-r+2}{4}\rfloor +2$.
We will explain the remaining case that $-2 \leq l-r-2 < 3$, 
i.e., $0\leq l-r < 5$, later.

There is a way to improve this bound as follows.
We simply switch the role of $o$ and $p$, 
and draw a rectilinear chain $\bar{C}$ that realizes the turn sequence
$\bar{\sigma} = \bar{\sigma_n}\cdots\bar{\sigma_1}$
and starting from $p$, 
where $\bar{\sigma_i}$ is the opposite turn from $\sigma_i$.
Imagine that $p$ is the origin, then $o$ is now on the $-x$-axis.
If the first segment from $p$ is not a $+x$-segment 
(equivalently, $l-r \not\equiv 2 \pmod 4$),
divide $\bar{\sigma}$ into two subsequences
$\bar{\sigma'}$ and $\bar{\sigma''}$,
where $\bar{\sigma'}$ is the maximal $-x$-monotone prefix of $\bar{\sigma}$,
and $\bar{\sigma''}$ is the remaining subsequence.
We draw $\bar{C'}$ for $\bar{\sigma'}$ with unit segments,
and draw $\bar{C''}$ for $\bar{\sigma''}$, 
by the winding scheme we used for bounding $x^-_{l,r}$.
Note that $l-r-3 \leq \delta(\bar{\sigma''}) \leq l-r+3$.

The length of $\overline{op}$ is equal to
the width of $\bar{C'}$ 
plus the number of vertical segments in $\bar{C''}$ intersected by the $-x$-axis plus one, 
because the last segment of $\bar{C}$ is always a $-x$-segment.
The former is the number of $-x$-segments of $\bar{C'}$, 
which is the maximal $-x$-monotone subchain of $\bar{C}$ containing $p$.
We can easily observe that it is equal to the number of $+x$-segments in the maximal $+x$-monotone subchain of $C$ containing $p$, denoted by $m^p_{+x}$.
The latter differs depending on the value of $l-r-3$.
We will explain the case $-3 \leq l-r-3 < 3$, i.e., $0\leq l-r < 6$, later.
If $l-r-3\geq 3$, by Theorem~\ref{thm:bounding-x-minus}, we can bound $x^+_{l,r}$ as follows:

\begin{itemize}
\item If $\bar{\sigma_1} = \rt$, i.e., $\sigma_1 = \lt$,
then $x^+_{l,r} \leq 
m^p_{+x} + \lfloor \frac{\delta(\bar{\sigma''})+2}{4}\rfloor \leq 
m^p_{+x} + \lfloor \frac{l-r+2}{4}\rfloor + 1$.

\item If $\bar{\sigma_1} = \lt$, i.e., $\sigma_1 = \rt$,
then $x^+_{l,r} \leq
m^p_{+x} + \lfloor \frac{\delta(\bar{\sigma''})+2}{4}\rfloor+1 \leq 
m^p_{+x} + \lfloor \frac{l-r+2}{4}\rfloor +2$.
\end{itemize}

Otherwise, if the first segment from $p$ is a $+x$-segment 
(equivalently, $l-r \equiv 2 \pmod 4$), then we can apply
the winding scheme used for bounding $x^-_{l,r}$.
By Theorem~\ref{thm:bounding-x-minus}, we can bound $x^+_{l,r}$ as follows:

\begin{itemize}
\item If $l-r = 2$ and $\bar{\sigma_1} = \rt$, i.e., $\sigma_1 = \lt$, 
then $x^+_{l,r} \leq _{+x}^{}m_{+y}^{o} + 2$.

\item If $l-r = 2$ and $\bar{\sigma_1} = \lt$, i.e., $\sigma_1 = \rt$,
then $x^+_{l,r} \leq 2$.

\item If $l-r \geq 6$ and $\bar{\sigma_1} = \rt$, i.e., $\sigma_1 = \lt$,
then $x^+_{l,r} \leq \lfloor \frac{l-r+2}{4}\rfloor$.

\item If $l-r \geq 6$ and $\bar{\sigma_1} = \lt$, i.e., $\sigma_1 = \rt$,
then $x^+_{l,r} \leq \lfloor \frac{l-r+2}{4}\rfloor+1$.
\end{itemize}

We now have the following result. 

\begin{lemma}
\label{lem:bounding-x-plus}

For any turn sequence $\sigma_{l,r}$ with $l-r=2$ or $l-r \geq 6$, 
$x^+_{l,r}$ are bounded as follows:
\begingroup
\renewcommand{\arraystretch}{1.3} 
\begin{center}
\begin{tabular}{|l|l|l|l|l|} 
\hline
\multicolumn{2}{|l|}{}   & $l-r=2$                  & \begin{tabular}[c]{@{}l@{}}$l-r \geq 6$~and \\$l-r \equiv 2 \pmod 4$\end{tabular} & \begin{tabular}[c]{@{}l@{}}$l-r \geq 6$~and\\$l-r \not\equiv 2 \pmod 4$\end{tabular}  \\ 
\hline
\multirow{2}{*}{$x^+_{l,r}\leq$} & If $\sigma_1=\lt$ & $_{+x}^{}m_{+y}^{o} + 2$ & $\left\lfloor \frac{l-r+2}{4}\right\rfloor$                                       & $\min\{m^o_{+x}+2, m^p_{+x}+1\}+ \left\lfloor \frac{l-r+2}{4}\right\rfloor$           \\ 
\cline{2-5}
                                & If $\sigma_1=\rt$ & $2$                      & $\left\lfloor \frac{l-r+2}{4}\right\rfloor + 1$                                   & $\min\{m^o_{+x}+2, m^p_{+x}+2\}+ \left\lfloor \frac{l-r+2}{4}\right\rfloor$           \\
\hline
\end{tabular}
\end{center}
\endgroup
\end{lemma}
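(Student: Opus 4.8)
The plan is to assemble the six entries of the table from the three drawing strategies developed in the paragraphs immediately above, organizing the argument by the three columns (the value of $l-r$) and, within each, by the two rows ($\sigma_1 = \lt$ versus $\sigma_1 = \rt$). No genuinely new construction is required; what remains is to record which strategy yields each entry, to take minima where two strategies both apply, and to track the additive constants carefully. Throughout, self-intersection freedom is immediate: each box is drawn by the $\lt\rt$-algorithm with entry and exit points on its sides by Lemma~\ref{lem:l-r-small}, and consecutive boxes are separated, aligned, and positioned by stretching along horizontal or vertical cuts exactly as in the $x^-$ construction, so the merged chain stays simple.

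First I would treat the column $l-r \geq 6$ with $l-r \not\equiv 2 \pmod 4$. Here the first segment of the reversed chain $\bar{C}$ leaving $p$ is not a $+x$-segment, so both the forward and the reverse constructions apply. The forward construction---draw the maximal $+x$-monotone prefix $\sigma'$ with unit segments and wind the remainder $\sigma''$ by the $x^-$ scheme---gives $x^+_{l,r} \leq (m^o_{+x}+2)+\lfloor (l-r+2)/4\rfloor$ for either value of $\sigma_1$, using $\delta(\sigma'') \leq l-r+2$ and Theorem~\ref{thm:bounding-x-minus}. The reverse construction---draw $\bar{C}$ for $\bar{\sigma}$ from $p$, take its maximal $-x$-monotone prefix with unit segments, and wind the rest---gives $(m^p_{+x}+1)+\lfloor (l-r+2)/4\rfloor$ when $\sigma_1 = \lt$ and $(m^p_{+x}+2)+\lfloor (l-r+2)/4\rfloor$ when $\sigma_1 = \rt$, the row distinction being inherited from the corresponding rows of Theorem~\ref{thm:bounding-x-minus} together with the forced final $-x$-segment of $\bar{C}$. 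Since both constructions yield valid chains, taking the smaller endpoint distance produces exactly the two third-column entries $\min\{m^o_{+x}+2,\,m^p_{+x}+1\}+\lfloor (l-r+2)/4\rfloor$ and $\min\{m^o_{+x}+2,\,m^p_{+x}+2\}+\lfloor (l-r+2)/4\rfloor$.

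Next I would handle the two columns with $l-r \equiv 2 \pmod 4$, namely $l-r=2$ and $l-r\geq 6$ with $l-r\equiv 2$. For these the first segment of $\bar{C}$ leaving $p$ is a $+x$-segment, so viewing $p$ as the origin and $o$ as a point on the $-x$-axis, we may apply the $x^-$ winding scheme of Theorem~\ref{thm:bounding-x-minus} directly to $\bar{\sigma}$. Reading off that theorem according to $\sigma_1$ gives $\lfloor (l-r+2)/4\rfloor$ when $\sigma_1=\lt$ and $\lfloor (l-r+2)/4\rfloor+1$ when $\sigma_1=\rt$ for $l-r\geq 6$, and the base-case values $_{+x}^{}m_{+y}^{o}+2$ (for $\sigma_1=\lt$) and $2$ (for $\sigma_1=\rt$) for $l-r=2$, matching the second and first columns respectively. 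The hypothesis $l-r\geq 6$ is precisely what places us in the $l-r\geq 3$ regime of Theorem~\ref{thm:bounding-x-minus}, where Observations~\ref{obs:ik}--\ref{obs:ik2} supply the $\lt\lt$ and $\rt\rt$ transitions needed to place consecutive boxes; $l-r=2$ is recorded separately as the small base case, which is why the lemma excludes $l-r\in\{3,4,5\}$ and $l-r\in\{0,1\}$.

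The routine but error-prone part, which I expect to be the main obstacle, is the bookkeeping of the additive constants: deciding in each case whether the construction must pay an extra unit for a forced final $+x$- or $-x$-segment, and confirming that the excess of the wound suffix stays within the range for which Theorem~\ref{thm:bounding-x-minus} delivers the claimed floor term (namely $\delta(\sigma'')\in[l-r-2,\,l-r+2]$ for the forward construction and $\delta(\bar{\sigma''})\in[l-r-3,\,l-r+3]$ for the reverse one). Once each of the six cases is verified and the minima in the third column are justified by the simultaneous validity of both constructions, collecting the cases completes the proof.
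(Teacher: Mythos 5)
Your proposal is correct and follows essentially the same route as the paper: the forward construction (maximal $+x$-monotone prefix drawn with unit segments, remainder wound via the scheme of Theorem~\ref{thm:bounding-x-minus}), the reversed construction from $p$ for $\bar{\sigma}$ with the case split on whether $l-r \equiv 2 \pmod 4$, and taking the minimum of the two bounds in the third column, with the same excess ranges $\delta(\sigma'') \in [l-r-2, l-r+2]$ and $\delta(\bar{\sigma''}) \in [l-r-3, l-r+3]$ and the same additive-constant bookkeeping. The paper's Lemma~\ref{lem:bounding-x-plus} is indeed just a tabulation of these bounds, so no further argument is needed.
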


We now have remaining cases $l-r = 0, 1, 3, 4, 5$ which are not explained yet.
We will explain each case in the order of $l-r = 1, 0, 3, 5, 4$.

\paragraph*{Case 1: $l-r=1$}
If $\sigma$ is $+x$-monotone, i.e., no vertical hook,
then it can reach $p = (m^o_{+x}, 0)$ as mentioned earlier. 
We here suppose that $\sigma$ has a vertical hook.

We divide $\sigma$ into $\sigma'$ and $\sigma''$ as same as above;
$\sigma' = \sigma_1\cdots\sigma_g$ is the maximal $+x$-monotone prefix of $\sigma$
such that $\sigma_g\sigma_{g+1}$ is the first vertical hook, 
and $\sigma''=\sigma_{g+1}\cdots\sigma_n$.
Note that $l-r-2 \leq \delta(\sigma'') \leq l-r+2$, 
i.e., $-1 \leq \delta(\sigma'')\leq 3$.
We now have two subcases according to the existence of the right hook in $\sigma''$.

\subparagraph*{Case 1.1: $\sigma''$ contains a right hook.}
If $\delta_{g+1}=-2$, i.e., $\sigma_g\sigma_{g+1}=\rt\rt$,  
then $\delta(\sigma'') = \delta_{l+r} - \delta_{g+1} = 3$.
As in Figure~\ref{fig:l-r-algo-for-xplus}(b),
draw a chain $C'''$ below the $x$-axis.
The last segment from $C'''$ is \fbox{$0$}-segment, so
we can place $p$ just right to $C'$ on the $x$-axis after a final left turn.
This guarantees that $x^+_{l,r} \leq m^o_{+x}+1$. 
Otherwise, if $\delta_{g+1}=2$, i.e., $\sigma_g\sigma_{g+1}=\lt\lt$, 
then $\delta(\sigma'') = \delta_{l+r} - \delta_{g+1} = -1$.
The method here is a symmetric and $180$-degrees rotating version of the one 
to reach the $-x$-axis when $l-r=1$ in Figure~\ref{fig:hyun1}(b)-(d).
The difference is the existence of \fbox{$5$}-segments.
If \fbox{$5$}-segment exists, then we draw four bounding boxes by $\lt\rt$-algorithm,
place them as in Figure~\ref{fig:tarkcase1.1}(a). 
This gives $x^+_{l,r} \leq m^o_{+x}+2$.
For the other case that no \fbox{$5$}-segment exists,
the distance to $p$ can be affected as well by the length of the staircase
as in Figure~\ref{fig:tarkcase1.1}(b).
Since \fbox{$-1$}-segment exists in $\sigma''$, we draw the last \fbox{$-1$}-segment
so that it passes the $+x$-axis in the right of $C'$.
Two situations can occur depending on the existence of \fbox{$2$}-segments; 
if not exist, it directly reaches to $p$ via a NE-staircase (drawn as a black chain),
otherwise it goes around then take a NE-staircase after 
the last \fbox{$2$}-segment (drawn as red dashed chain and box).
For this case, the distance to $p$ is determined by 
the width of the maximal $+x$-monotone chain plus the width of the NE-staircase
containing $p$.
Thus, we have that
$x^+_{l,r} \leq m^o_{+x}+\, _{+x}^{}m_{+y}^{p}+1$.

\subparagraph*{Case 1.2: $\sigma''$ contains no right hook.}

In this case, we cannot apply the winding scheme used before.
We first know that $\delta_{g+1} \neq -2$;
otherwise $\delta(\sigma'') = \delta_{l+r} - \delta_{g+1} = 1 - (-2) = 3$, which
means there exist a right hook in $\sigma''$, a contradiction.
We now have that $\delta_{g+1}=2$.
Moreover, for all $g+1 < k < l+r$, $\delta_k \ne -1$ holds.
Because if there exist such $k$ that $\delta_{k} = -1$, 
then $\delta_{k} - \delta_{g+1} = 3$, so a right hook exists.
We conclude that after the turn $\sigma_{i(2)+1}$, 
there exist only \fbox{$0$}-segments and \fbox{$1$}-segments in $C$, and 
these segments form a NE-staircase.
We can draw $C$ as shown in Figure~\ref{fig:tarknew2}(a), and we have
$x^+_{l,r} \leq _{+x}^{}m^o_{+y} + _{+x}^{}m_{+y}^{p}+1$.
In particular, if $\sigma_1 = \rt$, then we have $x^+_{l,r} \leq \, _{+x}^{}m_{+y}^{p}+1$
because $_{+x}^{}m^o_{+y} = 0$.

\vspace*{2mm}
Now, let us divide $\bar{\sigma}$ into 
$\bar{\sigma'} = \bar{\sigma_n} \cdots \bar{\sigma_h}$
and $\bar{\sigma''} = \bar{\sigma_{h-1}} \cdots \bar{\sigma_1}$,
which are the longest $+x$-monotone prefix of $\bar{\sigma}$,
equivalently, the longest $+x$-monotone suffix of $\sigma$, 
and the remaining subsequence, respectively. We have two subcases
according to the existence of the left hook of $\bar{\sigma''}$.

\begin{figure}[t]
\centering
\includegraphics[]{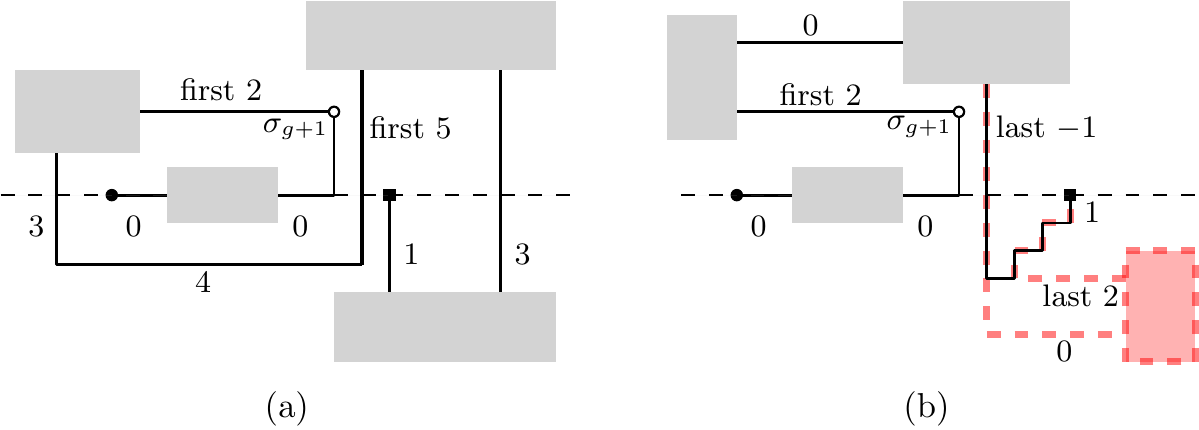}
\caption{Case 1.1 that $\sigma''$ has a right hook and $\sigma_g\sigma_{g+1}=\lt\lt$:  
(a) $C$ has a \fbox{$5$}-segment.
(b) $C$ has no \fbox{$5$}-segment.
 }
\label{fig:tarkcase1.1}
\end{figure}

\subparagraph*{Case 1.3: $\bar{\sigma''}$ contains a left hook}

Applying the same method we used in Case 1.1, 
We can have that if $\sigma_1 = \lt$, then
$x^+_{l,r} \leq m^p_{+x}+\,_{+x}^{}m_{+y}^{o}+2$,
otherwise, if $\sigma_1 = \rt$, then $x^+_{l,r} \leq m^p_{+x}+\,_{+x}^{}m_{-y}^{o}+2$.

\begin{figure}[t]
\centering
\includegraphics[width=\textwidth]{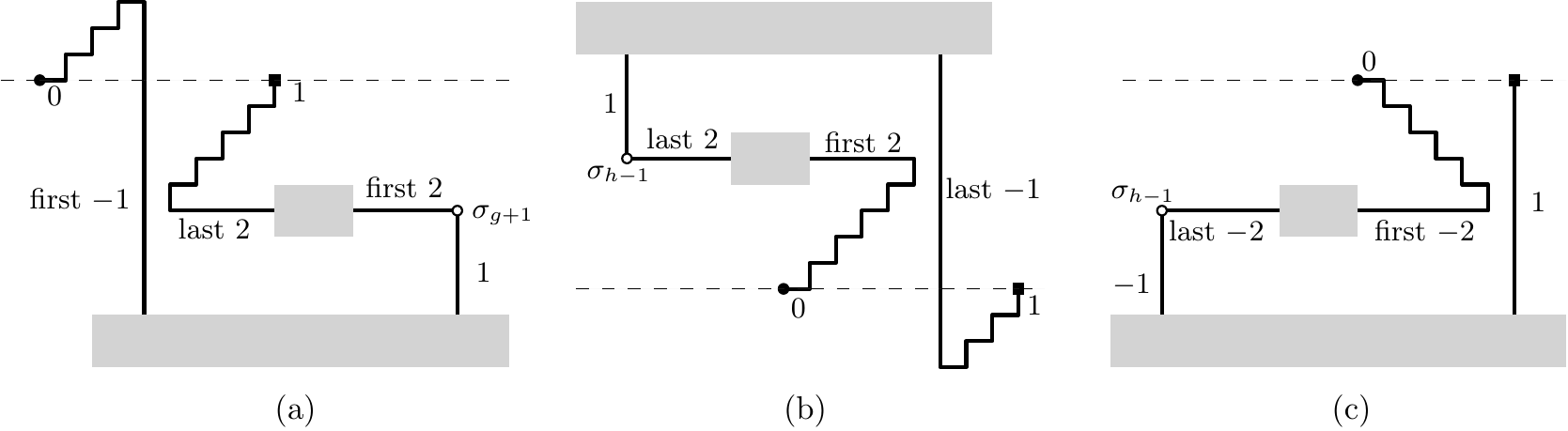}
\caption{Case 1 that $l-r=1$.
(a) Case 1.2.
(b) Case 1.4, $\sigma_1=\lt$.
(c) Case 1.4, $\sigma_1=\rt$.
 }
\label{fig:tarknew2}
\end{figure}

\subparagraph*{Case 1.4: $\bar{\sigma''}$ contains no left hook}

In this case, we need another drawing algorithm other than 
the winding scheme used for bounding $x^-_{l,r}$.
If $\sigma_1 = \lt$, then we have $\delta_{h-2} = 2$ and
$\delta_k \ne -1$ for all $1 < k < h-2$.
We know that before the turn $\sigma_{j(2)}$, 
there exist only \fbox{$0$}-segments and \fbox{$1$}-segments in $C$, and 
they form a NE-staircase.
We can draw $C$ as shown in Figure~\ref{fig:tarknew2}(b), and we have
$x^+_{l,r} \leq _{+x}^{}m^o_{+y} + _{+x}^{}m_{+y}^{p}+1$.

Otherwise, if $\sigma_1 = \rt$, we have $\delta_{h-2} = -2$ and 
$\delta_k \ne 1$ for all $1 < k < h-2$.
We conclude that before the turn $\sigma_{j(2)}$, 
there exist only \fbox{$0$}-segments and \fbox{$-1$}-segments in $C$, and
they form a SE-staircase.
We can draw $C$ as shown in Figure~\ref{fig:tarknew2}(c), and we get
$x^+_{l,r} \leq _{+x}^{}m^o_{-y} + 1$.

\vspace*{2mm}
We then have the following result for Case 1.

\begin{lemma}
\label{lem:bounding-x-plus2}
For any $+x$-monotone sequence $\sigma_{l,r}$ with $l-r=1$,
$x^+_{l,r} \leq m^o_{+x}$.
For any non-$+x$-monotone sequence $\sigma_{l,r}$ with $l-r = 1$,
if $\sigma_1 = \lt$, then
$$x^+_{l,r} \leq \min\{ m^o_{+x}+\, _{+x}^{}m_{+y}^{p}+1,\, m^p_{+x}+\, _{+x}^{}m_{+y}^{o}+2 \},$$
if $\sigma_1 = \rt$, then
$$x^+_{l,r} \leq \min\{ m^o_{+x}+\, _{+x}^{}m_{+y}^{p}+1,\, m^p_{+x}+\,_{+x}^{}m_{-y}^{o}+2 \}.$$
\end{lemma}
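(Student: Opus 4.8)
The plan is to obtain this lemma as the consolidation of the four sub-cases (Cases~1.1--1.4) established above, by viewing each as one of two constructions: a \emph{forward} drawing that splits $\sigma$ at its first vertical hook, and a \emph{reverse} drawing that splits $\bar\sigma$ at its first vertical hook, then keeping whichever endpoint is closer to $o$. The first assertion is immediate: a $+x$-monotone $\sigma$ has no vertical hook, so its realization by unit horizontal segments reaches $(m^o_{+x},0)$, giving $x^+_{l,r}\le m^o_{+x}$.

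For the non-monotone case I would first show that the forward drawing always attains the first term $m^o_{+x}+{}_{+x}^{}m_{+y}^{p}+1$. The single fact that makes every forward sub-case fit under this term is a parity observation: since $l-r=1$ we have $\delta_{l+r}=1$, so the terminal segment of $C$ is a $+y$-segment and hence lies on the maximal NE-staircase through $p$, giving ${}_{+x}^{}m_{+y}^{p}\ge 1$. This absorbs the awkward \fbox{$5$}-segment bound $m^o_{+x}+2$ of Case~1.1, because $m^o_{+x}+2\le m^o_{+x}+{}_{+x}^{}m_{+y}^{p}+1$; the other forward bounds ($m^o_{+x}+1$ from the $\rt\rt$ case and ${}_{+x}^{}m^o_{+y}+{}_{+x}^{}m_{+y}^{p}+1$ from Case~1.2) fit using the elementary inequality ${}_{+x}^{}m^o_{+y}\le m^o_{+x}$, which holds because a NE-staircase based at $o$ is $+x$-monotone and begins with a $+x$-segment, so it has at most as many $+y$- as $+x$-segments.

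I would then recover the second term from the reverse drawing. When $\bar\sigma''$ has a left hook (Case~1.3) the reverse construction reaches exactly $m^p_{+x}+{}_{+x}^{}m_{+y}^{o}+2$ if $\sigma_1=\lt$ and $m^p_{+x}+{}_{+x}^{}m_{-y}^{o}+2$ if $\sigma_1=\rt$. When $\bar\sigma''$ has no left hook (Case~1.4) the reverse construction instead produces a forward-type value, ${}_{+x}^{}m^o_{+y}+{}_{+x}^{}m_{+y}^{p}+1$ for $\sigma_1=\lt$ and ${}_{+x}^{}m^o_{-y}+1$ for $\sigma_1=\rt$; here I would check that this one drawing already lies under \emph{both} terms, which reduces to ${}_{+x}^{}m_{+y}^{p}\le m^p_{+x}+1$ (staircase counting at $p$, with the extra $+1$ accounting for the terminal $+y$-segment), together with ${}_{+x}^{}m^o_{\pm y}\le m^o_{+x}$ and ${}_{+x}^{}m^o_{-y}\ge 0$. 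Thus in every sub-case some realization is no larger than each of the two terms, so $x^+_{l,r}$ is at most their minimum.

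The main obstacle is bookkeeping rather than a single hard step: one must confirm that each of the half-dozen sub-case bounds is dominated by the correct term of the minimum (and, in Case~1.4, by both simultaneously). I would isolate this by proving the two structural inequalities ${}_{+x}^{}m^o_{+y}\le m^o_{+x}$ and ${}_{+x}^{}m_{+y}^{p}\le m^p_{+x}+1$ once, recording the parity fact ${}_{+x}^{}m_{+y}^{p}\ge1$ that holds for $l-r=1$, after which every domination becomes a single line of arithmetic. A final point to verify is that the forward and reverse splits are simultaneously available---a vertical hook of $\sigma$ corresponds to one of $\bar\sigma$ under reversal and turn-complementation---so that the minimum is genuinely attained by two legitimate realizations.
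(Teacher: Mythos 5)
Your proposal is correct and takes essentially the same route as the paper: the lemma there is exactly the consolidation of the forward-split Cases~1.1--1.2 and the reverse-split Cases~1.3--1.4, with the minimum obtained by running both constructions on the same sequence. Your explicit domination inequalities ($_{+x}^{}m_{+y}^{p}\geq 1$ because $\delta_{l+r}=1$ forces the last segment to be a $+y$-segment, $_{+x}^{}m^{o}_{\pm y}\leq m^o_{+x}$, and $_{+x}^{}m_{+y}^{p}\leq m^p_{+x}+1$) all check out and merely make rigorous the bookkeeping the paper leaves implicit.
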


Consider a turn sequence $\sigma_{l,r}$ with $l-r = -1$ and a rectilinear chain $C$ that realizes $\sigma_{l,r}$ and whose endpoint $p$ is on the $+x$-axis.
Let $C^*$ be the reflection of $C$ with respect to the $x$-axis with endpoint $p$. 
The turn sequence $\sigma^*$ of $C^*$ is obtained by reversing $\sigma$,
so $\delta(\sigma^*)=1$ and $p=p^*$, 
so the length of $\overline{op}$ is equal to the length of $\overline{op^*}$.
Thus we have the following result.

\begin{lemma}
\label{lem:bounding-x-plus2-1}
For any $+x$-monotone sequence $\sigma_{l,r}$ with $l-r=-1$,
$x^+_{l,r} \leq m^o_{+x}$.
For any non-$+x$-monotone sequence $\sigma_{l,r}$ with $l-r = -1$,
if $\sigma_1 = \lt$, then
$$x^+_{l,r}\leq \min\{ m^o_{+x}+\,_{+x}^{}m_{-y}^{p}+1,\, m^p_{+x}+\,_{+x}^{}m_{+y}^{o}+2 \},$$
if $\sigma_1 = \rt$, then
$$x^+_{l,r}\leq \min\{ m^o_{+x}+\,_{+x}^{}m_{-y}^{p}+1,\, m^p_{+x}+\,_{+x}^{}m_{-y}^{o}+2 \}.$$
\end{lemma}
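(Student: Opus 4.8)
The plan is to reduce this statement to the already-proved $l-r=1$ case (Lemma~\ref{lem:bounding-x-plus2}) through the reflection introduced in the paragraph preceding the statement, and then to transport each bound across that reflection. Concretely, let $C$ realize $\sigma_{l,r}$ with $l-r=-1$ and reach a point $p$ on the $+x$-axis, and let $C^{*}$ be the reflection $R(C)$ of $C$ about the $x$-axis. Since $R$ fixes $o$ and every point of the $x$-axis, $C^{*}$ again starts at $o$ with an eastward first segment and ends at the same $p$, while $\delta(\sigma^{*})=-(l-r)=1$. Hence $C$ and $C^{*}$ reach the same closest point on the $+x$-axis, so it suffices to bound that point by applying Lemma~\ref{lem:bounding-x-plus2} to $\sigma^{*}$ and then rewriting every quantity of $C^{*}$ as a quantity of $C$.

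The first thing I would record is the dictionary induced by $R$. The reflection preserves the direction of traversal from $o$ to $p$, keeps $o$ and $p$ in their roles, fixes every $+x$- and $-x$-segment, and interchanges $+y$- and $-y$-segments. Consequently it preserves $+x$-monotonicity (so $\sigma$ is $+x$-monotone exactly when $\sigma^{*}$ is), leaves $m^{o}_{+x}$ and $m^{p}_{+x}$ unchanged, and carries every NE-staircase to an SE-staircase and vice versa; in symbols, ${}_{+x}m^{\bullet}_{+y}(C^{*})={}_{+x}m^{\bullet}_{-y}(C)$ and ${}_{+x}m^{\bullet}_{-y}(C^{*})={}_{+x}m^{\bullet}_{+y}(C)$ for $\bullet\in\{o,p\}$. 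Finally, because $R$ exchanges $\lt$ and $\rt$, we have $\sigma_1=\lt$ exactly when $\sigma^{*}_1=\rt$, so the two subcases of Lemma~\ref{lem:bounding-x-plus2} are swapped.

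With the dictionary in hand the verification is a substitution. The $+x$-monotone case is immediate: $\sigma$ $+x$-monotone forces $\sigma^{*}$ $+x$-monotone, so $x^{+}_{l,r}\le m^{o}_{+x}(C^{*})=m^{o}_{+x}$. For the non-monotone case, if $\sigma_1=\lt$ then $\sigma^{*}_1=\rt$, and the $\sigma^{*}_1=\rt$ branch of Lemma~\ref{lem:bounding-x-plus2} yields $\min\{m^{o}_{+x}(C^{*})+{}_{+x}m^{p}_{+y}(C^{*})+1,\ m^{p}_{+x}(C^{*})+{}_{+x}m^{o}_{-y}(C^{*})+2\}$, which the dictionary turns into $\min\{m^{o}_{+x}+{}_{+x}m^{p}_{-y}+1,\ m^{p}_{+x}+{}_{+x}m^{o}_{+y}+2\}$, exactly the claimed bound. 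If $\sigma_1=\rt$ then $\sigma^{*}_1=\lt$, and the other branch gives $\min\{m^{o}_{+x}(C^{*})+{}_{+x}m^{p}_{+y}(C^{*})+1,\ m^{p}_{+x}(C^{*})+{}_{+x}m^{o}_{+y}(C^{*})+2\}$, translating to $\min\{m^{o}_{+x}+{}_{+x}m^{p}_{-y}+1,\ m^{p}_{+x}+{}_{+x}m^{o}_{-y}+2\}$, again as stated.

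The main obstacle here is purely notational rather than geometric: one must apply the $y$-flip consistently to every staircase quantity and confirm that the $\sigma_1$-flip routes each branch of Lemma~\ref{lem:bounding-x-plus2} to the correct branch of the present statement. There is no new drawing to do, since $R$ is a rigid motion carrying any optimal realization for the $l-r=-1$ problem to an equally good realization for the $l-r=1$ problem; I would therefore justify the transport entirely through the explicit dictionary above rather than re-examining any chain construction.
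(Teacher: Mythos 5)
Your proposal is correct and takes essentially the same route as the paper: the paper's own (very terse) proof likewise reflects the chain about the $x$-axis so that $\delta(\sigma^*)=1$ with the endpoint $p$ fixed, and then invokes Lemma~\ref{lem:bounding-x-plus2} for the reflected sequence. Your explicit dictionary---$m^o_{+x}$ and $m^p_{+x}$ invariant, ${}_{+x}^{}m_{+y}^{\bullet}$ swapped with ${}_{+x}^{}m_{-y}^{\bullet}$ for $\bullet\in\{o,p\}$, and the $\sigma_1=\lt$/$\sigma_1=\rt$ branches interchanged---is precisely the translation the paper leaves implicit, and your branch-by-branch substitution reproduces the stated bounds exactly.
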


Lemma~\ref{lem:bounding-x-plus2} and Lemma~\ref{lem:bounding-x-plus2-1} 
will be used for handling Case 2.

\paragraph*{Case 2: $l-r=0$}
If we delete the last turn $\sigma_{l+r}$ from the original sequence $\sigma$, 
then its excess number becomes $1$ or $-1$.
Thus we can use Lemma~\ref{lem:bounding-x-plus2} and Lemma~\ref{lem:bounding-x-plus2-1}
to bound $x^+_{l,r}$.

\begin{lemma}
\label{lem:bounding-x-plus2-2}
For any $+x$-monotone sequence $\sigma_{l,r}$ with $l-r=0$,
$x^+_{l,r} \leq m^o_{+x}$.
For any non-$+x$-monotone sequence $\sigma_{l,r}$ with $l-r = 0$,
if $\sigma_1 = \lt$, then
$$x^+_{l,r} \leq \min\{ m^o_{+x}+\max\{_{+x}^{}m_{+y}^{p},\, _{+x}^{}m_{-y}^{p}\}, 
m^p_{+x}+ \, _{+x}^{}m_{+y}^{o}\}+2,$$
if $\sigma_1 = \rt$, then
$$x^+_{l,r} \leq \min\{ m^o_{+x}+\max\{_{+x}^{}m_{+y}^{p},\, _{+x}^{}m_{-y}^{p}\},
m^p_{+x}+ \, _{+x}^{}m_{-y}^{o}\}+2.$$
\end{lemma}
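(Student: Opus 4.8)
The plan is to settle the $+x$-monotone case at once and then reduce every non-monotone sequence to the excess-$\pm1$ lemmas by deleting a single turn. If $\sigma_{l,r}$ is $+x$-monotone (no vertical hook), the construction already recorded in the preamble applies verbatim: draw the chain with unit horizontal segments so that the endpoint lands at $(m^o_{+x},0)$, using $m^o_{+x}=m^p_{+x}$, which gives $x^+_{l,r}\le m^o_{+x}$. The real content is therefore the non-monotone case, in which $\sigma$ contains a vertical hook (and, for $x^+_\sigma$ to exist at all, a right hook, which I assume).

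For the non-monotone case I would delete the last turn $\sigma_n$, producing $\sigma'=\sigma_1\cdots\sigma_{n-1}$ with $\delta(\sigma')=+1$ when $\sigma_n=\rt$ and $\delta(\sigma')=-1$ when $\sigma_n=\lt$. Deleting the last turn leaves the first turn untouched, so $\sigma'_1=\sigma_1$ and the branch on $\sigma_1$ in Lemmas~\ref{lem:bounding-x-plus2} and~\ref{lem:bounding-x-plus2-1} is inherited. I then apply Lemma~\ref{lem:bounding-x-plus2} (if $\sigma_n=\rt$) or Lemma~\ref{lem:bounding-x-plus2-1} (if $\sigma_n=\lt$) to realize $\sigma'$ by a chain $C'$ reaching the $+x$-axis, and reattach the deleted last segment, which is a $+x$-segment since $\delta_n=l-r=0$, to obtain a realization $C$ of $\sigma$. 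Each sub-lemma bound is a minimum of a forward term (built on $m^o_{+x}$) and a reversed term (built on $m^p_{+x}$), and I would track these two terms separately through the reattachment.

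For the forward term the reattached unit $+x$-segment sits at the $p$-end of $C'$: it raises the $x$-coordinate of the endpoint by one but extends the terminal staircase by a single horizontal segment, so its vertical count ${}_{+x}m_{+y}^{p}$ (when $\sigma_n=\rt$, an NE-staircase) or ${}_{+x}m_{-y}^{p}$ (when $\sigma_n=\lt$, an SE-staircase) is preserved from $\sigma'$ to $\sigma$, and the sub-lemma's additive $+1$ becomes $+2$. Since $\sigma_n$ is invisible in the statement, I take $\max\{{}_{+x}m_{+y}^{p},\,{}_{+x}m_{-y}^{p}\}$ to cover both possibilities, yielding $m^o_{+x}+\max\{{}_{+x}m_{+y}^{p},\,{}_{+x}m_{-y}^{p}\}+2$. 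For the reversed term I use the sub-lemma's reversed construction, which draws $\bar{\sigma'}$ from $p'$ with $o$ placed to the west; reattaching the last segment corresponds to prepending $\overline{\sigma_n}$ to $\bar{\sigma'}$, and the mandatory initial eastward $+x$-segment of the reversed chain is precisely this reattached segment, which points away from $o$ and can be taken as the free base segment. Hence the westward distance to $o$ is unchanged and the reversed bound carries over with no extra additive term as $m^p_{+x}+{}_{+x}m_{+y}^{o}+2$ (for $\sigma_1=\lt$) or $m^p_{+x}+{}_{+x}m_{-y}^{o}+2$ (for $\sigma_1=\rt$); this value is the same for $\sigma_n=\lt$ and $\sigma_n=\rt$. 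Taking the minimum of the two terms produces exactly the claimed $\min\{\,\cdot\,,\cdot\,\}+2$.

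The main obstacle is the asymmetric bookkeeping of the reattached segment: it must contribute $+1$ to the forward term yet $+0$ to the reversed term. The delicate step is justifying the latter, namely that prepending $\overline{\sigma_n}$ extends only the eastward, $p$-side portion of the reversed chain and leaves the distance to $o$ governed by $m^p_{+x}$ and the winding of $\bar{\sigma''}$ untouched. A secondary subtlety is confirming that the two choices of $\sigma_n$ collapse to a single bound: the forward terms differ (NE- versus SE-staircase) and are unified by the $\max$, whereas the reversed terms must be shown to coincide, so that the final bound depends only on $\sigma_1$ as stated.
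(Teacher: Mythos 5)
Your overall route is exactly the paper's: the paper proves this lemma in two sentences, by deleting the last turn $\sigma_n$ to get $\delta(\sigma')=\pm 1$ and invoking Lemma~\ref{lem:bounding-x-plus2} and Lemma~\ref{lem:bounding-x-plus2-1}, and you have filled in the details it omits. Your monotone case is the paper's observation verbatim, and your forward-term bookkeeping is correct: reattaching the final unit $+x$-segment preserves ${}_{+x}^{}m_{+y}^{p}$ (when $\sigma_n=\rt$) resp.\ ${}_{+x}^{}m_{-y}^{p}$ (when $\sigma_n=\lt$), turns the sub-lemma's $+1$ into $+2$, and the $\max$ correctly unifies the two choices of $\sigma_n$; your observation that the reversed terms of the two sub-lemmas coincide and depend only on $\sigma_1$ is also right.

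However, the justification you give for the reversed term is wrong as stated, even though the formula you land on is correct. Since $l-r=0$, the last segment of $C$ is a $+x$-segment entering $p$ from the west; traversed in reverse it heads \emph{toward} $o$, not away from it, so it is not a base segment pointing away from $o$, and the claim that ``the westward distance to $o$ is unchanged'' is false: appending a $+x$-segment at the endpoint of $C'$ moves the endpoint east by at least one unit, so the reattachment costs exactly $+1$ on the reversed side as well. The bound nevertheless carries over with no visible extra constant because of a compensating index shift that you never state: the maximal $+x$-monotone suffix of $\sigma$ contains the reattached segment, so $m^{p}_{+x}$ computed for $\sigma$ is one larger than $m^{p}_{+x}$ computed for $\sigma'$, whence $\bigl(m^{p}_{+x}(\sigma')\bigr) + {}_{+x}^{}m_{\pm y}^{o} + 2 + 1 = m^{p}_{+x}(\sigma) + {}_{+x}^{}m_{\pm y}^{o} + 2$, which is the stated bound. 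Equivalently, if one runs the reversed construction directly on $\bar{\sigma}$, the reattached segment becomes part of the $-x$-monotone prefix $\bar{C'}$ and is counted inside its width $m^{p}_{+x}$, rather than being a free segment. You flagged this as the delicate step, and indeed it is the one place where your argument needs this local repair; with it, your proof coincides with the paper's intended one.
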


\begin{figure}[t]
\centering
\includegraphics[]{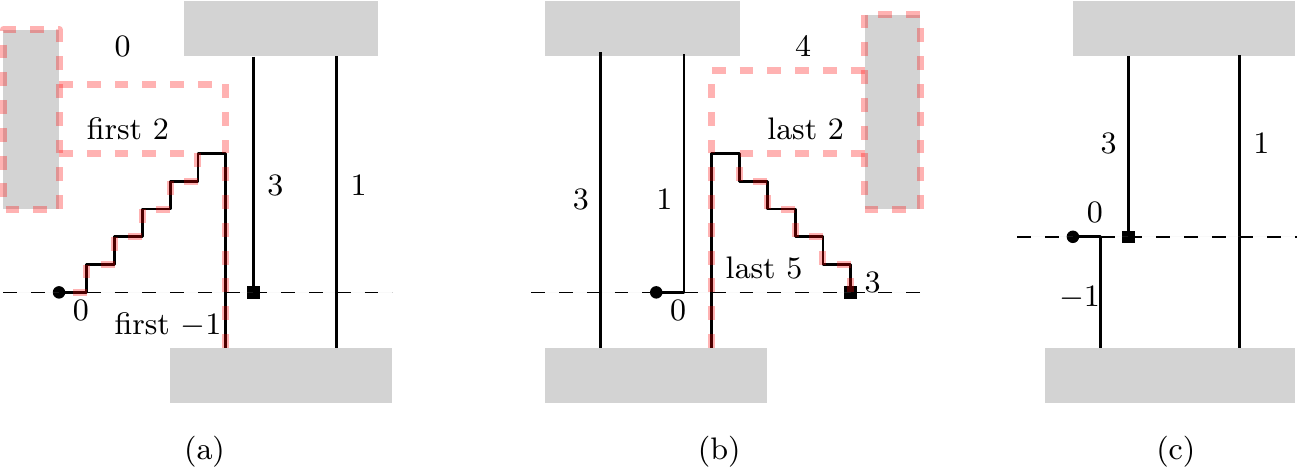}
\caption{Case 3 that $l-r=3$:
(a) Case 3.1, $C$ has a \fbox{$-1$}-segment.
(b) Case 3.1, $C$ has a \fbox{$5$}-segment.
(c) Case 3.2.
}
\label{fig:tark3}
\end{figure}

\paragraph*{Case 3: $l-r=3$}

We have two subcases depending on $\sigma_1$.

\subparagraph*{Case 3.1: $\sigma_1 = \lt$.}

By Lemma~\ref{lem:hook_pt_on_axis}, 
$C$ contains a \fbox{$-1$}-segment or a \fbox{$5$}-segment to reach the $+x$-axis.

If $C$ contains a \fbox{$-1$}-segment, 
we can draw $C$ as shown in Figure~\ref{fig:tark3}(a).
If $C$ has a \fbox{$2$}-segment before the first \fbox{$-1$}-segment,
then it is drawn along the dashed red route, otherwise along the black one, 
which gives $x^+_{l,r} \leq _{+x}^{}m_{+y}^{o}+2$.

If $C$ contains a \fbox{$5$}-segment,
we can draw $C$ as shown in Figure~\ref{fig:tark3}(b).
If $C$ has a \fbox{$2$}-segment after the last \fbox{$5$}-segment,
then it is drawn along the dashed red route, otherwise along the black one.
We have that $x^+_{l,r} \leq _{+x}^{}m_{-y}^{p}+2$.

If $C$ contains both \fbox{$-1$}-segment and \fbox{$5$}-segment,
then we simply take the drawing that gives a smaller distance to $p$,
so we have that
$x^+_{l,r} \leq \min\{_{+x}^{}m_{+y}^{o},\, _{+x}^{}m_{-y}^{p}\}+2$.

\subparagraph*{Case 3.2: $\sigma_1 = \rt$.}
We can draw $C$ as shown in Figure~\ref{fig:tark3}(c). We have that $x^+_{l,r} \leq 2$.

We then have the following result for Case 3.

\begin{lemma}
\label{lem:bounding-x-plus3}
For any turn sequence $\sigma_{l,r}$ with $l-r = 3$, $x^+_{l,r}$ is bounded as follows:
\begingroup
\renewcommand{\arraystretch}{1.1} 
\begin{center}
\begin{tabular}{|l|l|l|l|l|} 
\hline
\multirow{3}{*}{$x^+_{l,r}\leq$} & exists? & \fbox{$-1$}-segment & \fbox{$5$}-segment  & \fbox{$-1$}- and \fbox{$5$}-segments                                        \\ 
\cline{2-5}
                                & If~$\sigma_1=\texttt{L}$                                                                                & $_{+x}^{}m_{+y}^{o}+2$ & $_{+x}^{}m_{-y}^{p}+2$ & $\min\{_{+x}^{}m_{+y}^{o},\, _{+x}^{}m_{-y}^{p}\}+2$  \\ 
\cline{2-5}
                                & If~$\sigma_1=\texttt{R}$                                                                                & \multicolumn{3}{l|}{$2$}                                                                              \\
\hline
\end{tabular}
\end{center}
\endgroup
\end{lemma}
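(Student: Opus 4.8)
The plan is to prove the lemma simply as the tabulation of the bounds produced by the Case~3 constructions above, organized by the value of the first turn $\sigma_1$. First I would dispose of the reachability prerequisite: since $C$ reaches the $+x$-axis, Lemma~\ref{lem:hook_pt_on_axis} forces a right hook, i.e.\ a $+x$-segment (prefix number $\equiv 0 \pmod 4$) that is a hook segment. Reading off the prefix-number walk $\delta_0,\dots,\delta_{l+r}$, which satisfies $\delta_0=0$, $\delta_{l+r}=3$, and (for $\sigma_1=\lt$) $\delta_1=1$, I would observe that the hook's prefix value $v\equiv 0\pmod 4$ is either $\le 0$ or $\ge 4$. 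Because the walk changes by $\pm1$ at each step and starts at $0$, a value $v\le 0$ forces the walk through prefix $-1$ (and indeed the hook triple itself contains a $-1$ when $v=0$), hence a \fbox{$-1$}-segment, while $v\ge 4$ forces the walk through prefix $5$, hence a \fbox{$5$}-segment. This establishes exactly the three subcases recorded in the table: a \fbox{$-1$}-segment only, a \fbox{$5$}-segment only, or both.

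For $\sigma_1=\lt$ I would then exhibit the three drawings. When a \fbox{$-1$}-segment is present, I would route the last such $-y$-segment so that it crosses the $+x$-axis just to the right of the maximal $+x$-monotone staircase at $o$, following the black or red path of Figure~\ref{fig:tark3}(a) according to whether a \fbox{$2$}-segment precedes the first \fbox{$-1$}-segment, so that $p$ lands at distance ${}_{+x}^{}m_{+y}^{o}+2$. Symmetrically, a \fbox{$5$}-segment yields the drawing of Figure~\ref{fig:tark3}(b), branching on a \fbox{$2$}-segment after the last \fbox{$5$}-segment, giving ${}_{+x}^{}m_{-y}^{p}+2$. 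When both segment types occur I would simply take whichever drawing places $p$ closer to $o$, giving $\min\{{}_{+x}^{}m_{+y}^{o},\,{}_{+x}^{}m_{-y}^{p}\}+2$. For $\sigma_1=\rt$ (so $\delta_1=-1$), the chain already departs $o$ downward and the single drawing of Figure~\ref{fig:tark3}(c) reaches $p=(2,0)$ regardless of the staircase structure, giving the uniform bound $2$.

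Throughout, the subchains filling the gray boxes are drawn by the $\lt\rt$-algorithm, whose entry and exit points lie on the box sides by Lemma~\ref{lem:l-r-small}; positioning the boxes by horizontal or vertical stretching then lets the connecting segments realize the prescribed turns without self-intersection. I expect the main obstacle to be precisely this geometric verification rather than the bound arithmetic: one must check that in each drawing the boxes can be aligned so that the segment crossing the $+x$-axis near $p$ (and every connecting segment realizing a hook) avoids the rest of $C$, that the black/red routing alternatives triggered by an intervening \fbox{$2$}-segment are exhaustive, and that each alternative terminates $C$ on the $+x$-axis at exactly the claimed staircase width. Confirming that ${}_{+x}^{}m_{+y}^{o}$ and ${}_{+x}^{}m_{-y}^{p}$ are the genuine controlling quantities for the two constructions is the crux of the argument.
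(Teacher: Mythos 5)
Your proposal matches the paper's proof: the same case split on $\sigma_1$ and on the existence of \fbox{$-1$}- and \fbox{$5$}-segments, the same constructions via Figure~\ref{fig:tark3} with $\lt\rt$-algorithm boxes (Lemma~\ref{lem:l-r-small}) and the red/black routing keyed to an intervening \fbox{$2$}-segment, and the same resulting bounds with the minimum taken when both segment types occur. Your prefix-walk argument that the right hook forced by Lemma~\ref{lem:hook_pt_on_axis} sits at a prefix value $v \equiv 0 \pmod 4$, yielding a \fbox{$-1$}-segment when $v \le 0$ and a \fbox{$5$}-segment when $v \ge 4$, merely makes explicit what the paper asserts in a single line, so the two arguments are essentially identical.
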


\begin{figure}[h]
\centering
\includegraphics[]{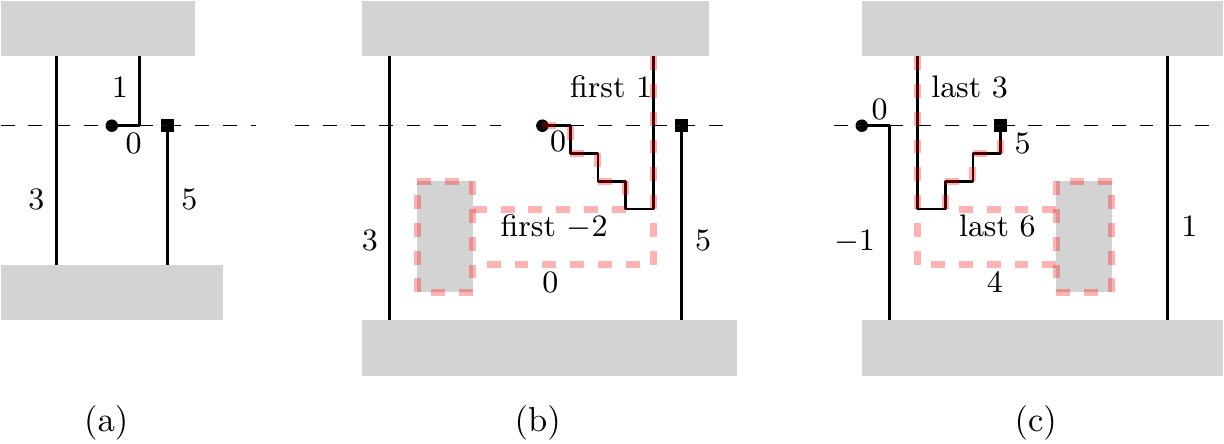}
\caption{Case 4 that $l-r=5$:
(a) Case 4.1.
(b) Case 4.2, the first \fbox{$1$}-segment of $C$ crosses $+x$-axis.
(c) Case 4.2, the last \fbox{$3$}-segment of $C$ crosses $+x$-axis.
}
\label{fig:tark4}
\end{figure}

\paragraph*{Case 4: $l-r=5$}

We have two subcases depending on $\sigma_1$.

\subparagraph*{Case 4.1: $\sigma_1 = \lt$}
We can draw $C$ as shown in Figure~\ref{fig:tark4}(a). 
Then we have that $x^+_{l,r} \leq 2$.

\subparagraph*{Case 4.2: $\sigma_1 = \rt$}
We can draw $C$ as shown in Figure~\ref{fig:tark4}(b)-(c).
In Figure~\ref{fig:tark4}(b), if $C$ has a \fbox{$-2$}-segment 
before the first \fbox{$1$}-segment, 
then it drawn along the dashed red route, otherwise along the black one.
We have that $x^+_{l,r} \leq _{+x}^{}m_{-y}^{o}+2$. 

Similarly, in Figure~\ref{fig:tark4}(c), if $C$ has a \fbox{$6$}-segment 
after the last \fbox{$3$}-segment,
then it drawn along the dashed red route, otherwise along the black one.
We have that $x^+_{l,r} \leq _{+x}^{}m_{+y}^{p}+2$.

We then have the following result for Case 4.

\begin{lemma}
\label{lem:bounding-x-plus4}
For any turn sequence $\sigma_{l,r}$ with $l-r = 5$, if $\sigma_1 = \lt$, 
then $x^+_{l,r} \leq 2$. 
Otherwise, $x^+_{l,r} \leq \min\{ _{+x}^{}m_{-y}^{o},\, _{+x}^{}m_{+y}^{p}\}+2$.
\end{lemma}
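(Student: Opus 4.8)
The plan is to prove both bounds constructively, exactly as in the preceding cases (Cases~1--3): for each turn sequence $\sigma_{l,r}$ with $l-r=5$ I will exhibit a realizing chain $C$ whose endpoint $p$ lies on the $+x$-axis at the claimed distance. Since $x^+_\sigma$ is by definition the \emph{closest} reachable point on the $+x$-axis, any such realization immediately yields an upper bound, and when I produce two realizations I keep the one whose endpoint is closer. Throughout I use that $l-r=5\equiv 1\pmod 4$, so the last segment of $C$ is a $+y$-segment (prefix $5$) and $p$ is reached from below, and I split on the first turn $\sigma_1$, which fixes whether the chain initially heads north ($\sigma_1=\lt$, prefix $1$) or south ($\sigma_1=\rt$, prefix $-1$).

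For $\sigma_1=\lt$ I construct the single drawing of Figure~\ref{fig:tark4}(a). First I would cut $\sigma$ at the forced hooks $\sigma_{i(w)+1}\sigma_{i(w)+2}$ supplied by Observation~\ref{obs:ik}, draw the intervening pieces with the $\lt\rt$-algorithm, and place their bounding boxes---whose entry and exit points lie on the box sides by Lemma~\ref{lem:l-r-small}---according to the same winding scheme used for $x^-_{l,r}$ in Theorem~\ref{thm:bounding-x-minus}. Because the initial left turn sends the chain north rather than forcing a staircase that drifts east, the whole chain can be wound tightly about $o$ and made to cross the $+x$-axis within two units, so that $p=(2,0)$ (or closer) and $x^+_{l,r}\le 2$.

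For $\sigma_1=\rt$ the initial right turn forces a maximal SE-staircase at $o$, and I produce two competing drawings. In Figure~\ref{fig:tark4}(b) I draw this SE-staircase prefix with unit segments descending to the southeast---it contributes exactly $_{+x}^{}m_{-y}^{o}$ many $-y$-segments to the width---and then wind the remainder; the first \fbox{$1$}-segment carries the chain back up across the $+x$-axis, a possible \fbox{$-2$}-segment before it being absorbed by the alternate route, so that $p$ lands at $x\le\,_{+x}^{}m_{-y}^{o}+2$. In Figure~\ref{fig:tark4}(c) I instead keep the bulk of the chain wound near $o$ and let the last \fbox{$3$}-segment drop the chain below the axis so that the NE-staircase suffix at $p$, of width $_{+x}^{}m_{+y}^{p}$ (whose structure after $i(3)$ is forced by Observation~\ref{obs:ik}), rises to meet the axis; a possible \fbox{$6$}-segment after the last \fbox{$3$}-segment is handled by the alternate route, giving $x\le\,_{+x}^{}m_{+y}^{p}+2$. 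Taking whichever endpoint is closer yields $x^+_{l,r}\le\min\{\,_{+x}^{}m_{-y}^{o},\,_{+x}^{}m_{+y}^{p}\}+2$, and both staircases are nonempty since $l-r=5$ guarantees the required hooks by Lemma~\ref{lem:hook_pt_on_axis}.

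The routine content is verifying that each drawing is simple and that its endpoint lands exactly on the $+x$-axis; this follows from the box invariants of the $\lt\rt$-algorithm together with the freedom to stretch boxes along horizontal or vertical cuts, exactly as in the earlier cases. The main obstacle is bookkeeping the additive constant: across the sub-subcases (presence or absence of the \fbox{$-2$}-segment in~(b) and of the \fbox{$6$}-segment in~(c) that trigger the black-versus-red routes of Figure~\ref{fig:tark4}) I must check that the final maneuver placing $p$ on the axis costs at most two units beyond the relevant staircase width, and that for $\sigma_1=\lt$ no staircase width is incurred at all. Establishing this uniform ``$+2$'' is where the case analysis does its real work.
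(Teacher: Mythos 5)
Your proposal matches the paper's proof essentially line for line: the same split on $\sigma_1$, the same drawings of Figure~\ref{fig:tark4} (with the alternate black/red routes triggered by a \fbox{$-2$}-segment before the first \fbox{$1$}-segment, resp.\ a \fbox{$6$}-segment after the last \fbox{$3$}-segment), and the same final minimum over the two endpoints when $\sigma_1=\rt$. The extra scaffolding you supply (the $\lt\rt$-algorithm boxes, Lemma~\ref{lem:l-r-small}, Observation~\ref{obs:ik}, and stretching along cuts) is precisely the machinery the paper leaves implicit in its figures, so this is the same argument.
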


\paragraph*{Case 5: $l-r=4$}
If we delete the last turn $\sigma_{l+r}$ from the original sequence $\sigma$, 
then its excess number becomes $3$ or $5$,
so we can use Lemma~\ref{lem:bounding-x-plus3} and Lemma~\ref{lem:bounding-x-plus4}
to bound $x^+_{l,r}$.

\begin{lemma}
\label{lem:bounding-x-plus3,4-1}
For any turn sequence $\sigma_{l,r}$ with $l-r = 4$, $x^+_{l,r}$ is bounded as follows:
\begingroup
\renewcommand{\arraystretch}{1.1} 
\begin{center}
\begin{tabular}{|l|l|l|l|l|} 
\hline
\multirow{3}{*}{$x^+_{l,r}\leq$} & exists? & \fbox{$-1$}-segment  & \fbox{$5$}-segment  & \fbox{$-1$}- and \fbox{$5$}-segments                                        \\ 
\cline{2-5}
                                & If~$\sigma_1=\texttt{L}$                                       & $_{+x}^{}m_{+y}^{o}+3$ & $_{+x}^{}m_{-y}^{p}+3$ & $\min\{_{+x}^{}m_{+y}^{o},\, _{+x}^{}m_{-y}^{p}\}+3$  \\ 
\cline{2-5}
                                & If~$\sigma_1=\texttt{R}$                                       & \multicolumn{3}{l|}{$\min\{_{+x}^{}m_{-y}^{o},\, _{+x}^{}m_{+y}^{p}\}+3$}                               \\
\hline
\end{tabular}
\end{center}
\endgroup
\end{lemma}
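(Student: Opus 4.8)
The plan is to carry out exactly the reduction announced at the opening of Case~5: peel off the last turn to land in the already-settled cases $l-r=3$ (Lemma~\ref{lem:bounding-x-plus3}) and $l-r=5$ (Lemma~\ref{lem:bounding-x-plus4}). Fix a sequence $\sigma_{l,r}$ with $l-r=4$ and let $\sigma'=\sigma_1\cdots\sigma_{n-1}$ be obtained by deleting the final turn $\sigma_n$. Since the prefix numbers $\delta_0,\dots,\delta_{n-1}$ are unchanged while $\delta_n=l-r=4$, we have $\delta(\sigma')=\delta_{n-1}=3$ when $\sigma_n=\lt$ and $\delta(\sigma')=5$ when $\sigma_n=\rt$, and $\sigma_1$ is untouched. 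I would therefore draw a chain $C'$ realizing $\sigma'$ by the construction of Lemma~\ref{lem:bounding-x-plus3} (if $\sigma_n=\lt$) or Lemma~\ref{lem:bounding-x-plus4} (if $\sigma_n=\rt$), reaching a point $p'=(a',0)$ on the $+x$-axis with the corresponding bound.

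The second step reattaches the deleted turn. The last segment $e_{n-1}$ of $C'$ is a \fbox{$3$}-segment (a $-y$-segment) when $\delta(\sigma')=3$ and a \fbox{$5$}-segment (a $+y$-segment) when $\delta(\sigma')=5$, so $C'$ always meets $p'$ vertically. Making the turn $\sigma_n$ at $p'$ — a left turn after travelling south, or a right turn after travelling north — directs the new segment $e_n$ due east, so appending a unit \fbox{$4$}-segment reaches $p=(a'+1,0)$, again on the $+x$-axis, with $|op|=|op'|+1$. Self-intersection is avoided by first stretching $C'$ via a vertical cut just east of $p'$, as allowed by the Stretching Lemma (Lemma~\ref{lem:stretch}). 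This is precisely what turns the additive ``$+2$'' of Lemmas~\ref{lem:bounding-x-plus3} and~\ref{lem:bounding-x-plus4} into the ``$+3$'' of the statement.

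It remains to check that the quantities in the table are inherited by $\sigma$ from $\sigma'$. Because $\delta_n=4$, the segment $e_n$ is neither a \fbox{$-1$}- nor a \fbox{$5$}-segment, so $\sigma$ contains a \fbox{$-1$}- (resp.\ \fbox{$5$}-) segment iff $\sigma'$ does; the three columns of the table thus transfer verbatim from Lemma~\ref{lem:bounding-x-plus3}. The origin-anchored counts ${}_{+x}m^{o}_{+y}$ and ${}_{+x}m^{o}_{-y}$ are clearly unaffected by deleting a tail segment. For the endpoint-anchored counts ${}_{+x}m^{p}_{-y}$ and ${}_{+x}m^{p}_{+y}$, I would note that the appended $e_n$ is a $+x$-segment, i.e.\ a $z_1$-segment of the SE- or NE-staircase containing $p$; it lengthens that maximal staircase but contributes no $z_2$-segment, so the count equals its value at $p'$ in $C'$. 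Combining these identities, Lemma~\ref{lem:bounding-x-plus3} plus one gives the $\sigma_1=\lt$ row (where $\sigma_n=\lt$ is forced in the ``\fbox{$-1$} only'' column, since $\sigma_n=\rt$ would make $e_{n-1}$ a \fbox{$5$}-segment), and Lemma~\ref{lem:bounding-x-plus4} plus one gives the $\min\{{}_{+x}m^{o}_{-y},\,{}_{+x}m^{p}_{+y}\}+3$ entry of the $\sigma_1=\rt$ row.

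I expect the main obstacle to be bookkeeping rather than geometry: one must verify, for every combination of $\sigma_1\in\{\lt,\rt\}$ and $\sigma_n\in\{\lt,\rt\}$, that the bound produced by the forced choice of Lemma~\ref{lem:bounding-x-plus3} or~\ref{lem:bounding-x-plus4} is dominated by the matching table entry. In the ``cross'' cases (e.g.\ $\sigma_1=\lt$ with $\sigma_n=\rt$) the reduction through $l-r=5$ yields only the constant bound $3$, which is harmless since every table entry is at least $3$; the delicate part is confirming that each tight entry is attained by the appropriate value of $\sigma_n$ and that the preservation of the endpoint-anchored staircase counts across the reattachment makes the ``$+2\to+3$'' substitution exact.
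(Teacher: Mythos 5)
Your proposal is correct and takes essentially the same route as the paper, whose entire proof of this lemma is the one-sentence reduction you carry out: delete the last turn so the excess number becomes $3$ or $5$, invoke Lemma~\ref{lem:bounding-x-plus3} or Lemma~\ref{lem:bounding-x-plus4}, and add one to the bound. Your additional bookkeeping (the vertical arrival at $p'$ and eastward reattachment, the preservation of the \fbox{$-1$}-/\fbox{$5$}-segment existence and of the staircase counts $_{+x}^{}m_{+y}^{p}$, $_{+x}^{}m_{-y}^{p}$ under the reattachment, and the dominance checks in the cross cases $\sigma_1\neq\sigma_n$) merely makes explicit the details the paper leaves implicit.
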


We conclude this section with the following theorem.

\begin{theorem}
\label{thm:x-plus-conclusion}
For any turn sequence $\sigma_{l,r}$ with $l-r \geq 0$, 
$x^+_{l,r}$ can be bounded by
referring Lemma~\ref{lem:bounding-x-plus}, Lemma~\ref{lem:bounding-x-plus2}, Lemma~\ref{lem:bounding-x-plus2-2}, Lemma~\ref{lem:bounding-x-plus3}, Lemma~\ref{lem:bounding-x-plus4},
and Lemma~\ref{lem:bounding-x-plus3,4-1}.
\end{theorem}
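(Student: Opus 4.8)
The plan is to recognize that this is a collecting (summary) theorem rather than one requiring a new geometric construction: since $l-r$ is a nonnegative integer, it lies in exactly one of the ranges already handled by the six preceding lemmas, and the bound on $x^+_{l,r}$ is obtained by invoking the appropriate one. All of the drawing algorithms and their distance analyses---built on the $\lt\rt$-algorithm and the winding scheme used for $x^-_{l,r}$---have already been carried out in those lemmas, so nothing new needs to be constructed here.

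First I would verify that the ranges are exhaustive and pairwise disjoint. Writing $d = l-r \in \Zo$, the small excess values are each covered by a single lemma: $d=0$ by Lemma~\ref{lem:bounding-x-plus2-2}, $d=1$ by Lemma~\ref{lem:bounding-x-plus2}, $d=2$ by Lemma~\ref{lem:bounding-x-plus}, $d=3$ by Lemma~\ref{lem:bounding-x-plus3}, $d=4$ by Lemma~\ref{lem:bounding-x-plus3,4-1}, and $d=5$ by Lemma~\ref{lem:bounding-x-plus4}. The tail $d \geq 6$ is covered (alongside $d=2$) by Lemma~\ref{lem:bounding-x-plus}. These seven buckets partition $\Zo$, so every admissible $\sigma_{l,r}$ with $l-r \geq 0$ falls under exactly one bound.

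Second, I would check that within each bucket the hypotheses of the invoked lemma match the sequence at hand. Each of the small-excess lemmas was already stated for \emph{all} sequences of its excess number, internally splitting on the $+x$-monotone versus non-$+x$-monotone dichotomy, on whether $\sigma_1 = \lt$ or $\sigma_1 = \rt$, and on the presence of specific \fbox{$t$}-segments (such as \fbox{$-1$}- and \fbox{$5$}-segments). Since these internal subcases are themselves exhaustive, the alignment is immediate and requires no further argument.

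The only real content is bookkeeping at the boundaries: confirming that $d=2$ is consistently assigned to Lemma~\ref{lem:bounding-x-plus} and not left ambiguous, and that the threshold $l-r \geq 6$ in Lemma~\ref{lem:bounding-x-plus} meets the explicit treatments of $d=3,4,5$ without a gap. Once this is confirmed, the theorem follows as an immediate corollary: for any $\sigma_{l,r}$ with $l-r \geq 0$, one reads off $x^+_{l,r}$ from the lemma indexed by the value of $l-r$, and there is no remaining obstacle beyond this case-matching.
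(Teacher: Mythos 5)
Your reading is correct and matches the paper exactly: Theorem~\ref{thm:x-plus-conclusion} is a pure summary statement whose entire content is the case analysis already carried out in the six cited lemmas, and the paper offers no further proof beyond that cross-referencing. Your verification that the buckets $l-r = 0, 1, 2, 3, 4, 5$ and $l-r \geq 6$ exhaust $\Zo$ and align with the lemmas' hypotheses (including assigning $l-r = 2$ and $l-r \geq 6$ to Lemma~\ref{lem:bounding-x-plus}) is precisely the bookkeeping the theorem implicitly relies on.
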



\subsection{Lower bounds on \texorpdfstring{$x^+_{l,r}$}{x+} and \texorpdfstring{$x^-_{l,r}$}{x-}}
\label{subsec:lower_bound_x_plus}

We now show a lower bound on the distance of the closest reachable point to the origin on $x$-axis. 
Let $C$ be an arbitrary rectilinear chain that realizes a turn sequence $\sigma_{l,r}=\sigma_1 \sigma_2 \cdots \sigma_n$ for $l-r\geq 0$ which is reachable to the $x$-axis, 
and $p$ be the endpoint of $C$ that is on the $x$-axis.
First, we will bound the minimum number of vertical segments of $C$ that intersect $\overline{op}$, which gives a lower bound on the distance. 
For this, we use the \emph{rotation number} of a \emph{whisker-free} polygon,
which is introduced by Gr{\"u}nbaum~\cite{gs-tams90}. 
A polygon is whisker-free if no two edges incident 
with a vertex of the polygon overlap in a segment of positive length. 

We redefine the rotation number of a whisker-free polygon $P$ as follows.

\begin{definition}
Let $P$ be a whisker-free polygon that has $n$ vertices $v_1, v_2, \ldots, v_n=v_0$
and $n$ directed edges $e_1=(v_1, v_2), e_2=(v_2, v_3), \ldots, e_n=e_0=(v_n, v_1)$. 
Let $\alpha(v_i)$ denote the signed angle~\footnote{An angle measured from $e_{i-1}$ to $e_i$ in counterclockwise direction is positive, for clockwise direction negative.} 
between the direction vector of $e_{i-1}$ and $e_i$, and let $d(v_i) = \alpha(v_i)/2\pi$ denote the deflection of $v_i$ for $i = 1, 2, \ldots, n$.
The rotation number $R(P)$ of $P$ is the sum of deflections of vertices in $P$, i.e., 
$R(P) = \sum_{i=1}^n d(v_i)$.
\end{definition}

Gr{\"u}nbaum~\cite{gs-tams90} also introduced the \emph{ordinary} polygon;
a polygon is ordinary if no three edges have a common point. 
It is clear that every ordinary polygon is whisker-free.

A rectilinear polygon $P_C$ is defined by connecting $o$ and $p$ of $C$, 
but it is not necessarily ordinary because some segments of $C$ 
can be overlapped with $\overline{op}$. 
If the first or last segment of $C$ does, then define a subchain $C'$ of $C$ 
by deleting the overlapped segments.
Note that $P_{C'}$ obtained by connecting two endpoints of $C'$ is 
now whisker-free, but not ordinary yet
if there are segments of $C'$ overlapped with
the segment connecting the endpoints of $C'$.
We can make a new chain $C''$ from $C'$
by translating the overlapped segments by small positive amount above or below 
the $x$-axis while keeping the continuity of $C'$.
Then $P_{C''}$ obtained by connecting two endpoints of $C''$ becomes ordinary.

We will show that the minimum number of vertical segments of $C$ 
intersecting $\overline{op}$ is derived from the rotation number
of the ordinary polygon made in this way.
We now suppose that $p$ is on the $+x$-axis.

\subsubsection{The minimum number of vertical segments of $C$ that cross \texorpdfstring{$\overline{op}$}{op}}
\label{subsubsec:verticalnumber}

For any turn sequence $\sigma_{l,r}$ with $l-r \equiv 0 \pmod 4$, 
we have $\delta_{l+r-1}=l-r\pm1 \equiv \pm1 \pmod 4$,
The last segment of its chain $C$ is a $+x$-segment coming to $p$ from the west, that is, it is contained in the $+x$-axis.
Deleting the last segment gives a shorter chain $C^*$ whose 
$l-r \equiv 1 \pmod 4$ or $l-r \equiv 3 \pmod 4$.
By bounding the length of the last segment of $C$ into $1$,
we can make the lower bound on $\left| \overline{op} \right|$ of $C$
be bounded below by the lower bound on $\left| \overline{op^*} \right|$ of $C^*$ plus one, where $p^*$ is the endpoint of $C^*$.
We here consider only the turn sequence $\sigma_{l,r}$ with $l-r \not\equiv 0 \pmod 4$.

We have two cases $\sigma_1 = \lt$ and $\sigma_1=\rt$. 

\paragraph*{Case 1: $\sigma_1 = \lt$}
There are $l-r$ more left turns than right turns in $C$.
Let $k=\left\lfloor\frac{l-r}{4}\right\rfloor \geq 0$. 
Then $l-r \in \{4k+1, 4k+2, 4k+3\}$.
Let $P$ be an ordinary polygon derived from $C$ by connecting $p$ of $C$
and the second vertex $q$ of $C$ and removing the first segment $\overline{oq}$ of $C$.
Let us count the number of left and right turns of $P$.
The first turn $\lt$ at $q$ disappears in $P$.
The last segment $\overline{pq}$ of $P$ creates two turns at $p$ and $q$;
the turn $\rt$ at $q$ and the turn $\lt$ or $\rt$ at $p$ 
depending on $l-r\equiv 1 \pmod 4$ or $l-r\equiv 3 \pmod 4$, respectively.
Reflecting all these changes, we can conclude that $P$ has exactly 
$4k$ more left turns than right ones. 
Since a left turn and a right turn contributes to a deflection by $\frac{1}{4}$ 
and $-\frac{1}{4}$ respectively, 
we have $R(P) = 4k \times \frac{1}{4} = k = \lfloor\frac{l-r}{4}\rfloor$.

Whitney~\cite{w-cm37} defined the rotation number of an oriented smooth closed curve,
and Polyak~\cite{p-amst99} rewrote it simply as the number of rotations made by the tangent vectors as traversing along the curve.
Whitney~\cite{w-cm37} also introduced a \emph{normal} oriented smooth closed curve, and we redefine it as follows:
An oriented smooth closed curve is normal 
if the curve has neither overlapping nor touching parts, and 
no three or more pieces of the curve intersect at a common point
if it has self-intersections.

Let $S$ be an oriented smooth closed curve obtained by 
smoothening the vertices and their neighborhoods of $P$. 
Note that we can determine the smoothness
so that no self-intersection is generated. 
Since $P$ and $S$ are topologically equivalent, we have $R(P) = R(S)$, 
where $R(S)$ denote the rotation number of $S$.
Moreover $S$ is normal because $P$ is ordinary. 

Let $I$ be the total number of self-intersections of $S$ (also of $P$).
We show the relation between $R(S)$ and $I$ as follows.

\begin{lemma}
\label{lem:rot and self}
For a normal oriented smooth closed curve $S$, $I \geq \left| \left|R(S)\right|-1 \right|$.
\end{lemma}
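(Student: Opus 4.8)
The plan is to prove this two-sided bound by combining \emph{Whitney's formula} for the rotation number with the classical fact that a simple (embedded) closed curve has rotation number $\pm 1$. Whitney's formula~\cite{w-cm37} expresses $R(S)$ as a signed count over the self-intersections of $S$ plus a single base-point correction term, and this immediately yields $|R(S)| \le I + 1$. The only situation this does not settle, namely $R(S)=0$, will be handled separately using the theorem of turning tangents.

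First I would recall the signed count. Because $S$ is normal, each self-intersection $p$ is a transverse double point, to which we assign a sign $\epsilon(p)\in\{+1,-1\}$ according to whether the ordered pair of velocity vectors of the two branches through $p$ (taken in the order in which $S$ traverses them) forms a positively or negatively oriented frame. Choosing a base point $b$ on the boundary of the convex hull of $S$ and orienting the traversal accordingly, Whitney's formula (in the form restated by Polyak~\cite{p-amst99}) reads
\[
R(S) \;=\; \eta \;+\; \sum_{p}\epsilon(p),
\]
where the sum ranges over the $I$ self-intersections and $\eta\in\{+1,-1\}$ is the base-point term. Since $|\eta|=1$ and $\bigl|\sum_{p}\epsilon(p)\bigr|\le I$, the triangle inequality gives $|R(S)| \le I + 1$, that is, $I \ge |R(S)|-1$.

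It remains to strengthen this to $I \ge \bigl||R(S)|-1\bigr|$, which differs from the bound just obtained only when $R(S)=0$: there $\bigl||R(S)|-1\bigr| = 1$, whereas $|R(S)|-1 = -1$ is vacuous. To cover $R(S)=0$, I would argue by contraposition: if $I=0$ then $S$ has no self-intersection and is therefore a simple smooth closed curve, so by the theorem of turning tangents (Hopf's Umlaufsatz) its rotation number is $\pm 1$; in particular $R(S)\neq 0$. Hence $R(S)=0$ forces $I\ge 1$. Combining the cases: when $|R(S)|\ge 1$ we have $\bigl||R(S)|-1\bigr| = |R(S)|-1 \le I$, and when $R(S)=0$ we have $\bigl||R(S)|-1\bigr| = 1 \le I$; in all cases $I \ge \bigl||R(S)|-1\bigr|$.

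The step I expect to require the most care is pinning down Whitney's formula so that the correction is cleanly a single term $\eta\in\{\pm1\}$: one must confirm that $S$ (built by smoothing the ordinary polygon $P$) really is \emph{normal}, with only transverse double points and no tangencies or triple intersections, so that every sign $\epsilon(p)$ is well defined, and one must place the base point on the convex hull so that its contribution is exactly $\pm 1$. A more self-contained alternative would induct on $I$: the orientation-preserving smoothing of one self-intersection of a connected closed curve splits it into two closed curves and removes that crossing, and tracking how the rotation numbers combine under this surgery reproduces $|R(S)|\le I+1$; the obstacle on that route is precisely the bookkeeping of the rotation-number change at the smoothed crossing, which is why I prefer to invoke Whitney's formula directly.
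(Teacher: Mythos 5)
Your proof is correct, and it rests on the same key ingredient as the paper's: Whitney's formula $R(S) = \mu + \sum_p \epsilon(p)$ with $\mu \in \{\pm 1\}$, from which the triangle inequality yields $I \geq |R(S)| - 1$. Where you diverge is in upgrading this to the two-sided bound $I \geq \left| |R(S)|-1 \right|$: you treat $R(S)=0$ as a separate case, invoking Hopf's Umlaufsatz (an embedded smooth closed curve has rotation number $\pm 1$) to force $I \geq 1$ there. The paper avoids any case split: it writes $I = I^+ + I^- \geq |I^+ - I^-| = |R(S)-\mu|$ and then applies the \emph{reverse} triangle inequality $|R(S)-\mu| \geq \left| |R(S)| - |\mu| \right| = \left| |R(S)| - 1 \right|$, which absorbs the $R(S)=0$ case automatically (there $|R(S)-\mu| = 1$). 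So the paper's argument is self-contained given Whitney's formula, whereas yours imports one additional classical theorem; in exchange, your case analysis makes explicit \emph{why} the outer absolute value is needed --- precisely because a curve without self-intersections cannot have rotation number zero --- which the paper's single chain of inequalities leaves implicit. Note that the same tightening was available inside your own setup: from $\sum_p \epsilon(p) = R(S) - \eta$ you get $I \geq \bigl| \sum_p \epsilon(p) \bigr| = |R(S)-\eta| \geq \left| |R(S)| - 1 \right|$ directly, making the appeal to the Umlaufsatz unnecessary.
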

\begin{proof}
A self-intersection of $S$ is the crossing between two different pieces of $S$.
Let $c_1$ and $c_2$ be the two pieces of $S$ that cross each other 
such that $c_1$ precedes $c_2$.
There are two types of self-intersections; it is the first type 
if $c_2$ crosses over $c_1$ from the left side to the right side of $c_1$, and
otherwise the second type.
Let $I^+$ and $I^-$ denote the number of the first and second type self-intersections, respectively. Then $I^+ + I^- = I$.
Whitney~\cite{w-cm37} proved $R(S) = \mu + I^+ - I^-$, where
$\mu$ is either $1$ or $-1$. 
By using this formula and the triangular inequality, we conclude that
$$I = I^+ +I^- = |I^+|+|I^-| \geq \left| I^+-I^-\right|
= \left|R(S)-\mu\right| \geq \left| \left|R(S)\right|-\left|\mu\right| \right|
= \left| \left|R(S)\right|-1 \right|.$$
\end{proof}

Because $R(P) = R(S)$ as we mentioned above, we have, by Lemma~\ref{lem:rot and self}, that
$$I\geq \left| R(P)-1 \right| = \left| \left\lfloor\frac{l-r}{4}\right\rfloor-1 \right|.$$

All self-intersections of $P$ are made on $\overline{pq}$, so 
at least $\left| \left\lfloor\frac{l-r}{4}\right\rfloor-1 \right|$ 
vertical segments of $P$ cross $\overline{pq}$. 
As the second segment of $C$ which is vertical also intersects (in fact, touches)
with $\overline{op}$, the number of vertical segments intersecting (or crossing) 
$\overline{op}$ is at least 
$$\left| \left\lfloor\frac{l-r}{4}\right\rfloor-1 \right|+1.$$

\paragraph*{Case 2: $\sigma_1 = \rt$}
By the same argument as in Case 1, we have $R(P) = (4k+4) \times \frac{1}{4} = k+1 = \lfloor\frac{l-r}{4}\rfloor + 1$.
By Lemma~\ref{lem:rot and self}, at least 
$\left\lfloor\frac{l-r}{4}\right\rfloor +1$ vertical segments of $C$ cross 
$\overline{op}$, including the second segment of $C$.

We conclude this section with the following theorem.

\begin{theorem}
\label{thm:vertical-seg-crossing}
For any turn sequence $\sigma_{l,r}$ with $l-r \not\equiv 0 \pmod 4$ reachable to the $+x$-axis,
any rectilinear chain $C$ that realizes $\sigma_{l,r}$ and reaches a point $p$ on the $+x$-axis intersects $\overline{op}$ 
in at least $\left|\left\lfloor\frac{l-r}{4}\right\rfloor-1\right|+1$
vertical segments of $C$ if $\sigma_1=\lt$;
otherwise, if $\sigma_1 = \rt$,
$\left\lfloor\frac{l-r}{4}\right\rfloor +1$ vertical segments of $C$.
\end{theorem}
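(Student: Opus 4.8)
The plan is to assemble the two case analyses ($\sigma_1=\lt$ and $\sigma_1=\rt$) carried out just above into a single statement, since the theorem is precisely their consolidation. The overarching strategy is to close the chain $C$ into a polygon, relate its number of self-intersections to its rotation number via Lemma~\ref{lem:rot and self}, and then read each relevant self-intersection back as a vertical segment of $C$ crossing $\overline{op}$.

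First I would invoke the reduction to $l-r \not\equiv 0 \pmod 4$, which is exactly the hypothesis here; the residue $l-r\equiv 0$ was already dispatched at the start of the subsection by deleting the horizontal last segment and applying the bound for the shorter chain. Writing $k=\floor{(l-r)/4}$, the excess $l-r$ then lies in $\{4k+1,4k+2,4k+3\}$. I would form the ordinary polygon $P$ by joining the second vertex $q$ of $C$ to $p$ and discarding the first horizontal segment $\overline{oq}$, following the whisker-free/ordinary construction set up earlier in the section so that $P$ can be smoothed without introducing new crossings.

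Next comes the turn bookkeeping that pins down $R(P)$. Deleting $\overline{oq}$ erases the first turn at $q$, while the closing edge $\overline{pq}$ creates a turn at $q$ and a turn at $p$ whose signs are dictated by whether $l-r\equiv 1$ or $3 \pmod 4$. Carrying out the count shows that $P$ has exactly $4k$ more left than right turns when $\sigma_1=\lt$, and $4(k+1)$ more when $\sigma_1=\rt$. Since each left (resp.\ right) turn contributes a deflection of $+\tfrac14$ (resp.\ $-\tfrac14$), this yields $R(P)=k$ in the first case and $R(P)=k+1$ in the second. Smoothing $P$ into a normal curve $S$ preserves the rotation number, so $R(S)=R(P)$, and Lemma~\ref{lem:rot and self} gives $I \geq \bigl|\,|R(S)|-1\,\bigr|$. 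Because every self-intersection of $P$ lies on the closing segment $\overline{pq}$ (which is contained in $\overline{op}$) and each such crossing is made by a distinct vertical segment of $C$, while the vertical second segment of $C$ additionally meets $\overline{op}$ at $q$, adding one to $I$ produces the stated bounds: $\left|\floor{(l-r)/4}-1\right|+1$ for $\sigma_1=\lt$ and $\floor{(l-r)/4}+1$ for $\sigma_1=\rt$.

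The main obstacle is the exact turn count determining $R(P)$: one must track how the two artificial turns on $\overline{pq}$ depend on $l-r \bmod 4$ and verify that, net of the turn lost at $q$, the residual excess is a clean multiple of four so that $R(P)$ is an integer. A secondary subtlety is confirming that all self-intersections are forced onto $\overline{pq}$, so that they count genuine vertical crossings of $\overline{op}$, and that the second segment of $C$ is an additional vertical segment not already among those counted by $I$.
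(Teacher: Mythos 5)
Your proposal is correct and follows essentially the same route as the paper's own proof: the same polygon $P$ obtained by joining the second vertex $q$ to $p$ and deleting $\overline{oq}$, the same turn bookkeeping giving $R(P)=\lfloor\frac{l-r}{4}\rfloor$ or $\lfloor\frac{l-r}{4}\rfloor+1$ according to $\sigma_1$, Whitney-style smoothing with Lemma~\ref{lem:rot and self} to bound the self-intersections on $\overline{pq}$, and the extra $+1$ from the second (vertical) segment of $C$ touching $\overline{op}$ at $q$. Your added observation that simplicity of $C$ forces $\overline{pq}\subseteq\overline{op}$ is a nice explicit justification of a point the paper leaves implicit.
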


\subsubsection{Lower bounds on \texorpdfstring{$x^+_{l,r}$}{x+}}
\label{subsubsec:lower_bound_x_minus}

By Theorem~\ref{thm:vertical-seg-crossing}, there are at least  
$\left|\left\lfloor \frac{l-r}{4}\right\rfloor-1\right|+1$ or $\left\lfloor \frac{l-r}{4}\right\rfloor+1$
vertical segments lying between $o$ and $p$, which implies that the distance from $o$ to $p$
is at least $\left|\left\lfloor \frac{l-r}{4}\right\rfloor-1\right|+2$ or $\left\lfloor \frac{l-r}{4}\right\rfloor+2$.  
Besides this term, we will show that for some turn sequence, 
as in the upper bound on $x^+_{l,r}$,
the length of the maximal staircases from $o$ or to $p$ is also contributed
to the lower bound on $x^+_{l,r}$.

\begin{figure}[t]
\centering
\includegraphics[]{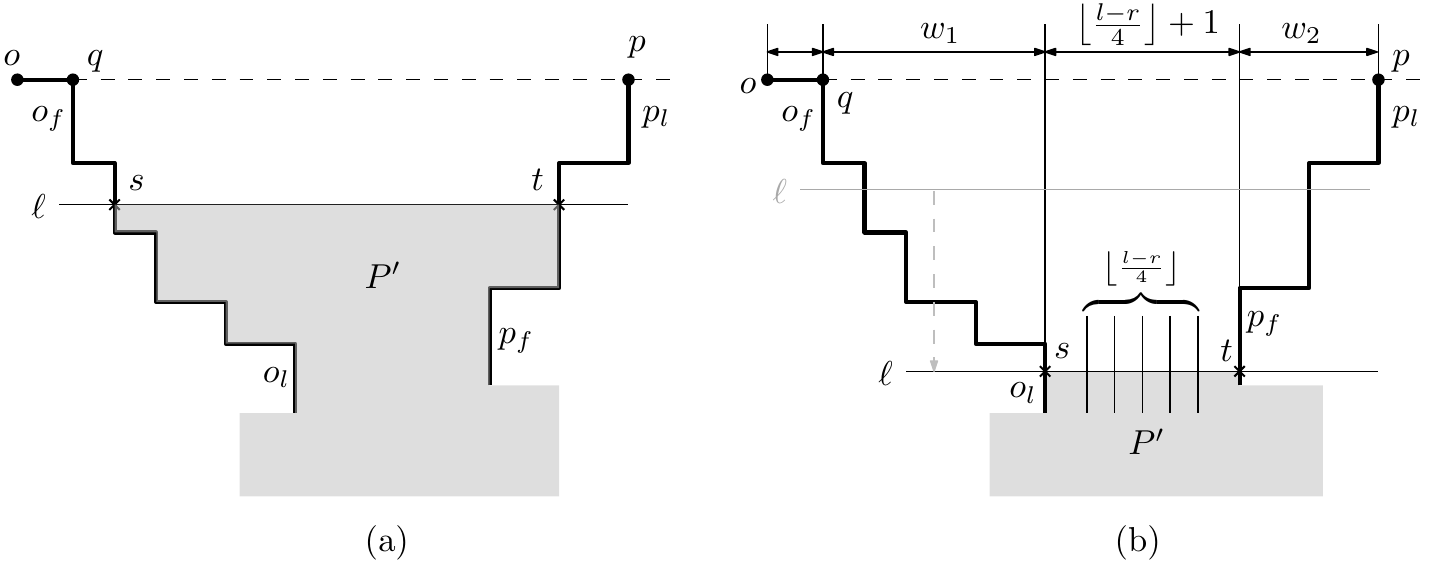}
\caption{
(a) New rectilinear polygon $P'$ bounded above by $\overline{st}$.
(b) Vertical segments between $s$ and $t$ when $\ell$ passes $o_l$ and $p_f$.
 }
\label{fig:tark7}
\end{figure}

\paragraph*{Case 1: $l-r \equiv 1 \pmod 4$}
\subparagraph*{Subcase 1.1: $\sigma_1=\rt$.}
It is obvious that $_{+x}^{}m_{-y}^{o} > 0$ and $_{+x}^{}m_{+y}^{p} > 0$.
For the maximal SE-staircase $C_o$ that starts from $o$,
let $o_f$ and $o_l$ denote the leftmost and rightmost vertical segments of $C$, respectively.
Similarly, for the maximal NE-staircase $C_p$ that ends at $p$,
let $p_f$ and $p_l$ denote the leftmost and rightmost vertical segments of $C$, respectively. 
See Figure~\ref{fig:tark7}(a).

We consider a horizontal line $\ell$ that intersects the interiors of two vertical
segments of $C_o$ and $C_p$ each at points $s\in C_o$ and $t\in C_p$.
Consider a rectilinear polygon $P'$ such that its boundary consists of 
the subchain $C[s,t]$ from $s$ to $t$ and a horizontal segment $\overline{ts}$.
The subchains $C[q,s]$ and $C[t,p]$ are staircases, and 
they have the same number of left and right turns.
This means that $R(P') = R(P)$, where $P$ is a rectilinear polygon
defined by connecting $\overline{pq}$ as in Section~\ref{subsubsec:verticalnumber}.
By Lemma~\ref{lem:rot and self}, there are at least 
$R(P')-1 = R(P)-1 = \left\lfloor\frac{l-r}{4}\right\rfloor$ vertical
segments that intersect $\overline{ts}$, so the length of $\overline{ts}$
is at least $\left\lfloor\frac{l-r}{4}\right\rfloor+1$. 
The distance between $o$ and $p$ is expressed as the sum
of four terms, as shown in Figure~\ref{fig:tark7}(b); 
(1) the length of $\overline{oq}$, which is at least one,
(2) the width $w_1$ of $C[q,s]$,
(3) the length of $\overline{st}$, and (4) the width $w_2$ of $C[t,p]$.
Their sum is at least $1 + w_1 + (\left\lfloor\frac{l-r}{4}\right\rfloor+1) + w_2$.

The value of $w_1+w_2$ gets bigger as $\ell$ moves lower 
because $w_1$ and $w_2$ are the width of the SE- and NE-staircases.
Note here that $\overline{st}$ always intersects at least 
$\left\lfloor\frac{l-r}{4}\right\rfloor$ vertical segments 
regardless of the position of $\ell$.
We also know that $\ell$ can intersect at least one of $o_l$ and $p_f$.
If $\ell$ intersects both $o_l$ and $p_f$, then $w_1+1 =\, _{+x}^{}m_{-y}^{o}$ 
and $w_2 =\, _{+x}^{}m_{+y}^{p}-1$, which is the maximum.
If $\ell$ intersects only $o_l$, then 
$w_1+w_2+1 \geq\, _{+x}^{}m_{-y}^{o}$.
Otherwise, if $\ell$ intersects only $p_f$, then 
$w_1+w_2+1 \geq\, _{+x}^{}m_{+y}^{p}$.
Thus we have that $w_1+w_2+1\geq \min\{_{+x}^{}m_{-y}^{o},\, _{+x}^{}m_{+y}^{p}\}$,
so 
$x^+_{l,r}\geq \min\{_{+x}^{}m_{-y}^{o},\,_{+x}^{}m_{+y}^{p}\}
+ \left\lfloor\frac{l-r}{4}\right\rfloor+1 = \min\{_{+x}^{}m_{-y}^{o},\,_{+x}^{}m_{+y}^{p}\}
+ \left\lfloor\frac{l-r+2}{4}\right\rfloor+1$
because $\left\lfloor\frac{l-r}{4}\right\rfloor = \left\lfloor\frac{l-r+2}{4}\right\rfloor$
for $l-r\equiv 1 \pmod 4$.

\subparagraph*{Subcase 1.2: $\sigma_1=\lt$.}
Unlike Case 1.1, the maximal staircase $C_o$ (containing $o$) and
the maximal staircase $C_p$ (containing $p$) are on the opposite sides of the $x$-axis,
so there is no horizontal line $\ell$ that intersects $C_o$ and $C_p$ at the same time.
This implies that the minimum number of vertical segments that cross $\overline{op}$,
which is $\left|\left\lfloor\frac{l-r}{4}\right\rfloor-1\right|+1 = \left|\left\lfloor\frac{l-r+2}{4}\right\rfloor-1\right|+1$,
is only a term of the lower bound. \\

For $l-r \equiv 1 \pmod 4$, we have the following bound.

\begin{lemma}
\label{lem:lowerbound_case1}
For any turn sequence $\sigma_{l,r}$ with $l-r \equiv 1 \pmod 4$, 
if $\sigma_1 = \rt$, then
$$x^+_{l,r}\geq \min\{_{+x}^{}m_{-y}^{o},\,_{+x}^{}m_{+y}^{p}\}
+ \left\lfloor\frac{l-r+2}{4}\right\rfloor+1,$$
if $\sigma_1=\lt$, then
$$x^+_{l,r} \geq\, \left|\left\lfloor\frac{l-r+2}{4}\right\rfloor-1\right|+2.$$
\end{lemma}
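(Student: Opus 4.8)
The plan is to split on the value of $\sigma_1$ and, in each subcase, combine the lower bound on the number of vertical segments crossing $\overline{op}$ from Theorem~\ref{thm:vertical-seg-crossing} with a separate accounting of the horizontal extent forced by the maximal staircases incident to $o$ and to $p$. The two subcases diverge precisely because of whether these two staircases can be severed by one common horizontal line, so I would treat them separately and only at the end merge the crossing term $\lfloor \frac{l-r+2}{4}\rfloor$ (which equals $\lfloor \frac{l-r}{4}\rfloor$ here, since $l-r\equiv 1\pmod 4$) with the staircase term.

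First I would dispatch the easier subcase $\sigma_1=\lt$. Here the maximal staircase $C_o$ at $o$ and the maximal staircase $C_p$ at $p$ lie on opposite sides of the $x$-axis, so no horizontal line meets both and their widths cannot be forced to accumulate along $\overline{op}$; the only available estimate is the crossing count. By Theorem~\ref{thm:vertical-seg-crossing}, at least $\left|\lfloor \frac{l-r}{4}\rfloor-1\right|+1$ vertical segments of $C$ intersect $\overline{op}$, and since $C$ is simple these intersections occur at distinct integer $x$-coordinates between $0$ and the abscissa of $p$, so the distance exceeds the number of intersections by at least one. Using $\lfloor \frac{l-r}{4}\rfloor=\lfloor \frac{l-r+2}{4}\rfloor$, this gives $x^+_{l,r}\geq \left|\lfloor \frac{l-r+2}{4}\rfloor-1\right|+2$.

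The harder subcase is $\sigma_1=\rt$, where I expect the real work to lie. Now the SE-staircase $C_o$ and the NE-staircase $C_p$ sit on the same side and can be cut simultaneously by a horizontal line $\ell$, which I would choose to meet a vertical segment of each at points $s\in C_o$ and $t\in C_p$, and then form the rectilinear polygon $P'$ bounded by the subchain $C[s,t]$ and the segment $\overline{ts}$. The key observation is that the staircase portions $C[q,s]$ and $C[t,p]$ contribute equal numbers of left and right turns, hence zero net turning, so $R(P')=R(P)=\lfloor \frac{l-r}{4}\rfloor+1$ as computed for the $\sigma_1=\rt$ case in Section~\ref{subsubsec:verticalnumber}. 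Lemma~\ref{lem:rot and self} then forces at least $R(P')-1=\lfloor \frac{l-r}{4}\rfloor$ vertical segments across $\overline{ts}$, so $|\overline{st}|\geq \lfloor \frac{l-r}{4}\rfloor+1$.

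Finally I would write $|\overline{op}|$ as the sum of four nonnegative contributions — $|\overline{oq}|\geq 1$, the width $w_1$ of $C[q,s]$, the length $|\overline{st}|$, and the width $w_2$ of $C[t,p]$ — and optimize over the height of $\ell$. The delicate step, and the one I expect to be the technical crux, is showing $w_1+w_2+1\geq \min\{{}_{+x}^{}m_{-y}^{o},\,{}_{+x}^{}m_{+y}^{p}\}$: lowering $\ell$ enlarges both staircase widths, and a case analysis on whether $\ell$ can be pushed past the rightmost segment $o_l$ of $C_o$ and/or the leftmost segment $p_f$ of $C_p$ shows that at least one of the two full staircase widths is always captured, while $\overline{st}$ keeps crossing $\lfloor \frac{l-r}{4}\rfloor$ vertical segments regardless of the position of $\ell$. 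Assembling the pieces yields $x^+_{l,r}\geq \min\{{}_{+x}^{}m_{-y}^{o},\,{}_{+x}^{}m_{+y}^{p}\}+\lfloor \frac{l-r+2}{4}\rfloor+1$. The rotation-number computation is routine once $P'$ is set up correctly; it is this width-accounting argument, ruling out the two staircases simultaneously evading the cut, that requires the most care.
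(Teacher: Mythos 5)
Your proposal follows the paper's own proof essentially step for step: the same split on $\sigma_1$, the same opposite-sides observation reducing the $\sigma_1=\lt$ case to the crossing count of Theorem~\ref{thm:vertical-seg-crossing}, and for $\sigma_1=\rt$ the same construction of the polygon $P'$ from $C[s,t]$ and $\overline{ts}$, the same identity $R(P')=R(P)$ via the zero net turning of the staircases $C[q,s]$ and $C[t,p]$, Lemma~\ref{lem:rot and self} to force $\lfloor\frac{l-r}{4}\rfloor$ crossings of $\overline{st}$, and the identical four-term decomposition with the case analysis on $\ell$ meeting $o_l$ and/or $p_f$ to obtain $w_1+w_2+1\geq\min\{_{+x}^{}m_{-y}^{o},\,_{+x}^{}m_{+y}^{p}\}$. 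The argument is correct and matches the paper's proof, including the arithmetic $\lfloor\frac{l-r}{4}\rfloor=\lfloor\frac{l-r+2}{4}\rfloor$ for $l-r\equiv 1\pmod 4$.
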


\paragraph*{Case 2: $l-r \equiv 3 \pmod 4$}
If $\sigma_1=\lt$, then the maximal staircases $C_o$ and $C_p$ are above the $x$-axis.
Moreover, two staircases are monotone to the $+x$-axis.
We can apply the same method used in Case 1 to bound $x^+_{l,r}$.
If $\sigma_1 = \rt$, then
two staircases are on the opposite sides of the $x$-axis, 
so only the minimum number of vertical segments that cross $\overline{op}$
determines the lower bound.

\begin{lemma}
\label{lem:lowerbound_case2}
For any turn sequence $\sigma_{l,r}$ with $l-r \equiv 3 \pmod 4$, if $\sigma_1 = \lt$, then
$$x^+_{l,r}\geq\min\{_{+x}^{}m_{+y}^{o},\, _{+x}^{}m_{-y}^{p}\} + 
\left|\left\lfloor\frac{l-r}{4}\right\rfloor-1\right|+1$$
$$= \min\{_{+x}^{}m_{+y}^{o},\,_{+x}^{}m_{-y}^{p}\} + 
\left|\left\lfloor\frac{l-r+2}{4}\right\rfloor-2\right|+1,$$
if $\sigma_1=\rt$, then
$$x^+_{l,r}\geq \left\lfloor\frac{l-r}{4}\right\rfloor+2
=\left\lfloor\frac{l-r+2}{4}\right\rfloor+1.$$
\end{lemma}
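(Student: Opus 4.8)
The plan is to split on $\sigma_1 = \lt$ versus $\sigma_1 = \rt$ and to recycle the rotation-number machinery of Case~1, the decisive difference being on which side of the $x$-axis the two maximal staircases $C_o$ (containing $o$) and $C_p$ (containing $p$) lie. First I would read off the relevant segment directions from the prefix numbers. The last segment of $C$ has prefix number $\delta_{l+r}=l-r\equiv 3\pmod 4$, hence is a $-y$-segment entering $p$ from above, so $C_p$ lies above the axis; near $o$ the first segment is a $+x$-segment and the second is a $+y$-segment if $\sigma_1=\lt$ and a $-y$-segment if $\sigma_1=\rt$. Thus for $\sigma_1=\lt$ both $C_o$ (an NE-staircase, measured by $_{+x}^{}m_{+y}^{o}$) and $C_p$ (an SE-staircase, measured by $_{+x}^{}m_{-y}^{p}$) sit above the axis and are $+x$-monotone, while for $\sigma_1=\rt$ the staircase $C_o$ (now an SE-staircase) drops below the axis and $C_p$ stays above, so the two lie on opposite sides.

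For $\sigma_1=\lt$ I would reproduce the argument of Subcase~1.1 essentially verbatim. Since both staircases are above the axis, a horizontal line $\ell$ drawn above the axis meets a vertical segment of each at points $s\in C_o$ and $t\in C_p$; the polygon $P'$ bounded by $C[s,t]$ and $\overline{ts}$ has $R(P')=R(P)=\lfloor (l-r)/4\rfloor$ by Section~\ref{subsubsec:verticalnumber}, because the staircase pieces $C[q,s]$ and $C[t,p]$ carry equally many left and right turns and contribute nothing to the rotation. Lemma~\ref{lem:rot and self} then forces at least $\big|\lfloor (l-r)/4\rfloor-1\big|$ vertical segments across $\overline{ts}$, whence $|\overline{st}|\geq \big|\lfloor (l-r)/4\rfloor-1\big|+1$. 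Decomposing $|\overline{op}|$ into the four parts $\overline{oq}$, the staircase widths $w_1,w_2$, and $|\overline{st}|$, and sliding $\ell$ away from the axis to maximise $w_1+w_2$ as in Subcase~1.1, gives $w_1+w_2+1\geq \min\{_{+x}^{}m_{+y}^{o},\,_{+x}^{}m_{-y}^{p}\}$ and hence the claimed bound.

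For $\sigma_1=\rt$ the obstruction is exactly that $C_o$ and $C_p$ lie on opposite sides of the axis, so no horizontal line can meet both and the two staircase widths cannot be charged at once; only the crossing count survives. Here $R(P)=\lfloor (l-r)/4\rfloor+1$, so Theorem~\ref{thm:vertical-seg-crossing} yields at least $\lfloor (l-r)/4\rfloor+1$ vertical segments across $\overline{op}$ and therefore $|\overline{op}|\geq \lfloor (l-r)/4\rfloor+2$. I would then discharge the two alternate displayed forms using the identities $\lfloor (l-r)/4\rfloor=\lfloor (l-r+2)/4\rfloor-1$ and $\big|\lfloor (l-r)/4\rfloor-1\big|=\big|\lfloor (l-r+2)/4\rfloor-2\big|$, both valid for $l-r\equiv 3\pmod 4$.

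The step I expect to be hardest is the bookkeeping that lets $\ell$ realise the minimum of the two staircase measures in the $\sigma_1=\lt$ case, i.e.\ arguing that $\ell$ can always traverse one of $C_o,C_p$ fully while charging the other up to a $\pm 1$ correction; coupled with this is the sign care when $\lfloor (l-r)/4\rfloor-1$ is negative (the case $l-r=3$), where it is precisely the absolute value in Lemma~\ref{lem:rot and self} that keeps the crossing bound valid.
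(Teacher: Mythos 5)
Your proposal is correct and takes essentially the same route as the paper, which proves the $\sigma_1=\lt$ case by ``applying the same method used in Case 1'' (the horizontal line $\ell$ meeting both above-axis staircases, the polygon $P'$ with $R(P')=R(P)=\lfloor(l-r)/4\rfloor$, Lemma~\ref{lem:rot and self}, and the four-term decomposition with the sliding argument giving $w_1+w_2+1\geq\min\{_{+x}^{}m_{+y}^{o},\,_{+x}^{}m_{-y}^{p}\}$), and the $\sigma_1=\rt$ case by observing that $C_o$ and $C_p$ lie on opposite sides of the $x$-axis so only the crossing count of Theorem~\ref{thm:vertical-seg-crossing} contributes. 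Your prefix-number derivations of the staircase orientations and the arithmetic identities for $l-r\equiv 3\pmod 4$ correctly fill in the details the paper leaves implicit.
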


\paragraph*{Case 3: $l-r \equiv 2 \pmod 4$}
For this case, 
the maximal staircase $C_p$ is a SW-staircase or NW-staircase, i.e.,
goes to the west (to the $-x$-axis) 
while the maximal staircase $C_o$ goes to the east.
From this, we know that 
the lower bound is determined only by the minimum number of vertical segments that 
cross $\overline{op}$. 
We can bound $x^+_{l,r}$ as follows.

\begin{lemma}
\label{lem:lowerbound_case3}
For any turn sequence $\sigma_{l,r}$ with $l-r \equiv 2 \pmod 4$, if $\sigma_1 = \lt$, then
$$x^+_{l,r}\geq \left|\left\lfloor\frac{l-r}{4}\right\rfloor-1\right|+2
= \left|\left\lfloor\frac{l-r+2}{4}\right\rfloor-2\right|+2,$$
if $\sigma_1=\rt$, then
$$x^+_{l,r}\geq \left\lfloor\frac{l-r}{4}\right\rfloor+2
=\left\lfloor\frac{l-r+2}{4}\right\rfloor+1.$$
\end{lemma}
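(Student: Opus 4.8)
The plan is to read both bounds off Theorem~\ref{thm:vertical-seg-crossing}, after checking that, unlike Cases~1 and~2, the maximal staircases contribute nothing to the $+x$-extent when $l-r\equiv 2\pmod 4$. Write $l-r=4k+2$ with $k\ge 0$, so that $\floor{(l-r)/4}=k$ and $\floor{(l-r+2)/4}=k+1$; these two identities are exactly what turns the first displayed form of each bound into the second. The last segment of any chain $C$ realizing $\sigma_{l,r}$ is a \fbox{$l-r$}-segment, and since $l-r\equiv 2\pmod 4$ it is a $-x$-segment. Hence $C$ arrives at its endpoint $p$ on the $+x$-axis travelling west, which forces the maximal staircase $C_p$ ending at $p$ to be $-x$-monotone (a SW- or NW-staircase heading west), while the maximal staircase $C_o$ containing $o$ begins with the eastward base segment and heads east.

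Because $C_o$ and $C_p$ extend in opposite $x$-directions, the device of Subcase~1.1 --- choosing a horizontal line $\ell$ that meets both staircases and charging the width between the two crossing points to $\overline{op}$ --- is unavailable here: no horizontal line can cut both an eastward and a westward staircase so as to force additional $+x$-extent. Consequently the staircase widths $_{+x}^{}m_{\pm y}^{o}$ and $_{+x}^{}m_{\pm y}^{p}$ cannot be forced to lengthen $\overline{op}$, and the only quantity that survives as a lower bound is the count of vertical segments that every realizing chain must push across $\overline{op}$.

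That count is supplied by Theorem~\ref{thm:vertical-seg-crossing}: any chain reaching $p$ on the $+x$-axis crosses $\overline{op}$ with at least $\bigl|\floor{(l-r)/4}-1\bigr|+1$ vertical segments if $\sigma_1=\lt$, and at least $\floor{(l-r)/4}+1$ if $\sigma_1=\rt$. Distinct crossing vertical segments meet $\overline{op}$ at distinct integer abscissae strictly between $0$ and the abscissa of $p$ (two disjoint vertical segments cannot both contain the point $(x,0)$), so $N$ crossings force $\overline{op}$ to have length at least $N+1$. This gives $|\overline{op}|\ge \bigl|\floor{(l-r)/4}-1\bigr|+2$ when $\sigma_1=\lt$ and $|\overline{op}|\ge \floor{(l-r)/4}+2$ when $\sigma_1=\rt$. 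Since these hold for every chain realizing every such sequence, they bound the closest reachable point $x^+_\sigma$ for each sequence, hence the maximum $x^+_{l,r}$; rewriting through $k=\floor{(l-r)/4}=\floor{(l-r+2)/4}-1$ yields the two stated forms.

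The main obstacle is the structural claim of the first two paragraphs, not the final arithmetic: one must argue convincingly that for $l-r\equiv 2\pmod 4$ the opposing orientations of $C_o$ and $C_p$ genuinely preclude the line-$\ell$ argument of Case~1, so that no adversarial routing of the chain can force the staircase widths into $\overline{op}$. Once that is granted, the vertical-crossing bound is inherited verbatim from Theorem~\ref{thm:vertical-seg-crossing} (itself built on the rotation-number and self-intersection machinery of Lemma~\ref{lem:rot and self}), and the passage from a crossing count $N$ to a distance $N+1$ is immediate.
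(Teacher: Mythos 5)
Your proof is correct and takes essentially the same route as the paper: the paper likewise notes that for $l-r\equiv 2 \pmod 4$ the final segment is a $-x$-segment, so the maximal staircase $C_p$ is a SW- or NW-staircase heading west while $C_o$ heads east, and then reads the bound directly off Theorem~\ref{thm:vertical-seg-crossing} with the same crossing-count-to-distance step ($N$ crossings at distinct integer abscissae force $|\overline{op}|\geq N+1$) and the same arithmetic for $l-r=4k+2$. One remark: the ``main obstacle'' you flag is not actually a proof obligation --- omitting a staircase term can only weaken a lower bound, never invalidate it, so the opposite-orientation observation merely explains why no such term appears here (in contrast to Cases~1 and~2), which is precisely how the paper treats it, in one sentence and without further proof.
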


This lower bound is exactly matched with the upper bound in Lemma~\ref{lem:bounding-x-plus}
for any $l-r\equiv 2 \pmod 4$ except when $l-r=2$.

\paragraph*{Case 4: $l-r \equiv 0 \pmod 4$}
The lower bound for this case can be easily derived from the one for
the turn sequence
of $l-r\equiv 1 \pmod 4$ or $l-r\equiv 3 \pmod 4$, obtained by 
deleting the last turn from the original sequence. 
Using Lemma~\ref{lem:lowerbound_case1} and Lemma~\ref{lem:lowerbound_case2}
for the cases,
we can bound $x^+_{l,r}$ as follows.

\begin{lemma}
\label{lem:lowerbound_case4}
For any turn sequence $\sigma_{l,r}$ with $l-r \equiv 0 \pmod 4$ and $l-r \geq 8$,
if $\sigma_1 = \lt$, then
$$x^+_{l,r}\geq \min\{_{+x}^{}m_{+y}^{o},\, _{+x}^{}m_{-y}^{p},2\}
+\left\lfloor\frac{l-r+2}{4}\right\rfloor,$$
if $\sigma_1=\rt$, then
$$x^+_{l,r}\geq \left\lfloor\frac{l-r+2}{4}\right\rfloor +2.$$
For any turn sequence $\sigma_{l,r}$ with $l-r = 0$, 
if $\sigma_1 = \lt$, then
$x^+_{l,r}\geq \min\{_{+x}^{}m_{+y}^{o},\, _{+x}^{}m_{-y}^{p},2\}+2,$
otherwise,
$x^+_{l,r}\geq \min\{_{+x}^{}m_{-y}^{o},\, _{+x}^{}m_{+y}^{p},2\}+2.$
For any turn sequence $\sigma_{l,r}$ with $l-r = 4$,
$x^+_{l,r}\geq 3.$
\end{lemma}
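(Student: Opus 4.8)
The plan is to exploit the reduction already set up in Section~\ref{subsubsec:verticalnumber}: when $l-r\equiv 0\pmod 4$, the final segment of any chain $C$ realizing $\sigma_{l,r}$ and reaching $p$ on the $+x$-axis is a $+x$-segment lying on the axis, so deleting it yields a chain $C^*$ realizing the shorter sequence $\sigma^*=\sigma_1\cdots\sigma_{n-1}$ whose endpoint $p^*$ is again on the $+x$-axis, with $|\overline{op}|\ge |\overline{op^*}|+1$. The excess of $\sigma^*$ is $l-r-1\equiv 3\pmod 4$ if $\sigma_n=\lt$, and $l-r+1\equiv 1\pmod 4$ if $\sigma_n=\rt$, so in either case I can invoke Lemma~\ref{lem:lowerbound_case1} or Lemma~\ref{lem:lowerbound_case2}, whichever matches the residue, and then add $1$ for the deleted unit segment.

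First I would treat $l-r\ge 8$. Fix $\sigma$ and split on $\sigma_n$. If $\sigma_n=\lt$ I apply Lemma~\ref{lem:lowerbound_case2} to $\sigma^*$ (same $\sigma_1$, excess $\equiv 3$); using $\floor{(l-r-1)/4}=\tfrac{l-r}{4}-1$ and, since $l-r\ge 8$ forces $\tfrac{l-r}{4}\ge 2$, the absolute value $\bigl|\tfrac{l-r}{4}-2\bigr|$ collapses to $\tfrac{l-r}{4}-2$, and after adding $1$ the bound becomes the target $\floor{(l-r+2)/4}$ plus the relevant staircase minimum. If $\sigma_n=\rt$ I apply Lemma~\ref{lem:lowerbound_case1} to $\sigma^*$ (excess $\equiv 1$), where $\floor{((l-r+1)+2)/4}=\tfrac{l-r}{4}$; adding $1$ again. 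Crucially I must check that the staircase counts transform correctly: because the deleted segment is a $+x$-segment, the maximal SE/NE-staircase at $p$ gains one horizontal edge but no vertical edge, so $_{+x}^{}m_{-y}^{p}$ (resp.\ $_{+x}^{}m_{+y}^{p}$) is unchanged, and the staircase at $o$ is untouched. I would then observe that the two branches are unified by the term $\min\{\,\cdot\,,2\}$: when $\sigma_n=\lt$ the genuine staircase minimum appears but is at least $\min\{\,\cdot\,,2\}$, while when $\sigma_n=\rt$ the reduction produces an extra additive $2$ that already dominates $\min\{\,\cdot\,,2\}$, so one formula is a valid lower bound for both last-turn possibilities.

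For $l-r=0$ the same deletion sends the excess to $\pm1$. The $\sigma_n=\rt$ branch ($\delta(\sigma^*)=1$) is handled directly by Lemma~\ref{lem:lowerbound_case1}; the $\sigma_n=\lt$ branch gives $\delta(\sigma^*)=-1$, which I would bring into range by reflecting across the $x$-axis (as in the $l-r=-1$ argument preceding Lemma~\ref{lem:bounding-x-plus2-1}): reflection fixes $o$, $p$, and $|\overline{op^*}|$, complements every turn, and swaps the $+y$- and $-y$-labels in the staircase superscripts, turning the instance into an excess-$(+1)$ instance to which Lemma~\ref{lem:lowerbound_case1} applies; translating the labels back recovers $\min\{_{+x}^{}m_{+y}^{o},\,_{+x}^{}m_{-y}^{p}\}$ when $\sigma_1=\lt$, and the mirror statement when $\sigma_1=\rt$. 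The case $l-r=4$ is the boundary where $\tfrac{l-r}{4}=1<2$, so the absolute values no longer simplify as for $l-r\ge 8$; here I would simply evaluate the four sub-branches ($\sigma_n\in\{\lt,\rt\}$, $\sigma_1\in\{\lt,\rt\}$) using Lemmas~\ref{lem:lowerbound_case1} and~\ref{lem:lowerbound_case2} on excess $3$ or $5$ and check that each yields exactly $3$ after adding the deleted segment, so the staircase contribution is absorbed into the constant.

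I expect the main obstacle to be the bookkeeping rather than any deep idea: correctly matching each $(\sigma_1,\sigma_n)$ combination to the right sub-case of Lemmas~\ref{lem:lowerbound_case1} and~\ref{lem:lowerbound_case2}, verifying the floor and absolute-value identities separately for $l-r\ge 8$ and the boundary $l-r=4$, and above all justifying that the single printed formula with $\min\{\,\cdot\,,2\}$ is simultaneously a valid lower bound under both last-turn possibilities. The reflection step for the stray $\delta(\sigma^*)=-1$ instance in the $l-r=0$ case, together with tracking how it permutes the staircase superscripts, is the one place where a genuine (if routine) new manipulation is required.
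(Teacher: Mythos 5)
Your proposal is correct and follows essentially the same route as the paper, whose proof of this lemma is exactly the sketched reduction you carry out: delete the last turn (a unit $+x$-segment lying on the $+x$-axis since $l-r\equiv 0 \pmod 4$), apply Lemma~\ref{lem:lowerbound_case1} or Lemma~\ref{lem:lowerbound_case2} to the resulting sequence of excess $l-r\pm 1$, and add one. You in fact supply details the paper leaves implicit --- notably the reflection across the $x$-axis handling the $\delta(\sigma^*)=-1$ subcase when $l-r=0$ (mirroring the paper's own trick before Lemma~\ref{lem:bounding-x-plus2-1}), the invariance of the vertical staircase counts under the deletion, and the verification that the single $\min\{\,\cdot\,,2\}$ formula dominates both last-turn branches --- and your case-by-case arithmetic checks out.
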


We conclude this section with the following theorem.

\begin{theorem}
\label{thm:lowerboundconclusion}
For any turn sequence $\sigma_{l,r}$ with $l-r \geq 0$,
the lower bounds on $x^+_{l,r}$ can be summarized in
Lemma~\ref{lem:lowerbound_case1}, Lemma~\ref{lem:lowerbound_case2}, Lemma~\ref{lem:lowerbound_case3}, and Lemma~\ref{lem:lowerbound_case4}
for $l-r\equiv 1, 3, 2, 0 \pmod 4$, respectively.
\end{theorem}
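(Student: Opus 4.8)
The plan is to recognize that Theorem~\ref{thm:lowerboundconclusion} is a \emph{collection} statement: it asserts nothing new beyond the four preceding lemmas and merely records that they jointly cover every admissible value of the excess number $l-r \geq 0$. Accordingly, the only work is to verify that the four hypotheses $l-r \equiv 1, 3, 2, 0 \pmod 4$ partition the nonnegative integers and that each class is matched to a lemma furnishing a valid lower bound on $x^+_{l,r}$. First I would observe that the residue classes modulo $4$ are, by definition, mutually exclusive and exhaustive, so every turn sequence $\sigma_{l,r}$ with $l-r \geq 0$ falls into exactly one of the four cases. Since the bound in each lemma is already established, invoking the appropriate lemma for each residue immediately yields the claimed bound: Lemma~\ref{lem:lowerbound_case1}, Lemma~\ref{lem:lowerbound_case2}, and Lemma~\ref{lem:lowerbound_case3} handle $l-r \equiv 1, 3, 2 \pmod 4$ respectively with a single statement each.

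The one point requiring care is the class $l-r \equiv 0 \pmod 4$, because Lemma~\ref{lem:lowerbound_case4} does not treat it uniformly: it gives a bound for $l-r \geq 8$, a separate bound for $l-r = 0$, and a third for $l-r = 4$. I would verify that these subcases leave no gap, i.e.\ that the nonnegative multiples of $4$ are exactly $\{0, 4, 8, 12, \dots\}$ and hence are fully covered by the three conditions ``$l-r=0$,'' ``$l-r=4$,'' and ``$l-r \geq 8$,'' which are disjoint and exhaustive on that set. With this check, the $\equiv 0$ case is complete and no further subdivision is needed anywhere.

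The main (and only modest) obstacle I anticipate is bookkeeping consistency rather than any genuine mathematical difficulty: the lemmas express their bounds in two interchangeable forms, some in terms of $\lfloor (l-r)/4 \rfloor$ and others in the normalized form $\lfloor (l-r+2)/4 \rfloor$ used in the final statement, and each lemma already records the identity relating the two for its residue (for instance $\lfloor (l-r)/4 \rfloor = \lfloor (l-r+2)/4 \rfloor$ when $l-r \equiv 1 \pmod 4$). I would simply confirm that these per-residue identities line up with the form quoted in the theorem, and that the standing reachability hypothesis from Section~\ref{subsec:lower_bound_x_plus} (namely that a right hook is present, so that $x^+_{l,r}$ is defined, cf.\ Lemma~\ref{lem:hook_pt_on_axis}) is inherited. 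With exhaustiveness of the residue partition and this notational consistency confirmed, the theorem follows by a direct appeal to the four lemmas.
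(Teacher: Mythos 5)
Your proposal is correct and matches the paper's treatment: the paper offers no separate proof of Theorem~\ref{thm:lowerboundconclusion}, stating it simply as the conclusion collecting Lemmas~\ref{lem:lowerbound_case1}--\ref{lem:lowerbound_case4}, whose per-residue case analyses constitute the actual argument. Your added checks---that the residues mod $4$ partition the nonnegative values of $l-r$, that the subcases $l-r=0$, $l-r=4$, $l-r\geq 8$ exhaust the class $l-r\equiv 0 \pmod 4$, and that the $\lfloor (l-r)/4\rfloor$ versus $\lfloor (l-r+2)/4\rfloor$ identities are consistent---are exactly the (routine) verifications the paper leaves implicit.
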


\remark{
The lower bounds for $l-r=5$ and for $l-r \equiv 2 \pmod 4$ with $l-r\geq 6$ 
are exactly matched with their upper bounds.}

\subsubsection{Lower bounds on \texorpdfstring{$x^-_{l,r}$}{x-}}
By the method we used in Section~\ref{subsubsec:verticalnumber},
we can also bound the minimum number of vertical segments of $C$ that 
cross $\overline{op}$, where $p$ is on the $-x$-axis. 
The parameters that determine the bound are the last turn $\sigma_n$ 
(not the first turn $\sigma_1$) 
and the excess number $l-r$ only (not including the lengths of the maximal staircases).
We here summarize the results without giving the detailed proofs.

\begin{theorem}
\label{thm:lowerboundconclusion2}
\begingroup
For any turn sequence $\sigma_{l,r}$ with $l-r \geq 0$, the lower bounds on $x^-_{l,r}$ are as follows:
\renewcommand{\arraystretch}{1.3} 
\begin{center}
\begin{tabular}{|l|l|ll|l|} 
\hline
         & $l-r \equiv 0$~or~$1 \pmod 4$                                                        & \multicolumn{2}{l|}{$l-r \equiv 2 \pmod 4$}                                                    & $l-r \equiv 3 \pmod 4$                                         \\ 
\hline
\multirow{2}{*}{$x^-_{l,r}\geq$} & \multirow{2}{*}{$\left| \left\lfloor \frac{l-r+2}{4}\right\rfloor - 1 \right| + 1$} & $\left| \left\lfloor \frac{l-r+2}{4}\right\rfloor - 2 \right| + 2$ & if~$\sigma_n=\texttt{L}$ & \multirow{2}{*}{$\left\lfloor \frac{l-r+2}{4}\right\rfloor$}  \\ 
\cline{3-4}                &                                                                                     & $\left\lfloor \frac{l-r+2}{4}\right\rfloor+1$  & if~$\sigma_n=\texttt{R}$                    &                                                               \\
\hline
\end{tabular}
\end{center}
\endgroup
\end{theorem}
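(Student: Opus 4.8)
The plan is to adapt the rotation-number machinery used for $x^+_{l,r}$ in Section~\ref{subsubsec:verticalnumber} to the $-x$-axis. Fix an arbitrary simple chain $C$ realizing $\sigma_{l,r}$ whose endpoint $p=(-a,0)$ lies on the $-x$-axis, with $a\ge 1$; by Lemma~\ref{lem:hook_pt_on_axis} such a $C$ exists exactly when $\sigma$ has a left hook. As in Section~\ref{subsubsec:verticalnumber} I would first lower-bound the number of vertical segments of $C$ crossing $\overline{op}$, and then convert this into a bound on $a=|op|$: since two distinct vertical segments crossing the $x$-axis at a common abscissa would meet, all crossings occur at distinct integer abscissae in the open interval $(-a,0)$, which contains only $a-1$ integers, so $a\ge(\#\text{crossings})+1$. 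To count crossings I would form an ordinary polygon $P$ by deleting the first (eastward) segment $\overline{oq}$, where $q=p_1$, and adding the closing edge $\overline{pq}$, cleaning up overlaps by the translation recipe preceding Section~\ref{subsubsec:verticalnumber}; then $I\ge\bigl|\,|R(P)|-1\,\bigr|$ vertical segments cross $\overline{pq}$ by Lemma~\ref{lem:rot and self}.

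The decisive difference from the $+x$ case lies in the directions in which the closing edge meets $q$ and $p$. Because $p$ is now west of $o$ while $q=(t,0)$ lies east of $o$, the edge $\overline{pq}$ runs eastward and enters $q$ heading east, the same direction as the deleted first segment, so the turn recreated at $q$ has the same handedness as $\sigma_1$ and cancels the effect of removing $\sigma_1$. Consequently the contribution of $\{\sigma_2,\dots,\sigma_n,q\}$ to $\#\lt-\#\rt$ is exactly $l-r$, independent of $\sigma_1$; this is precisely why, unlike the $+x$ bounds, no $\sigma_1$-term and no maximal-staircase term survive, since the mandatory eastward first segment can never belong to a $-x$-monotone staircase at $o$. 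The turn recreated at $p$ is governed by the direction of the last segment, which is a $+x$-, $+y$-, $-x$-, or $-y$-segment according to $l-r\equiv 0,1,2,3\pmod 4$. For $l-r\equiv 1$ (incoming north, outgoing east) this is a right turn and for $l-r\equiv 3$ (incoming south) a left turn, while for $l-r\equiv 0$ the incoming and outgoing directions are collinear and the vertex is flat; hence $R(P)=\floor{(l-r)/4}$ when $l-r\equiv 0,1$ and $R(P)=\floor{(l-r)/4}+1$ when $l-r\equiv 3$.

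It remains to localize the crossings and read off the constants. By simplicity the subchain $q\to\cdots\to p$ cannot meet the footprint $\overline{oq}\subset[0,t]\times\{0\}$ of the removed first segment, nor $o$ itself, so every self-intersection of $P$ has abscissa in $(-a,0)$, i.e.\ lies on $\overline{op}$; thus at least $\bigl|R(P)-1\bigr|$ vertical segments cross $\overline{op}$. Here, crucially, the second segment of $C$ sits at $x=t>0$, \emph{outside} $\overline{op}$, so — in contrast to the $+x$ analysis — it contributes no extra crossing, and we obtain $a\ge\bigl|R(P)-1\bigr|+1$. Substituting the two values of $R(P)$ and using $\floor{(l-r)/4}=\floor{(l-r+2)/4}$ for $l-r\equiv 1$ yields the claimed $\bigl|\floor{(l-r+2)/4}-1\bigr|+1$ for $l-r\equiv 0,1$ and $\floor{(l-r+2)/4}$ for $l-r\equiv 3$.

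Finally, the case $l-r\equiv 2\pmod 4$ is handled by reduction, which is the source of the $\sigma_n$-dependence. Here the last segment is a $-x$-segment lying along the $x$-axis, and simplicity (it may not overlap the eastward first segment, nor reach $o$) forces it to lie strictly between $p$ and $o$; deleting it yields a shorter simple chain $C^*$ whose endpoint $p^*$ is still on the $-x$-axis and is closer to $o$ by the segment's length $\ge 1$. Removing the final turn changes the excess to $l-r-1\equiv 1$ if $\sigma_n=\lt$ and to $l-r+1\equiv 3$ if $\sigma_n=\rt$, so applying the $\equiv 1$ or $\equiv 3$ bound to $C^*$ and adding one gives, after routine floor identities, $\bigl|\floor{(l-r+2)/4}-2\bigr|+2$ and $\floor{(l-r+2)/4}+1$ respectively. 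I expect the main obstacle to be the localization step together with the two degenerate configurations of the turn at $p$: the flat (collinear) vertex when $l-r\equiv 0$ and the reversed, overlapping last segment when $l-r\equiv 2$, both of which must be resolved — by a rotation-preserving perturbation in the first case and by the reduction above in the second — without disturbing the count of crossings on $\overline{op}$.
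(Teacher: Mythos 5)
Your proposal is correct and follows exactly the route the paper indicates for Theorem~\ref{thm:lowerboundconclusion2}: the paper itself gives no detailed proof (``we here summarize the results without giving the detailed proofs''), pointing only to the rotation-number method of Section~\ref{subsubsec:verticalnumber}, and your writeup is a faithful instantiation of that method, with the turn bookkeeping at $q$ (the recreated turn cancels $\sigma_1$) correctly explaining the paper's remark that only $\sigma_n$ and $l-r$ matter, the flat vertex handling $l-r\equiv 0\pmod 4$ directly, and the delete-last-segment reduction handling the whisker degeneracy when $l-r\equiv 2\pmod 4$. Your observation that the second segment of $C$ lies at abscissa $t>0$ outside $\overline{op}$, so no ``$+1$'' crossing is added as in the $+x$ case, is precisely what makes the constants come out as in the theorem's table, and all case arithmetic checks out.
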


\remark{
It is worthwhile to mention that unlike the bounds on the $+x$-axis,
the lower bound on the $-x$-axis for any $l-r \geq 3$ is
exactly matched with the upper bound. 
The bound is also tight for $l-r = 0$, but not for $l-r = 1, 2$.
}

\section{Concluding remarks}

In this paper,
we characterize combinatorial and geometric properties on
the reachable region by a turn sequence of left and right turns.
For this, we first present the sufficient and necessary conditions on
the reachability to the signed axes. 
We next obtain upper bounds on the maximum distance to the closest
reachable point from the origin on the signed axes 
by describing drawing algorithms of the turn sequence, and
prove the lower bounds by bounding the number of self-intersections of 
a (non-simple) rectilinear polygon induced by the turn sequence,
which are almost tight within some additive constant for some signed axes.
Interestingly, these bounds are expressed in terms of the difference of the number
of left and right turns and the length of 
the maximal monotone prefix or suffix of the sequence. 

We close this section with a list of open problems. 
First, the upper and lower bounds for some cases are not tight; 
for example, for the sequence with $l-r=2$ and the last turn of $\lt$,
we have that $3\leq x^-_{l,r} \leq\, _{-x}^{}m_{+y}^{p}+2$; 
the bounds are not tight within an additive constant.
It remains open to narrow the gaps between the bounds or
find the exact closest reachable point in polynomial time.
Second, we can consider an interesting variant of characterizing
the reachable region by a \emph{quad-turn sequence}, 
as a natural extension of the binary-turn sequence seen so far, 
which is a sequence consists of two different left turns $\lt_1$ and $\lt_2$, 
and two different right turns $\rt_1$ and $\rt_2$ 
as shown in Figure~\ref{fig:tark8}(a).
For a quad-turn sequence $\tau = \tau_1\tau_2\cdots\tau_n$ 
where $\tau_i \in \{\lt_1,\lt_2,\rt_1,\rt_2\}$,
a chain realizing this sequence is drawn in a \emph{triangular grid}
in Figure~\ref{fig:tark8}(a).
This grid can be deformed to a \emph{right triangular grid} 
like Figure~\ref{fig:tark8}(b).
While the binary-turn sequence has 
the rotation number $\left\lfloor \frac{l-r}{4}\right\rfloor$ as
a term of the distance bound, the quad-turn sequence has
a term of $\left\lfloor \frac{l_1+2l_2-r_1-2r_2}{6}\right\rfloor$,
where $l_1, l_2, r_1$, and $r_2$ denote 
the number of $\lt_1, \lt_2, \rt_1$, and $\rt_2$ in $\tau$, respectively.
We can also consider the \emph{hexagonal chain} drawing 
in the triangular grid for turn sequences
that only contain $\lt_1$ and $\rt_1$ turns.

\begin{figure}[t]
\centering
\includegraphics[width=10cm]{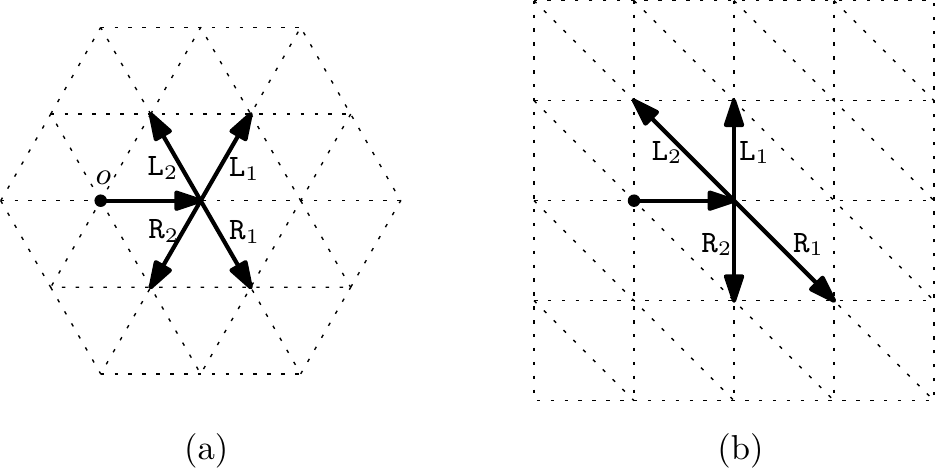}
\caption{
(a) Four possible turns on the triangular grid.
(b) Four possible turns on the right triangular grid.
 }
\label{fig:tark8}
\end{figure}

\bibliography{reachability}

\end{document}